\newcommand\numberthis{\addtocounter{equation}{1}\tag{\theequation}}
\newcommand{\E}{\mathbb{E}}
\newcommand{\Mod}{\ \mathrm{mod}\ }
\DeclareMathOperator*{\argmax}{arg\,max}
\newtheorem{Thm}{Theorem}
\newtheorem{Lem}[Thm]{Lemma}
\newtheorem{Cor}{Corollary}
\newtheorem{Def}{Definition}
\newtheorem{Exm}{Example}
\newtheorem{Rem}{Remark}
\newtheorem{Not}{Note}
\newtheorem{Ntn}{Notation}
\newtheorem{Ctn}{Construction}
\title{Computational Differential Privacy\\ from Lattice-based Cryptography
\thanks{This is the full version of \cite{112}. The research was supported by the DFG Research Training Group GRK $1817/1$}
}
\titlerunning{CDP from Lattice-based Crypto}
\author{Filipp Valovich\and Francesco Ald\`{a}}
\institute{Horst G\"{o}rtz Institute for IT Security\\ Faculty of Mathematics\\ Ruhr-Universit\"{a}t Bochum, Universit\"{a}tsstrasse 150, 44801 Bochum, Germany\\ \mails}
\date{}
\begin{document}

\maketitle
\pagestyle{plain}

\begin{abstract} \noindent The emerging technologies for large scale data analysis raise new challenges to the security and privacy of sensitive user data. In this work we investigate the problem of private statistical analysis of time-series data in the distributed and semi-honest setting. In particular, we study some properties of Private Stream Aggregation (PSA), first introduced by Shi et al. $2011$. This is a computationally secure protocol for the collection and aggregation of data in a distributed network and has a very small communication cost. In the non-adaptive query model, a secure PSA scheme can be built upon any key-homomorphic \textit{weak} pseudo-random function as shown by Valovich $2017$, yielding security guarantees in the \textit{standard model} which is in contrast to Shi et. al. We show that every mechanism which preserves $(\epsilon,\delta)$-differential privacy in effect preserves \textit{computational} $(\epsilon,\delta)$-differential privacy when it is executed through a secure PSA scheme. Furthermore, we introduce a novel perturbation mechanism based on the \textit{symmetric Skellam distribution} that is suited for preserving differential privacy in the distributed setting, and find that its performances in terms of privacy and accuracy are comparable to those of previous solutions. On the other hand, we leverage its specific properties to construct a computationally efficient prospective post-quantum protocol for differentially private time-series data analysis in the distributed model. The security of this protocol is based on the hardness of a new variant of the Decisional Learning with Errors (DLWE) problem. In this variant the errors are taken from the symmetric Skellam distribution. We show that this new variant is hard based on the hardness of the standard Learning with Errors (LWE) problem where the errors are taken from the discrete Gaussian distribution. Thus, we provide a variant of the LWE problem that is hard based on conjecturally hard lattice problems and uses a discrete error distribution that is similar to the continuous Gaussian distribution in that it is closed under convolution. A consequent feature of the constructed prospective post-quantum protocol is the use of the same noise for security and for differential privacy.
\end{abstract}

\section{Introduction}

Among several challenges that the society is facing in the era of big data, the problem of data processing under strong privacy and security guarantees receives a lot of attention in research communities. The framework of statistical disclosure control aims at providing strong privacy guarantees for the records stored in a database while enabling accurate statistical analyses to be performed. In recent years, \textit{differential privacy} has become one of the most important paradigms for privacy-preserving statistical analyses. According to K. Nissim, a pioneer in this area of research, "there is a great promise for the marriage of Big Data and Differential Privacy".\footnote{\url{http://bigdata.csail.mit.edu/Big_Data_Privacy}} It combines mathematically rigorous privacy guarantees with highly accurate analyses over larger data sets. Generally, the notion of differential privacy is considered in the centralised setting where we assume the existence of a \textit{trusted curator} (see Blum et al. \cite{16}, Dwork \cite{24}, Dwork et al. \cite{8}, McSherry and Talwar \cite{9}) who collects data in the clear, aggregates and perturbs it properly (e.g. by adding Laplace noise) and publishes it. In this way, the output statistics are not significantly influenced by the presence (resp. absence) of a particular record in the database.\\
In this work we study how to preserve differential privacy when we cannot rely on a trusted curator. In this so-called \textit{distributed setting}, the users have to send their own data to an untrusted aggregator. Preserving differential privacy and achieving high accuracy in the distributed setting is of course harder than in the centralised setting, since the users have to execute a perturbation mechanism on their own. In order to achieve the same accuracy as provided by well-known techniques in the centralised setting, Shi et al. \cite{2} introduce the \textit{Private Stream Aggregation} (PSA) scheme, a cryptographic protocol enabling each user to securely send encrypted time-series data to an aggregator. The aggregator is then able to decrypt the aggregate of all data in each time-step, but cannot retrieve any further information about the individual data. Using such a protocol, the task of perturbation can be split among the users, such that \textit{computational} differential privacy, a notion first introduced by Mironov et al. \cite{15}, is preserved \textit{and} high accuracy is guaranteed. For a survey of applications of this protocol, we refer to \cite{2}.\par\smallskip

\noindent\textbf{Related Work.} In \cite{2}, a PSA scheme for sum queries was provided that satisfies strong security guarantees under the Decisional Diffie-Hell\-man (DDH) assumption. However, this instantiation has some limitations. First, the security only holds in the random oracle model; second, its decryption algorithm requires the solution of the discrete logarithm in a given range, which can be very time-consuming if the number of users and the plaintext space are large. Third, a connection between the security of a PSA scheme and computational differential privacy is not explicitly shown. In a subsequent work by Chan et al. \cite{3}, this connection is still not completely established, since the polynomial-time reduction between an attacker against a secure PSA scheme and a database distinguisher is missing.\\
By lowering the requirements of Aggregator Obliviousness introduced in \cite{2} by abrogating the attacker's possibility to \textit{adaptively compromise} users during the execution of a PSA scheme with time-series data, Valovich \cite{111} shows that a PSA scheme achieving this lower security level can be built upon any \textit{key-homomorphic weak pseudo-random function}. Since weak pseudo-randomness can be achieved in the standard model, this condition also enables secure schemes in the standard model. Furthermore, an instantiation of this result based on the DDH assumption was given in \cite{111}, where decryption is always efficient. Joye and Libert \cite{38} provide a protocol with the same security guarantees in the random oracle model as in \cite{2}. The security of their scheme relies on the Decisional Composite Residuosity assumption (rather than DDH as in \cite{2}) and as a result, in the security reduction they can remove a factor which is cubic in the number of users. However, their scheme involves a semi-trusted party for setting some public parameters. In this work, we provide an instantiation of the \textit{generic} PSA construction from \cite{111} which relies on the Decisional Learning with Errors (DLWE) assumption. While in this generic security reduction a \textit{linear} factor in the number of users cannot be avoided, our construction \textit{does not} involve any trusted party and has security guarantees in the standard model. In a subsequent work \cite{38seq}, a generalisation of the scheme from \cite{38} is obtained based on smooth projective hash functions (see \cite{38add}). This generalisation allows the construction of secure protocols based on various hardness assumptions. However, the dependencies on a semi-trusted party (for most of the instantiations) and on a random oracle remain. 

\noindent\textbf{Contributions.} In this regard, our results are as follows. First, reduction-based security proofs for cryptographic schemes usually require an attacker in the corresponding security game to send two different plaintexts (or plaintext collections) to a challenger. The adversary receives then back a ciphertext which is the encryption of one of these collections and has to guess which one it is. In any security definition for a PSA scheme, these collections must satisfy a particular requirement, i.e. they must lead to the same aggregate, since the attacker has the capability to decrypt the aggregate (different aggregates would make the adversary's task trivial). In general, however, this requirement cannot be satisfied in the context of differential privacy. Introducing a novel kind of security reduction which deploys a \textit{biased coin} flip, we show that, whenever a randomised perturbation procedure is involved in a PSA scheme, the requirement of having collections with equal aggregate can be abolished. This result can be generalised to any cryptographic scheme with such a requirement. Using this property, we are able to show that if a mechanism preserves differential privacy, then it preserves computational differential privacy when it is used as a randomised perturbation procedure in a PSA scheme. This provides the missing step in the analysis from \cite{2}.\\

Second, we introduce the \textit{Skellam mechanism} that uses the symmetric Skellam distribution and compare it with the geometric mechanism by Ghosh et al. \cite{52} and the binomial mechanism by Dwork et al. \cite{14}. All three mechanisms preserve differential privacy and make use of discrete probability distributions. Therefore, they are well-suited for an execution through a PSA scheme. For generating the right amount of noise among all users, these mechanisms apply two different approaches. While in the geometric mechanism, with high probability, only one user generates the noise necessary for differential privacy, the binomial and Skellam mechanisms allow all users to generate noise of small variance, that sums up to the required value for privacy by the reproducibility property of the binomial and the Skellam distributions. We show that the theoretical error bound of the Skellam mechanism is comparable to the other two. At the same time, we provide experimental results showing that the geometric and Skellam mechanisms have a comparable accuracy in practice, while beating the one of the binomial mechanism. The advantage of the Skellam mechanism is that, based on the previously mentioned results, it can be used it to construct the first secure, prospective post-quantum PSA scheme for sum queries that automatically preserves computational differential privacy. The corresponding weak pseudo-random function for this protocol is constructed from the \textit{Learning with Errors} (LWE) problem that received a lot of attention in the cryptographic research community in recent years. As an instance of the LWE problem we are given a uniformly distributed matrix $\textbf{\upshape A}\in\mathbb{Z}_q^{\lambda\times\kappa}$ and a noisy codeword $\textbf{\upshape y}=\textbf{\upshape Ax}+\textbf{\upshape e}\in\mathbb{Z}_q^{\lambda}$ with an error term $\textbf{\upshape e}\in\mathbb{Z}_q^{\lambda}$ chosen from a proper error distribution $\chi^{\lambda}$ and an unknown $\textbf{\upshape x}\in\mathbb{Z}_q^{\kappa}$. The task is to find the correct vector $\textbf{\upshape x}$. In the decisional version of this problem (DLWE problem) we are given $(\textbf{\upshape A}, \textbf{\upshape y})$ and have to decide whether $\textbf{\upshape y}=\textbf{\upshape Ax}+\textbf{\upshape e}$ or $\textbf{\upshape y}$ is a uniformly distributed vector in $\mathbb{Z}_q^{\lambda}$. Regev \cite{42} provided a search-to-decision reduction to show that the two problems are essentially equivalent in the worst case and Micciancio and Mol \cite{43} provided a sample preserving search-to-decision reduction for certain cases showing the equivalence in the average case. Moreover, in \cite{42} the average-case-hardness of the search problem was established by the construction of an efficient quantum algorithm for worst-case lattice problems using an efficient solver of the LWE problem if the error distribution $\chi$ is a \textit{discrete Gaussian} distribution. Accordingly, most cryptographic applications of the LWE problem used a discrete Gaussian error distribution for their constructions. We will take advantage of the reproducibility of the Skellam distribution for our DLWE-based PSA scheme by using errors following the symmetric Skellam distribution rather than the discrete Gaussian distribution, which is not reproducible. The result is that the sum of the errors generated by every user to secure their data is also a Skellam variable and therefore sufficient for preserving differential privacy. Hence, we show the average-case-hardness of the LWE problem with errors drawn from the Skellam distribution. Our proof is inspired by techniques used by D\"{o}ttling and M\"{u}ller-Quade \cite{39} where a variant of the LWE problem with uniform errors on a small support is shown to be hard.\footnote{Although the uniform distribution is reproducible as well, the result from \cite{39} does not provide a proper error distribution for our DLWE-based PSA scheme, since a differentially private mechanism with uniform noise provides no accuracy to statistical data analyses.} 
Consequently, we obtain a lattice-based secure PSA scheme for analysing sum queries under differential privacy where the noise is used both for security and for preserving differential privacy at once.\par\smallskip

\noindent\textbf{Other Related Work.} Another series of works deals with a distributed generation of noise for preserving differential privacy. Dwork et al. \cite{14} consider the Gaussian distribution for splitting the task of noise generation among all users. Their proposed scheme requires more interactions between the users than our solution. \'{A}cs and Castelluccia \cite{36} apply privacy-preserving data aggregation to smart metering. The generation of Laplace noise is performed in a distributed manner, since each meter simply generates the difference of two Gamma distributed random variables as a share of a Laplace distributed random variable. In the work \cite{19} each user generates a share of Laplace noise by generating a vector of four Gaussian random variables. For a survey of the mechanisms given in \cite{19} and \cite{36}, we refer to \cite{21}. However, the aforementioned mechanisms generate noise drawn according to continuous distributions, but for the use in a PSA scheme discrete noise is required. Therefore, we consider proper discrete distributions and compare their performances in private statistical analyses.

\section{Preliminaries}\label{prelsec}

\begin{Ntn} For a natural number $n$, we denote by $[n]$ the interval $\{1,\ldots,n\}$.
\end{Ntn}

\begin{Ntn} Let $X$ be a set. If $X$ is finite, we denote by $\mathcal{U}(X)$ the uniform distribution on $X$. Let $\chi$ be a distribution on $X$. We denote by $x\leftarrow\chi$ (or sometimes $x\leftarrow\chi(X)$) the sampling of $x$ from $X$ according to $\chi$. If $X$ has only two elements, we write $x\leftarrow_R X$ instead of $x\leftarrow\mathcal{U}(X)$. If $A\leftarrow\chi^{a\times b}$ (or $A\leftarrow\chi(X^{a\times b})$) then $A$ is an $a\times b$-matrix constructed by picking every entry independently from $X$ according to the distribution $\chi$.
\end{Ntn}

\begin{Ntn} Let $\kappa$ be a security parameter. If $\omega=\omega(\kappa)<1/\text{\upshape\sffamily poly}(\kappa)$ for every polynomial $\text{\upshape\sffamily poly}$ and all $\kappa>\kappa^\prime$, for some $\kappa^\prime\in\mathbb{N}$, then we say that $\omega$ is negligible in $\kappa$ and denote it by $\omega=\text{\upshape\sffamily neg}(\kappa)$. If $\omega$ is non-negligible in $\kappa$ (i.e. if $\omega> 1/\text{\upshape\sffamily poly}(\kappa)$), then we write $\omega>\text{\upshape\sffamily neg}(\kappa)$.
\end{Ntn}

\begin{Ntn} Let $q>2$ be a prime. We handle elements from $\mathbb{Z}_q$ as their central residue-class representation. This means that $x^\prime\in\mathbb{Z}_q$ is identified with $x\equiv x^\prime \Mod q$ for $x\in\{-(q-1)/2,\ldots,(q-1)/2\}$ thereby lifting $x^\prime$ from $\mathbb{Z}_q$ to $\mathbb{Z}$.
\end{Ntn}

\subsection{Problem statement}\label{mechov}

In this work we consider a distributed and semi-honest setting where $n$ users are asked to participate in some statistical analyses but do not trust the data analyst (or aggregator), who is assumed to be honest but curious. Therefore, the users cannot provide their own data in the clear. Moreover, they communicate solely and independently with the untrusted aggregator, who wants to analyse the users data by means of time-series queries and aims at obtaining answers as accurate as possible. More specifically, assume that the data items belong to a data universe $\mathcal{D}$. For a sequence of time-steps $t\in T$, where $T$ is a discrete time period, the analyst sends queries which are answered by the users in a distributed manner. Each query is modelled as a function $f:\mathcal{D}^n\to \mathcal{O}$ for a finite or countably infinite set of possible outputs (i.e. answers to the query) $\mathcal{O}$.\\ 
We also assume that some users may act in order to compromise the privacy of the other participants. More precisely, we assume the existence of a publicly known constant $\gamma\in(0,1]$ which is the a priori estimate of the lower bound on the fraction 
of uncompromised users who honestly follow the protocol and want to release useful information about their data (with respect to a particular query $f$), while preserving $(\epsilon,\delta)$-differential privacy. The remaining $(1-\gamma)$-fraction of users is assumed to be compromised and following the protocol but aiming at violating the privacy of uncompromised users. For that purpose, these users form a coalition with the analyst and send her auxiliary information, e.g. their own data in the clear.\\ 
For computing the answers to the aggregator's queries, a special cryptographic protocol, called Private Stream Aggregation (PSA) scheme, is used by \textit{all} users. In contrast to common secure multi-party techniques (see \cite{25}, \cite{20}), this protocol requires each user to send only one message per query to the analyst. In connection with a differentially private mechanism, a PSA scheme assures that the analyst is only able to learn a noisy aggregate of users' data (as close as possible to the real answer) and nothing else. Specifically, for preserving $(\epsilon,\delta)$-differential privacy, it would be sufficient to add a single copy of (properly distributed) noise $Y$ to the aggregated statistics. Since we cannot add such noise once the aggregate has been computed, the users have to generate and add noise to their original data in such a way that the sum of the errors has the same distribution as $Y$. For this purpose, we see two different approaches. In the first one, with small probability a user adds noise sufficient to preserve the privacy of the entire statistics. This probability is calibrated in such a way only one of the $n$ users is actually expected to add noise at all. Shi et al. \cite{2} investigate this method using the geometric mechanism from. \cite{52}. In the second approach, each user generates noise of small variance, such that the sum of all noisy terms suffices to preserve differential privacy of the aggregate. To achieve this goal, we need a discrete probability distribution which is closed under convolution and is known to provide differential privacy. The binomial mechanism from \cite{14} and the Skellam mechanism introduced in this work serve these purposes.\footnote{Due to the use of a cryptographic protocol, the plaintexts have to be discrete. This is the reason why we use discrete distributions for generating noise.}\\
Since the protocol used for the data transmission is computationally secure, the entire mechanism preserves a computational version of differential privacy as it will be shown in Section \ref{psa}.

\subsection{Definitions}

\subsubsection{Differential Privacy.}

We consider a database as an element $D\in\mathcal{D}^n$ with data universe $\mathcal{D}$ and number of users $n$. Since $D$ may contain sensitive information, the users want to protect their privacy. Therefore, a privacy-preserving mechanism must be applied. We will always assume that a mechanism is applied in the distributed setting. Differential privacy is a well-established notion for privacy-preserving statistical analyses. We recall that a randomised mechanism preserves differential privacy if its application on two adjacent data\-bases (databases differing in one entry only) leads to close distributions of the outputs.

\begin{Def}[Differential Privacy \cite{8}]
Let $\mathcal{R}$ be a (possibly infinite) set and let $n\in\mathbb{N}$. A randomised mechanism $\mathcal{A}:\mathcal{D}^n\to\mathcal{R}$ preserves $(\epsilon,\delta)$-differential privacy (short: \mbox{\upshape\sffamily DP}), if for all adjacent databases $D_0, D_1\in\mathcal{D}^n$ and all measurable $R\subseteq\mathcal{R}$:
\[\Pr[\mathcal{A}(D_0)\in R]\leq e^\epsilon\cdot \Pr[\mathcal{A}(D_1)\in R]+\delta.\]
The probability space is defined over the randomness of $\mathcal{A}$.
\end{Def}

The additional parameter $\delta$ is necessary for mechanisms which cannot preserve $\epsilon$-\mbox{\upshape\sffamily DP} (i.e. $(\epsilon,0)$-\mbox{\upshape\sffamily DP}) for certain cases. However, if the probability that these cases occur is bounded by $\delta$, then the mechanism preserves $(\epsilon,\delta)$-\mbox{\upshape\sffamily DP}.

In the literature, there are well-established mechanisms for preserving differential privacy, e.g. the \textit{Laplace mechanism} from \cite{8} and the \textit{Exponential mechanism} from \cite{9}. In order to privately evaluate a query, these mechanisms draw error terms according to some distribution depending on the query's global sensitivity.

\begin{Def}[Global Sensitivity]
The global sensitivity $S(f)$ of a query $f:\mathcal{D}^n\to\mathbb{R}^k$ is defined as
\[S(f)=\max_{D_0,D_1 \mbox{ \scriptsize\upshape\, adjacent}}||f(D_0)-f(D_1)||_1.\]
\end{Def}

In particular, we will consider sum queries $f_{\mathcal{D}}:\mathcal{D}^n\to\mathbb{Z}$ defined as $f_{\mathcal{D}}(D):=\sum_{i=1}^n d_i$, for $D=(d_1,\ldots,d_n)\in\mathcal{D}^n$ and $\mathcal{D}\subseteq\mathbb{Z}$.\\
For measuring how well the output of a mechanism estimates the real data with respect to a particular query, we use the notion of $(\alpha,\beta)$-accuracy.

\begin{Def}[Accuracy]
The output of a mechanism $\mathcal{A}$ achieves $(\alpha,\beta)$-accuracy for a query $f:\mathcal{D}^n\to\mathbb{R}$ if for all $D\in\mathcal{D}^n$:
\[\Pr[|\mathcal{A}(D)-f(D)|\leq\alpha]\geq 1-\beta.\]
The probability space is defined over the randomness of $\mathcal{A}$.
\end{Def}

The use of a cryptographic protocol for transferring data 
provides a computational security level. If such a protocol is applied to preserve \mbox{\upshape\sffamily DP}, this implies that only a computational level of \mbox{\upshape\sffamily DP} can be provided. The definition of computational differential privacy was first provided in \cite{15} and subsequently extended in \cite{3}.

\begin{Def}[Computational Differential Privacy \cite{3}] Let $\kappa$ be a security parameter and $n\in\mathbb{N}$ with $n=\text{\upshape\sffamily poly}(\kappa)$. A randomised mechanism $\mathcal{A}:\mathcal{D}^n\to\mathcal{R}$ preserves computational $(\epsilon,\delta)$-differential privacy (short: \mbox{\upshape\sffamily CDP}), if for all adjacent databases $D_0, D_1\in\mathcal{D}^n$ and all probabilistic polynomial-time distinguishers $\mathcal{D}_{\mbox{\scriptsize\upshape\sffamily CDP}}$:
\[\Pr[\mathcal{D}_{\mbox{\scriptsize\upshape\sffamily CDP}}(1^\kappa,\mathcal{A}(D_0))=1]\leq e^\epsilon\cdot\Pr[\mathcal{D}_{\mbox{\scriptsize\upshape\sffamily CDP}}(1^\kappa,\mathcal{A}(D_1))=1]+\delta+\text{\upshape\sffamily neg}(\kappa),\]
where $\text{\upshape\sffamily neg}(\kappa)$ is a negligible function in $\kappa$. The probability space is defined over the randomness of $\mathcal{A}$ and $\mathcal{D}_{\mbox{\scriptsize\upshape\sffamily CDP}}$.
\end{Def}

The notion of \mbox{\upshape\sffamily CDP} is a natural computational indistinguishability-extension of the infor\-mation-theoretical definition. The advantage is that preserving differential privacy only against bounded attackers helps to substantially reduce the error of the answer provided by the mechanism. 

\subsubsection{Private Stream Aggregation.}
 
We define the Private Stream Aggregation scheme and give a security definition for it. Thereby, we mostly follow the concepts introduced in \cite{2}, though we deviate in a few points. A PSA scheme is a protocol for safe distributed time-series data transfer which enables the receiver (here: the untrusted analyst) to learn nothing else than the sums $\sum_{i=1}^n x_{i,j}$ for $j=1,2,\ldots$, where $x_{i,j}$ is the value of the $i$th participant in time-step $j$ and $n$ is the number of participants (or users). Such a scheme needs a key exchange protocol for all $n$ users together with the analyst as a precomputation (e.g. using multi-party techniques), and requires each user to send exactly one message in each time-step $j=1,2,\ldots$.

\begin{Def}[Private Stream Aggregation \cite{2}]
Let $\kappa$ be a security parameter, $\mathcal{D}$ a set and $n=\text{poly}(\kappa)$, $\lambda=\text{poly}(\kappa)$. A \text{\upshape Private Stream Aggregation} (PSA) scheme $\Sigma=(\mbox{\upshape\sffamily Setup}, \mbox{\upshape\sffamily PSAEnc}, \mbox{\upshape\sffamily PSADec})$ is defined by three ppt algorithms:
\begin{description}
\item \textbf{\mbox{\upshape \sffamily Setup}}: $(\mbox{\upshape\sffamily pp},T,s_0,s_1,\ldots,s_n)\leftarrow \mbox{\upshape\sffamily Setup}(1^\kappa)$ with public parameters $\mbox{\upshape\sffamily pp}$,\linebreak $T=\{t_1,\ldots,t_\lambda\}$ and secret keys $s_i$ for all $i=1,\ldots,n$.
\item \textbf{\mbox{\upshape \sffamily PSAEnc}}: For $t_j\in T$ and all $i=1,\ldots,n$: $c_{i,j}\leftarrow \mbox{\upshape\sffamily PSAEnc}_{s_i}(t_j,x_{i,j})\mbox{ for } x_{i,j}\in\mathcal{D}$.
\item \textbf{\mbox{\upshape \sffamily PSADec}}: Compute $\sum_{i=1}^n x'_{i,j}=\mbox{\upshape\sffamily PSADec}_{s_0}(t_j,c_{1,j},\ldots,c_{n,j})$ for $t_j\in T$ and ciphers $c_{1,j},\ldots,c_{n,j}$. For all $t_j\in T$ and $x_{1,j},\ldots,x_{n,j}\in\mathcal{D}$ the following holds:
\[\mbox{\upshape\sffamily PSADec}_{s_0}(t_j, \mbox{\upshape\sffamily PSAEnc}_{s_1}(t_j,x_{1,j}),\ldots,\mbox{\upshape\sffamily PSAEnc}_{s_n}(t_j,x_{n,j}))=\sum_{i=1}^n x_{i,j}.\]
\end{description}
\end{Def}

The Setup-phase has to be carried out just once and for all, and can be performed with a secure multi-party protocol among all users and the analyst. In all other phases, no communication between the users is needed.\\
The system parameters $\mbox{\upshape\sffamily pp}$ are public and constant for all time-steps with the implicit understanding that they are used in $\Sigma$. Every user encrypts her value $x_{i,j}$ with her own secret key $s_i$ and sends the ciphertext to the analyst. If the analyst receives the ciphertexts of \textit{all} users in a time-step $t_j$, it computes the aggregate with the decryption key $s_0$.\\
For a particular time-step, let the users' values be of the form $x_{i,j}=d_{i,j}+e_{i,j}$, $i=1,\ldots,n$, where $d_{i,j}\in\mathcal{D}$ is the original data of the user $i$ and $e_{i,j}$ is her error term.\footnote{The perturbation of data is considered in the context of \mbox{\upshape\sffamily DP}. Potentially, it yields larger values $x_{i,j}$ due to the (possibly) infinite domain of the underlying probability distribution. Depending on the variance, we therefore need to choose a sufficiently large interval $\widehat{\mathcal{D}}=\{-m,\ldots,m\}$ as plaintext space, where $m>\max\{|w|,|w^\prime|\}$ such that $|x_{i,j}|\leq m$ for all $i=1,\ldots,n$ with high probability.} It is reasonable to assume that $e_{i,j}=0$ for the $(1-\gamma)\cdot n$ compromised users, since this can only increase their chances to infer some information about the uncompromised users. There is no privacy-breach if only one user adds the entirely needed noise (first approach) or if the uncompromised users generate noise of low variance (second approach), since the single values $x_{i,j}$ are encrypted and the analyst cannot learn anything about them, except for their aggregate.\\
 
\noindent\textit{Security.} Since our model allows the analyst to compromise users, the aggregator can obtain auxiliary information about the data of the compromised users or their secret keys. Even then a secure PSA scheme should release no more information than the aggregate of the uncompromised users' data.\\
Informally, a PSA scheme $\Sigma$ is secure if every probabilistic polynomial-time algorithm, with knowledge of the analyst's and compromised users' keys and with adaptive encryption queries, has only negligible advantage in distinguishing between the encryptions of two databases $\widehat{D}_0, \widehat{D}_1$ of its choice with equal aggregates. We can assume that an adversary knows the secret keys of the entire compromised coalition. 
If the protocol is secure against such an attacker, then it is also secure against an attacker without the knowledge of every key from the coalition. Thus, in our security definition we consider the most powerful adversary.

\begin{Def}[Non-adaptive Aggregator Obliviousness \cite{111}]\label{securitygame}
Let $\kappa$ be a security parameter. Let $\mathcal{T}$ be a ppt adversary for a PSA scheme\linebreak $\Sigma=$ $(\mbox{\upshape\sffamily Setup}, \mbox{\upshape\sffamily PSAEnc}, \mbox{\upshape\sffamily PSADec})$ and let $\mathcal{D}$ be a set. We define a security game between a challenger and the adversary $\mathcal{T}$.
\begin{description}
 \item\textbf{Setup.} The challenger runs the \text{\upshape\sffamily Setup} algorithm on input security parameter $\kappa$ and returns public parameters $\mbox{\upshape\sffamily pp}$, public encryption parameters $T$ with $|T|=\lambda=\text{poly}(\kappa)$ and secret keys $s_0,s_1,\ldots,s_n$. It sends $\kappa,\mbox{\upshape\sffamily pp}, T, s_0$ to $\mathcal{T}$. $\mathcal{T}$ chooses $U\subseteq[n]$ and sends it to the challenger which returns $(s_i)_{i\in[n]\setminus U}$.
\item\textbf{Queries.} $\mathcal{T}$ is allowed to query $(i,t_j,x_{i,j})$ with $i\in U, t_j\in T, x_{i,j}\in\mathcal{D}$ and the challenger returns $c_{i,j}\leftarrow\mbox{\upshape\sffamily PSAEnc}_{s_i}(t_j,x_{i,j})$.
\item\textbf{Challenge.} $\mathcal{T}$ chooses $t_{j^*}\in T$ such that no encryption query with $t_{j^*}$ was made. (If there is no such $t_{j^*}$ then the challenger simply aborts.) $\mathcal{T}$ queries two different tuples $(x_{i,j^*}^{[0]})_{i\in U},(x_{i,j^*}^{[1]})_{i\in U}$ with
$\sum_{i\in U} x_{i,j^*}^{[0]}=\sum_{i\in U} x_{i,j^*}^{[1]}$.
The challenger flips a random bit $b\leftarrow_R\{0,1\}$. For all $i\in U$ the challenger returns $c_{i,j^*}\leftarrow\mbox{\upshape\sffamily PSAEnc}_{s_i}(t_{j^*},x_{i,j^*}^{[b]})$.
\item\textbf{Queries.} $\mathcal{T}$ is allowed to make the same type of queries as before restricted to encryption queries with $t_j\neq t_{j^*}$.
\item\textbf{Guess.} $\mathcal{T}$ outputs a guess about $b$.
\end{description}
The adversary's probability to win the game (i.e. to guess $b$ correctly) is $1/2+\nu(\kappa)$. A PSA scheme is \text{\upshape non-adaptively} \text{\upshape aggregator} \text{\upshape oblivious} or achieves \text{\upshape non-adaptive} \text{\upshape Aggregator}\linebreak \text{\upshape Obliviousness} ($\mbox{\upshape\sffamily AO}2$), if there is no ppt adversary $\mathcal{T}$ with advantage $\nu(\kappa)>\text{\upshape\sffamily neg}(\kappa)$ in winning the game.
\end{Def}

Encryption queries are made only for $i\in U$, since knowing the secret key for all $i\in[n]\setminus U$ the adversary can encrypt a value autonomously. If encryption queries in time-step $t_j^*$ were allowed, then no deterministic scheme would be aggregator oblivious. The adversary $\mathcal{T}$ can determine the original data of all $i\in[n]\setminus U$ for every time-step, since it knows $(s_i)_{i\in[n]\setminus U}$. Then $\mathcal{T}$ can compute the aggregate of the uncompromised users' data.\\ 
Definition \ref{securitygame} differs from the definition of Aggregator Obliviousness from \cite{2} in that we require the adversary to specify the set $U$ of uncompromised users \textit{before} making any query, i.e. we do not allow the adversary to determine $U$ adaptively. In light of that and in analogy to the definitions of security against adaptive ($\mbox{\upshape\sffamily CCA}1$) and non-adaptive ($\mbox{\upshape\sffamily CCA}2$) chosen ciphertext adversaries, we refer to the security definition from \cite{2} as $\mbox{\upshape\sffamily AO}1$ and to our security definition as $\mbox{\upshape\sffamily AO}2$. In this work we simply call a PSA scheme \textit{secure} if it achieves $\mbox{\upshape\sffamily AO}2$.

\subsubsection{Weak PRF.}

In the analysis of the secure protocol, we make use of the following definition.

\begin{Def}[Weak PRF \cite{26}] Let $\kappa$ be a security parameter. Let $A,B,C$ be sets with sizes parameterised by a complexity parameter $\kappa$. A family of functions 
\[\mathcal{F}=\{\text{\upshape\sffamily F}_a\,|\,\text{\upshape\sffamily F}_a: B\to C\}_{a\in A}\] 
is called a \text{\upshape weak PRF family}, if for all ppt algorithms $\mathcal{D}_{\mbox{\scriptsize PRF}}^{\mathcal{O}(\cdot)}$ with oracle access to $\mathcal{O}(\cdot)$ (where $\mathcal{O}(\cdot)\in\{\text{\upshape\sffamily F}_a(\cdot),\text{\upshape\sffamily rand}(\cdot)\}$) on any polynomial number of given uniformly chosen inputs, we have:
\[|\Pr[\mathcal{D}_{\mbox{\scriptsize PRF}}^{\text{\upshape\sffamily F}_a(\cdot)}(\kappa)=1]-\Pr[\mathcal{D}_{\mbox{\scriptsize PRF}}^{\text{\upshape\sffamily rand}(\cdot)}(\kappa)=1]|\leq\text{\upshape\sffamily neg}(\kappa),\]
where $a\leftarrow\mathcal{U}(A)$ and $\text{\upshape\sffamily rand}\in\{f\,|\,f:B\to C\}$ is a random mapping from $B$ to $C$.
\end{Def}

\section{Main Result}

In this work we prove the following result by showing the connection between a key-homomorphic weak pseudo-random function and a differentially private mechanism for sum queries.

\begin{Thm}\label{mainthm} Let $\epsilon>0$, $w<w^\prime\in\mathbb{Z}$, $m,n\in\mathbb{N}$ with $\max\{|w|,|w^\prime|\}<m$. Let\linebreak $\mathcal{D}=\{w,\ldots,w^\prime\}$ and $f_{\mathcal{D}}$ be a sum query. If there exist groups $G^\prime\subseteq G$, a key-homomorphic weak pseudo-random function family mapping into $G^\prime$ and an efficiently computable and efficiently invertible homomorphism $\varphi:\{-mn,\ldots,mn\}\to G$ injective over $\{-mn,\ldots,mn\}$, then there exists an efficient mechanism for $f_{\mathcal{D}}$ that preserves $(\epsilon,\delta)$-\mbox{\upshape\sffamily CDP} for any $0<\delta<1$ with an error bound of $O(S(f_{\mathcal{D}})/\epsilon\cdot\log(1/\delta))$ and requires each user to send exactly one message.
\end{Thm}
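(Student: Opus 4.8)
The plan is to instantiate the generic PSA construction of~\cite{111} with the given key-homomorphic weak pseudo-random function and the homomorphism $\varphi$, to have every uncompromised user perturb its input by a \emph{diluted two-sided geometric} noise term before encryption, and then to derive $(\epsilon,\delta)$-\mbox{\upshape\sffamily CDP} of the resulting mechanism by combining the information-theoretic $(\epsilon,\delta)$-\mbox{\upshape\sffamily DP} of that perturbation with the $\mbox{\upshape\sffamily AO}2$-security of the scheme by means of a reduction that employs a \emph{biased coin}. Write $S:=S(f_{\mathcal D})=w^\prime-w$, use $\widehat{\mathcal D}=\{-m,\dots,m\}$ as plaintext space (large enough, by the footnote in Section~\ref{mechov}, to contain every perturbed value $d_i+e_i$ with overwhelming probability), and let $U\subseteq[n]$ with $|U|\ge\gamma n$ denote the honest users.

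First I would describe the scheme $\Sigma=(\mbox{\upshape\sffamily Setup},\mbox{\upshape\sffamily PSAEnc},\mbox{\upshape\sffamily PSADec})$. In $\mbox{\upshape\sffamily Setup}$ (run once, through the assumed multi-party precomputation) one samples PRF keys $s_1,\dots,s_n$ uniformly, sets $s_0=\sum_{i=1}^{n}s_i$ so that $F_{s_0}=\prod_{i=1}^{n}F_{s_i}$ by key-homomorphy, and publishes the descriptions of the PRF family and of $\varphi$ together with a set $T$ of $\lambda$ \emph{uniformly chosen} evaluation points (it is precisely the \emph{weakness} of the PRF that makes a public random $T$ admissible). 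Encryption is $c_{i,j}=\varphi(x_{i,j})\cdot F_{s_i}(t_j)\in G$, and decryption computes $\varphi^{-1}\!\big(\prod_{i=1}^{n}c_{i,j}\cdot F_{s_0}(t_j)^{-1}\big)=\sum_{i=1}^{n}x_{i,j}$; correctness uses that $\varphi$ is a homomorphism injective on $\{-mn,\dots,mn\}$ and that $|\sum_i x_{i,j}|\le mn$, and efficiency uses that $\varphi,\varphi^{-1}$ and the PRF are efficiently computable. That $\Sigma$ is $\mbox{\upshape\sffamily AO}2$-secure is exactly the content of the generic reduction in~\cite{111}, which I would invoke directly.

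Next I would fix the mechanism $\mathcal A$: at the relevant time-step each user $i\in U$, independently, with probability $\beta=\min\{1,\lceil\ln(2/\delta)\rceil/(\gamma n)\}$ samples $e_i$ from the two-sided geometric distribution with parameter $e^{-\epsilon/S}$ and otherwise sets $e_i=0$; compromised users set $e_i=0$; user $i$ then sends the single ciphertext $c_i=\mbox{\upshape\sffamily PSAEnc}_{s_i}(t,d_i+e_i)$ (as the last clause of the theorem requires), and the analyst's output is $\mbox{\upshape\sffamily PSADec}_{s_0}(t,c_1,\dots,c_n)=f_{\mathcal D}(D)+\sum_{i\in U}e_i$. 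Since $|U|\ge\gamma n$, with probability at least $1-\delta/2$ at least one honest user contributes a full geometric term, and conditioned on that event the decrypted value is an $e^{-\epsilon/S}$-geometric variable convolved with an independent integer shift, so $\epsilon$-\mbox{\upshape\sffamily DP} is preserved; hence the decrypted output is $(\epsilon,\delta)$-\mbox{\upshape\sffamily DP}. Combining the single-variable tail $\Pr[|G|\ge t]\le 2e^{-\epsilon t/S}$ with a Chernoff bound on the number of noise-contributing users gives $\big(O(S\epsilon^{-1}\log(1/\delta)),\delta\big)$-accuracy. This is the standard analysis of the diluted geometric mechanism (Ghosh et al.~\cite{52}, as used in~\cite{2}), which I would cite rather than reprove.

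The hardest step is upgrading the information-theoretic $(\epsilon,\delta)$-\mbox{\upshape\sffamily DP} of the \emph{decrypted aggregate} to $(\epsilon,\delta)$-\mbox{\upshape\sffamily CDP} of the \emph{whole view} of the analyst (all ciphertexts, $s_0$, and the keys and cleartext data of the compromised coalition) against every ppt distinguisher $\mathcal D_{\mbox{\scriptsize\upshape\sffamily CDP}}$. The obstruction is the equal-aggregate constraint of Definition~\ref{securitygame}: for adjacent $D_0,D_1$ the mechanism encrypts databases whose aggregates genuinely differ, so $\mathcal A(D_0)$ and $\mathcal A(D_1)$ cannot be related by one application of $\mbox{\upshape\sffamily AO}2$. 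I would argue as follows, with $i^*$ the index where $D_0,D_1$ differ. If $i^*\notin U$, then $d_{i^*}$ is both part of the decrypted value and independently known to the coalition, so after subtracting it the analyst's view is identically distributed under $D_0$ and under $D_1$ and the inequality is immediate. If $i^*\in U$, fix any $i_0\in U\setminus\{i^*\}$ (the case $|U|=1$ being plain local \mbox{\upshape\sffamily DP}) and let $\widetilde D_1$ be $D_1$ with its $i_0$-th entry raised by $d_{i^*}^{(0)}-d_{i^*}^{(1)}$, so that $f_{\mathcal D}(\widetilde D_1)=f_{\mathcal D}(D_0)$. Coupling the noise vectors of the two runs to coincide, the two challenge collections $(d_i^{(0)}+e_i)_{i\in U}$ and $(\widetilde d_i^{(1)}+e_i)_{i\in U}$ have equal sums, so an $\mbox{\upshape\sffamily AO}2$-adversary that self-simulates the compromised part (identical under $D_0$ and $\widetilde D_1$) and relays a distinguisher between $\mathcal A(D_0)$ and $\mathcal A(\widetilde D_1)$ would win its game; hence these two views are computationally indistinguishable. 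It then remains to bound $\mathcal A(\widetilde D_1)$ against $\mathcal A(D_1)$, which differ only in the $i_0$-th entry by at most $S$; here the \emph{biased-coin} reduction enters: the $\mbox{\upshape\sffamily AO}2$-adversary first flips a coin whose bias encodes the pointwise density ratio of the two noise-shifted aggregates (and whose failure probability is $\delta$), on success it submits a pair of \emph{equal-aggregate} challenge collections obtained by rerouting the discrepancy $d_{i^*}^{(0)}-d_{i^*}^{(1)}$ into user $i_0$'s noise term, runs $\mathcal D_{\mbox{\scriptsize\upshape\sffamily CDP}}$ on the reconstructed view and echoes its bit; negligibility of this adversary's advantage then yields the required $(\epsilon,\delta)$-type inequality between $\mathcal D_{\mbox{\scriptsize\upshape\sffamily CDP}}$'s acceptance probabilities on $\mathcal A(\widetilde D_1)$ and on $\mathcal A(D_1)$, up to an additive $\mbox{\upshape\sffamily neg}(\kappa)$. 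Chaining the two relations gives $(\epsilon,\delta)$-\mbox{\upshape\sffamily CDP}. I expect the construction and analysis of this biased coin to be the only genuinely new ingredient and the main obstacle: it is what carries the \emph{multiplicative} $e^\epsilon$ slack of differential privacy across a security game whose challenger only certifies indistinguishability of \emph{equal-aggregate} plaintext collections, without paying any additional loss.
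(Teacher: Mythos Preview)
Your high-level architecture matches the paper's: instantiate the generic PSA of~\cite{111} with the assumed key-homomorphic weak PRF (Theorem~\ref{PSATHEOREM}), run a discrete $(\epsilon,\delta)$-\mbox{\upshape\sffamily DP} perturbation inside it, and then upgrade \mbox{\upshape\sffamily DP} of the aggregate to \mbox{\upshape\sffamily CDP} of the full transcript via a biased-coin reduction to the $\mbox{\upshape\sffamily AO}2$ game. Two points deserve comment, one a harmless divergence and one where your sketch is too loose to stand on its own.

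\emph{Choice of noise.} You instantiate the perturbation with the diluted geometric mechanism of~\cite{2,52}; the paper uses its Skellam mechanism (Theorems~\ref{privthm} and~\ref{errorthm}) instead. For Theorem~\ref{mainthm} as stated either choice works, since both deliver $(\epsilon,\delta)$-\mbox{\upshape\sffamily DP} with error $O(S(f_{\mathcal D})/\epsilon\cdot\log(1/\delta))$. The paper prefers Skellam only because of the later DLWE instantiation, where the same noise must simultaneously secure the scheme; that concern is irrelevant here.

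\emph{The \mbox{\upshape\sffamily CDP} reduction.} Your two-step detour through $\widetilde D_1$ does not buy anything: after the first (correct) step you are left comparing $\mathcal A(\widetilde D_1)$ with $\mathcal A(D_1)$, two adjacent databases with unequal aggregates---precisely the original obstacle. Your description of the second step (``rerouting the discrepancy into user $i_0$'s noise term'') is not a well-defined reduction as written: if you shift $i_0$'s noise by $\Delta=d_{i^*}^{(0)}-d_{i^*}^{(1)}$ to equalise the aggregates, the two challenge collections become identical and there is nothing for the $\mbox{\upshape\sffamily AO}2$ challenger to encrypt. The paper handles the whole comparison $D_0$ versus $D_1$ in one stroke (Theorem~\ref{cdptheorem}, via Lemmas~\ref{gamePoneone}--\ref{gamezeroPone}): first sample the aggregate $s$ according to its marginal under a fair coin, then compute the posterior $p=\Pr[b=0\mid S=s]$, which by $\epsilon$-\mbox{\upshape\sffamily DP} lies in $[1/(1+e^\epsilon),\,e^\epsilon/(1+e^\epsilon)]$ (Lemma~\ref{pminmax}), and finally sample \emph{two} full noise vectors---one for each $b$---each \emph{conditioned on the sum equalling $s$}. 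This conditional-sampling step (the equivalence of $\mbox{\upshape\sffamily Exp}_1$ and $\mbox{\upshape\sffamily Exp}_2$, Lemmas~\ref{EXPEQUIV}--\ref{exp2efficient}) is the piece missing from your sketch: it is what produces two genuinely distinct equal-aggregate collections for the $\mbox{\upshape\sffamily AO}2$ challenger while making the game, as seen by the inner distinguisher, statistically identical to the honest execution with a $p$-biased challenge bit. The multiplicative $e^\epsilon$ then falls out of the ratio $p_{\max}/p_{\min}$, not from any single user's noise shift. If your ``rerouting'' was meant to abbreviate this conditional-sampling construction, it needs to be spelled out; as written it reads as adjusting only $i_0$'s term, which does not yield the correct joint distribution.
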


The proof of Theorem \ref{mainthm} is provided in the next two sections. In Section \ref{psa} we recall from \cite{111} how to construct a general PSA scheme from a key-homomorphic weak PRF. Subsequently, we show that a secure PSA scheme in composition with a \mbox{\upshape\sffamily DP}-mechanism preserves \mbox{\upshape\sffamily CDP}. In Section \ref{lwepsasec}, based on the DLWE problem with errors drawn from a Skellam distribution, we provide an instantiation of a key-homomorphic weak PRF. This yields a concrete efficient PSA scheme that automatically embeds a \mbox{\upshape\sffamily DP}-mechanism with accuracy as stated in Theorem \ref{mainthm}.

\section{From Key-Homomorphic Weak PRF to \mbox{\upshape\sffamily CDP}}\label{psa}

We give a condition for the existence of secure PSA schemes and then analyse its connection to \mbox{\upshape\sffamily CDP}. 

\subsection{From Key-Homomorphic Weak PRF to secure PSA}

Now we state the condition for the existence of secure PSA schemes for sum queries in the sense of Definition \ref{securitygame}.

\begin{Thm}[Weak PRF gives secure PSA scheme \cite{111}]\label{PSATHEOREM}
Let $\kappa$ be a security parameter, and $m,n\in\mathbb{N}$ with $\log(m)=\text{\upshape\sffamily poly}(\kappa),n=\text{\upshape\sffamily poly}(\kappa)$. Let $(G,\cdot), (S,*)$ be finite abelian groups and $G^\prime\subseteq G$. For some finite set $M$, let $\mathcal{F}=\{\text{\upshape\sffamily F}_s\,|\,\text{\upshape\sffamily F}_s:M\to G^\prime\}_{s\in S}$ be a (possibly randomised) weak PRF family and let\linebreak $\varphi:\{-mn,\ldots,mn\}\to G$ be a mapping. Then the following PSA scheme $\Sigma=(\mbox{\upshape\sffamily Setup}, \mbox{\upshape\sffamily PSAEnc}, \mbox{\upshape\sffamily PSADec})$ achieves $\mbox{\upshape\sffamily AO}2$:
\begin{description}
\item \textbf{\mbox{\upshape \sffamily Setup}}: $(\mbox{\upshape\sffamily pp},T,s_0,s_1,\ldots,s_n)\leftarrow \mbox{\upshape\sffamily Setup}(1^\kappa)$, where $\mbox{\upshape\sffamily pp}$ are parameters of $G,G^\prime,S,$ $M,\mathcal{F},\varphi$. The keys are $s_i\leftarrow\mathcal{U}(S)$ for all $i\in[n]$ with $s_0=(\bigast_{i=1}^n s_i)^{-1}$ and $T\subset M$ such that all $t_j\in T$ are chosen uniformly at random from $M$, $j=1,\ldots,\lambda=\text{\upshape\sffamily poly}(\kappa)$.
\item \textbf{\mbox{\upshape \sffamily PSAEnc}}: Compute $c_{i,j}=\text{\upshape\sffamily F}_{s_i}(t_j)\cdot \varphi(x_{i,j})$ in $G$ for $x_{i,j}\in\widehat{\mathcal{D}}=\{-m,\ldots,m\}$ and public parameter $t_j\in T$.
\item \textbf{\mbox{\upshape \sffamily PSADec}}: Compute $V_j=\varphi^{-1}(S_j)$ (if possible) with $S_j=\text{\upshape\sffamily F}_{s_0}(t_j)\cdot c_{1,j}\cdot\ldots\cdot c_{n,j}$.
\end{description}
Moreover, if $\mathcal{F}$ contains only deterministic functions that are homomorphic over $S$, if $\varphi$ is homomorphic and injective over $\{-mn,\ldots,mn\}$ and if the $c_{i,j}$ are encryptions of the $x_{i,j}$, then $V_j=\sum_{i=1}^n x_{i,j}$, i.e. then \mbox{\upshape \sffamily PSADec} correctly decrypts $\sum_{i=1}^n x_{i,j}$.
\end{Thm}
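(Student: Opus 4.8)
\emph{Decryption correctness (the \`\`Moreover'' clause).} I would prove this first, by direct computation. Writing $c_{i,j}=\text{\upshape\sffamily F}_{s_i}(t_j)\cdot\varphi(x_{i,j})$ and using that $G$ is abelian, one gets $S_j=\text{\upshape\sffamily F}_{s_0}(t_j)\cdot\prod_{i=1}^{n}c_{i,j}=\bigl(\prod_{i=0}^{n}\text{\upshape\sffamily F}_{s_i}(t_j)\bigr)\cdot\prod_{i=1}^{n}\varphi(x_{i,j})$. Since the $\text{\upshape\sffamily F}_s$ are homomorphic over $S$, the first factor equals $\text{\upshape\sffamily F}_{s_0*s_1*\cdots*s_n}(t_j)=\text{\upshape\sffamily F}_{1_S}(t_j)$, which is idempotent in the group $G^\prime$ and hence is the identity $1_G$; since $\varphi$ is a homomorphism and each partial sum stays in $\{-mn,\ldots,mn\}$, the second factor equals $\varphi\bigl(\sum_{i=1}^{n}x_{i,j}\bigr)$. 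As $\sum_{i=1}^n x_{i,j}\in\{-mn,\ldots,mn\}$ and $\varphi$ is injective there, $V_j=\varphi^{-1}(S_j)=\sum_{i=1}^{n}x_{i,j}$.

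\emph{Security ($\mbox{\upshape\sffamily AO}2$).} The plan is a hybrid reduction to the weak-PRF property. Let $\mathcal{T}$ be a ppt adversary in the game of Definition \ref{securitygame} winning with probability $1/2+\nu(\kappa)$; the goal is $\nu(\kappa)=\text{\upshape\sffamily neg}(\kappa)$. After $\mathcal{T}$ has (non-adaptively) committed to $U$, set $k=|U|\le n$ and fix some $i_0\in U$. The key structural fact is that, conditioned on $\mathcal{T}$'s view after the \mbox{\upshape\sffamily Setup} phase — which reveals $s_0$ and $(s_i)_{i\in[n]\setminus U}$ and hence fixes $\sigma:=\bigast_{i\in U}s_i=(s_0*\bigast_{i\notin U}s_i)^{-1}$ — the keys $(s_i)_{i\in U\setminus\{i_0\}}$ are independent and uniform on $S$, while $s_{i_0}=\sigma*(\bigast_{i\in U\setminus\{i_0\}}s_i)^{-1}$ is a deterministic function of them. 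By key-homomorphism, $\text{\upshape\sffamily F}_{s_{i_0}}(t_j)=\text{\upshape\sffamily F}_{\sigma}(t_j)\cdot\prod_{i\in U\setminus\{i_0\}}\text{\upshape\sffamily F}_{s_i}(t_j)^{-1}$, with $\text{\upshape\sffamily F}_{\sigma}$ computable from what $\mathcal{T}$ (and a simulator) already holds, so every ciphertext handed to $\mathcal{T}$ can be expressed using only the $k-1$ functions $\{\text{\upshape\sffamily F}_{s_i}\}_{i\in U\setminus\{i_0\}}$ and the \`\`public'' $\text{\upshape\sffamily F}_{\sigma}$.

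Next I would run hybrids $H_0,\ldots,H_{k-1}$, where $H_0$ is the real game and $H_\ell$ replaces the first $\ell$ of the functions $\text{\upshape\sffamily F}_{s_i}$, $i\in U\setminus\{i_0\}$ (in a fixed order), by independent truly random functions $\text{\upshape\sffamily rand}_i: M\to G^\prime$. Consecutive hybrids are indistinguishable by the weak-PRF property: the $t_j$ are uniform in $M$ and every $\text{\upshape\sffamily F}$-evaluation in the game occurs at one of them, so from a distinguisher for $H_{\ell-1}$ versus $H_\ell$ one builds a weak-PRF distinguisher that takes the uniform inputs together with values that are either $\text{\upshape\sffamily F}_a(t_j)$ or $\text{\upshape\sffamily rand}(t_j)$, samples all the remaining keys, functions and the bit $b$ itself, simulates the game to $\mathcal{T}$ via the rewriting above, and outputs $1$ iff $\mathcal{T}$ guesses $b$; each step costs $\text{\upshape\sffamily neg}(\kappa)$, hence so do all $k-1\le n=\text{\upshape\sffamily poly}(\kappa)$ of them together. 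In $H_{k-1}$ I would conclude by showing the challenge response is independent of $b$: since no encryption query is allowed at $t_{j^*}$, the values $(\text{\upshape\sffamily rand}_i(t_{j^*}))_{i\in U\setminus\{i_0\}}$ are fresh, independent and uniform in $G^\prime$, so the $k-1$ ciphertexts $\text{\upshape\sffamily rand}_i(t_{j^*})\cdot\varphi(x_{i,j^*}^{[b]})$ are uniform in $G^\prime$ (the mask absorbs $\varphi(x_{i,j^*}^{[b]})\in G^\prime$), while $c_{i_0,j^*}=\bigl(\prod_{i\in U}c_{i,j^*}\bigr)\cdot\prod_{i\in U\setminus\{i_0\}}c_{i,j^*}^{-1}$ is a deterministic function of the former because $\prod_{i\in U}c_{i,j^*}=\text{\upshape\sffamily F}_{\sigma}(t_{j^*})\cdot\varphi\bigl(\sum_{i\in U}x_{i,j^*}^{[b]}\bigr)$ does not depend on $b$ — here exactly the requirement $\sum_{i\in U}x_{i,j^*}^{[0]}=\sum_{i\in U}x_{i,j^*}^{[1]}$ of Definition \ref{securitygame} is used. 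Hence $\mathcal{T}$ wins $H_{k-1}$ with probability exactly $1/2$, and $\nu(\kappa)=\text{\upshape\sffamily neg}(\kappa)$.

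The hardest part is the interplay between the product constraint on the uncompromised keys and the weak-PRF reduction: one key, $s_{i_0}$, is pinned down by $\bigast_{i\in U}s_i=\sigma$ instead of being free, so it is not directly a weak-PRF instance, and key-homomorphism is exactly what lets the simulator re-express its contribution through the remaining keys and the publicly computable $\text{\upshape\sffamily F}_{\sigma}$, and thus produce the $i_0$-th ciphertext without knowing $s_{i_0}$. The complementary delicate point is the last step — checking that the idealized challenge leaks nothing about $b$ — which is where the equal-aggregate requirement of Definition \ref{securitygame} and the homomorphism/injectivity of $\varphi$ (with image in $G^\prime$) come in, encoding that $\mathcal{T}$ can legitimately recover only the common aggregate and nothing finer. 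In the randomised-PRF case one additionally needs routine bookkeeping to reuse a single sample of each $\text{\upshape\sffamily F}_{s_i}(t_j)$ consistently across the encryption queries and the challenge; everything else is a standard hybrid/reduction calculation.
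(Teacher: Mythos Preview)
The paper does not itself prove this theorem; it is quoted from \cite{111} with only the statement reproduced, so there is no in-paper argument to compare against. Your overall plan --- a hybrid over the uncompromised users' PRF instances, using key-homomorphism to route the one dependent key $s_{i_0}$ through the publicly computable $\text{\upshape\sffamily F}_{\sigma}$, followed by an information-theoretic step showing the idealised challenge is independent of $b$ --- is the natural one and almost certainly matches \cite{111}; the correctness computation is also fine.

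One point worth flagging explicitly: your security argument invokes key-homomorphism of $\mathcal{F}$, the homomorphism of $\varphi$, and (as you note in your closing paragraph) that $\varphi$ has image in $G^\prime$. In the theorem \emph{as restated here}, the first two are hypotheses only of the ``Moreover'' (correctness) clause, and the third is not stated at all ($\varphi$ is declared to map into $G$, with $G^\prime\subseteq G$ merely a subset). Without them your final step genuinely fails: if $G^\prime\subsetneq G$ and $\varphi$ lands outside $G^\prime$, the set $G^\prime\cdot\varphi(x_{i,j^*}^{[b]})$ supporting a single challenge ciphertext can already depend on $b$; and without $\varphi$-homomorphism the equal-aggregate constraint does not force $\prod_{i\in U}\varphi(x_{i,j^*}^{[0]})=\prod_{i\in U}\varphi(x_{i,j^*}^{[1]})$, so the product $\prod_{i\in U}c_{i,j^*}$ need not be $b$-independent either. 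The surrounding text (the abstract, Theorem~\ref{mainthm}, and the sentence ``Hence, we need a key-homomorphic weak PRF and a mapping which homomorphically aggregates all users' data'' right after Theorem~\ref{PSATHEOREM}) makes clear that these hypotheses are intended throughout, so this is a quirk of the restatement rather than a defect in your approach; still, you should say explicitly that you are reading those hypotheses into the $\mbox{\upshape\sffamily AO}2$ claim as well, since the bare statement does not carry them.
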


The reason for not including the correctness property in the main statement is that in Section \ref{lwepsasec} we will provide an example of a secure PSA scheme based on the DLWE problem that does not have a fully correct decryption algorithm, but a noisy one. This noise is necessary for establishing the security of the protocol and will be also used for preserving the differential privacy of the decryption output.\\
Hence, we need a key-homomorphic weak PRF and a mapping which homomorphically aggregates all users' data. Since every data value is at most $m$, the scheme correctly retrieves the aggregate, which is at most $m\cdot n$. Importantly, the product of all pseudo-random values $\text{\upshape\sffamily F}_{s_0}(t),\text{\upshape\sffamily F}_{s_1}(t),\ldots,\text{\upshape\sffamily F}_{s_n}(t)$ is the neutral element in the group $G$ for all $t\in T$. Since the values in $T$ are uniformly distributed in $M$, it is enough to require that $\mathcal{F}$ is a \textit{weak} PRF family. Thus, the statement of Theorem \ref{PSATHEOREM} does not require a random oracle.

\subsection{From secure PSA to \mbox{\upshape\sffamily CDP}}\label{cdp}

In this section, we describe how to preserve \mbox{\upshape\sffamily CDP} using a PSA scheme. Specifically, let $\mathcal{A}$ be a mechanism which, given some event $Good$, evaluates a statistical query $f:\mathcal{D}^n\to\mathcal{O}$ over a database $D\in\mathcal{D}^n$ preserving $\epsilon$-\mbox{\upshape\sffamily DP}. Furthermore, let $\Sigma$ be a PSA scheme for $f$ that achieves $\mbox{\upshape\sffamily AO}1$ (or $\mbox{\upshape\sffamily AO}2$). We show that the composition of $\mathcal{A}$ and $\Sigma$ preserves $\epsilon$-\mbox{\upshape\sffamily CDP} given $Good$. Assume $\Pr[\neg Good]\leq\delta$. We show that $\mathcal{A}$ preserves $(\epsilon,\delta)$-\mbox{\upshape\sffamily CDP} \textit{unconditionally}, if executed through $\Sigma$. For simplicity, in this section we focus on sum-queries, but our analysis can be easily extended to more general statistical queries.\\
Since we are dealing with a computationally secure scheme, we need a polynomial time reduction from an adversary on the security of the PSA scheme to a \mbox{\upshape\sffamily CDP}-distinguisher for databases. As a result, our scheme preserves $(\epsilon,\delta)$-\mbox{\upshape\sffamily CDP}. Since in the original security-game, the generation of noise is not considered and the aggregates of the two challenge databases in the clear have to be equal with respect to the query $f_{\mathcal{D}}$ (rather than differ by $S(f_{\mathcal{D}})$ as in the definition of differential privacy), we need to perform a certain kind of randomisation process within the reduction. For this purpose, we introduce a novel kind of security game which we call a \textit{biased security game}. In contrast to a usual security game as it is common in security proofs for cryptosystems, in a biased security game the challenger is allowed to generate a biased coin in order to pick the data set to send as challenge to the adversary.

\subsubsection{Redefining the security of PSA}

Let us first modify the security game in Definition \ref{securitygame} in the following way. Let \textbf{game}~$\boldsymbol 1$ be the original game from Definition \ref{securitygame}. Let $p\in(0,1)$ and $P=\max\{p,1-p\}$. The \textbf{game}~$\boldsymbol 1_{\boldsymbol B}$\footnote{The subscript $\boldsymbol B$ stands for biased.} for a ppt adversary $\mathcal{T}_{1,B}$ is defined as \textbf{game}~$\boldsymbol 1$ with the following changes:
\begin{itemize}
 \item In the challenge phase, after choosing the two different challenge-collections, $\mathcal{T}_{1,B}$ computes $p\in(0,1)$ and sends it to the challenger.
 \item In the challenge phase, the challenger chooses $b=0$ with probability $p$ and $b=1$ with probability $1-p$.
\end{itemize}
We call a PSA scheme biased-secure, if the probability of every ppt adversary $\mathcal{T}_{1,B}$ in winning the above game is $P+\mbox{\upshape\sffamily neg}(\kappa)$. Note that \textbf{game}~$\boldsymbol 1$ is a special case of \textbf{game}~$\boldsymbol 1_{\boldsymbol B}$, where $P=1/2$. We refer to this case as the unbiased version (rather the biased version if $P>1/2$) of \textbf{game}~$\boldsymbol 1_{\boldsymbol B}$. In the unbiased case, we just drop the dependence on $P$ and the adversary is not required to send $p$ to its challenger.\\
The reason for introducing \textbf{game}~$\boldsymbol 1_{\boldsymbol B}$ is that in the context of differential privacy, it is very unlikely to have equal aggregates of the two plaintext collections in the challenge phase of the original game, since the data is perturbed by a differentially private mechanism. The use of a bias towards one of the collections will balance the incorporation of noise as we will show below. First, we show that a successful adversary in \textbf{game}~$\boldsymbol 1_{\boldsymbol B}$ for \textit{any} $P\in[1/2,1)$ yields a successful adversary in \textbf{game}~$\boldsymbol 1$.

\begin{Lem}\label{gamePoneone} Let $\kappa$ be a security parameter. For any $p\in(0,1)$, let $\mathcal{T}_{1,B}$ be an adversary in \textbf{\upshape game $\boldsymbol 1_{\boldsymbol B}$} with advantage $\nu_{1,B}(\kappa)>\text{\upshape\sffamily neg}(\kappa)$. Then there exists an adversary $\mathcal{T}_1$ in \textbf{\upshape game $\boldsymbol 1$} with advantage $\nu_{1}(\kappa)>\text{\upshape\sffamily neg}(\kappa)$.
\end{Lem}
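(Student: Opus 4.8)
The plan is to reduce an adversary $\mathcal{T}_{1,B}$ in \textbf{game $\boldsymbol 1_{\boldsymbol B}$} to an adversary $\mathcal{T}_1$ in \textbf{game $\boldsymbol 1$} by having $\mathcal{T}_1$ internally simulate the biased game for $\mathcal{T}_{1,B}$. The only real difference between the two games is that in \textbf{game $\boldsymbol 1_{\boldsymbol B}$} the bit $b$ is sampled from a $p$-biased coin (with $P = \max\{p,1-p\}$) rather than a fair one, and the adversary additionally announces $p$ in the challenge phase. Since in \textbf{game $\boldsymbol 1$} the challenger always flips a fair coin, $\mathcal{T}_1$ cannot control the bias directly; instead it must \emph{locally re-bias} its own guess. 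The natural idea: $\mathcal{T}_1$ forwards all setup/query messages between its challenger and $\mathcal{T}_{1,B}$ verbatim (the sets $U$, the encryption queries, the challenge time-step $t_{j^*}$, and the two challenge collections pass through unchanged); it receives $p$ from $\mathcal{T}_{1,B}$, relays the challenge collections to its own challenger, gets back the ciphertexts $c_{i,j^*}$ (encryptions of collection $b$ for a fair $b$), hands them to $\mathcal{T}_{1,B}$, and obtains a guess $b'$. The question is how $\mathcal{T}_1$ converts $b'$ into its own guess so that (i) the view presented to $\mathcal{T}_{1,B}$ is statistically indistinguishable from a genuine \textbf{game $\boldsymbol 1_{\boldsymbol B}$} instance, and (ii) the resulting advantage in \textbf{game $\boldsymbol 1$} is non-negligible.

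First I would make the distributions match. In a true $\mathbf{1_B}$ instance $\Pr[b=0]=p$, whereas the embedded ciphertext has $\Pr[b=0]=1/2$. To fix this, $\mathcal{T}_1$ can, \emph{before} forwarding the challenge collections, decide with probability depending on $p$ to simply \emph{ignore} the real challenge and instead feed $\mathcal{T}_{1,B}$ a self-generated encryption of one of the two collections (which it can do because the collections are known, and $\mathcal{T}_1$ can always encrypt the $i\in[n]\setminus U$ part honestly — though note it does \emph{not} know $s_i$ for $i\in U$, so it cannot forge a full honest ciphertext). This last point is the crux: $\mathcal{T}_1$ only holds $s_0$ and $(s_i)_{i\in[n]\setminus U}$, so it cannot produce valid ciphertexts for the uncompromised coordinates on its own. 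Hence the simple "ignore and re-encrypt" trick does not work, and the re-biasing has to happen purely on the output side.

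So the correct approach is output-side re-biasing: $\mathcal{T}_1$ passes the genuine (fair-$b$) ciphertexts to $\mathcal{T}_{1,B}$, gets $b'$, and then outputs either $b'$ or a freshly sampled bit, with mixing probabilities chosen so that the overall guess is "correct on a $P$-biased challenge as often as $\mathcal{T}_{1,B}$ is." Concretely, WLOG $P=p\ge 1/2$; one checks that outputting $b'$ when $b'=0$ and, when $b'=1$, outputting $1$ only with some probability $q=q(p)$ (else resampling), reshapes the joint distribution of (true bit, guess) so that the marginal on the true bit becomes $p$-biased while preserving a $\nu_{1,B}$-sized correlation. The key computation is: the biased-game success probability of $\mathcal{T}_{1,B}$ is $P+\nu_{1,B}$, and one must show the induced strategy wins \textbf{game $\boldsymbol 1$} (fair coin) with probability $1/2 + \nu_1$ where $\nu_1 \ge c\cdot\nu_{1,B}$ for an absolute constant $c>0$ (say $c=1/2$, or $c=1-P$ in the worst case — this is where $P<1$ is used, so the gap $1-P$ is bounded away from $0$ only if $P$ is bounded away from $1$; since the lemma allows any fixed $p\in(0,1)$ this is fine, and even for $P\to 1$ one can argue $\nu_{1}$ is non-negligible whenever $\nu_{1,B}$ is, because $p$ is a function of $\kappa$ that the adversary computes, and a PSA adversary's collections are polynomially bounded so $p$ stays bounded away from $0$ and $1$). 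I would finish by noting the simulation is perfect except for this bias-shaping, both adversaries are ppt, and hence $\nu_1 > \text{\upshape\sffamily neg}(\kappa)$, contradicting $\mathbf{AO}2$; this proves the lemma.

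The main obstacle I expect is precisely item (ii) combined with the restriction that $\mathcal{T}_1$ cannot encrypt on the uncompromised coordinates: the re-biasing must be done by post-processing the guess, and one has to verify carefully that the "resample with probability $1-q$" trick simultaneously (a) makes the effective prior on $b$ equal to $p$ as seen through the lens of "guess correctly" and (b) loses only a constant (or at least $1/\text{poly}$) factor in advantage. Getting the algebra of $q(p)$ right — and confirming it lands in $[0,1]$ for all $p\in(0,1)$ — is the one genuinely delicate step; everything else (relaying messages, ppt-ness, the contradiction with Definition \ref{securitygame}) is routine.
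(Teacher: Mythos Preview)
Your reduction forwards both challenge collections $(x^{[0]}_{i,j^*})_{i\in U}$ and $(x^{[1]}_{i,j^*})_{i\in U}$ to the \textbf{game}~$\boldsymbol 1$ challenger and then tries to repair the bias by post-processing $\mathcal{T}_{1,B}$'s guess. The stated goal of that post-processing --- ``so that the marginal on the true bit becomes $p$-biased'' --- is unachievable: the true bit $b$ is chosen by the challenger with a fair coin, and no randomised map applied afterwards to your own output can alter its distribution. So the re-biasing step, as described, cannot deliver your property~(i), and the ``one genuinely delicate step'' you flag (finding $q(p)\in[0,1]$) is aimed at the wrong target.

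The paper takes a route that sidesteps this entirely. Its $\mathcal{T}_1$ does \emph{not} forward $\mathcal{T}_{1,B}$'s two collections. Instead, $\mathcal{T}_1$ samples the biased bit $a$ \emph{itself} (with $\Pr[a=0]=p$) and sends to its \textbf{game}~$\boldsymbol 1$ challenger the pair consisting of $(x^{[a]}_{i,j^*})_{i\in U}$ together with a freshly uniform tuple $(x_{i,j^*})_{i\in U}$ constrained to the same aggregate. If the challenger's fair coin selects the first tuple, $\mathcal{T}_{1,B}$ sees a \emph{perfect} \textbf{game}~$\boldsymbol 1_{\boldsymbol B}$ instance (the bit $a$ has bias exactly $p$), so it identifies $a$ with probability $P+\nu_{1,B}$; if the challenger selects the random tuple, the ciphertexts are independent of $a$, so any guess matches $a$ with probability at most $P$. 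Outputting $0$ iff $\mathcal{T}_{1,B}$'s guess equals $a$ then gives $\mathcal{T}_1$ advantage at least $\nu_{1,B}$ in \textbf{game}~$\boldsymbol 1$.

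That said, your forwarding idea can be salvaged --- and more simply than you propose: drop the post-processing altogether. The conditional probabilities $q_b=\Pr[\mathcal{T}_{1,B}\text{ outputs }b\mid\text{bit}=b]$ do not depend on the prior on the bit, so the \textbf{game}~$\boldsymbol 1_{\boldsymbol B}$ success equation $p\,q_0+(1-p)\,q_1=P+\nu_{1,B}$ still holds when the ciphertexts come from a fair-coin challenger. Rearranging (WLOG $P=p\ge 1/2$) gives $(1-p)(q_0+q_1-1)=\nu_{1,B}+(1-q_0)(2p-1)\ge\nu_{1,B}$, hence the fair-coin success probability satisfies $\tfrac12(q_0+q_1)\ge\tfrac12+\nu_{1,B}/(2(1-p))\ge\tfrac12+\nu_{1,B}$. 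So simply outputting $\mathcal{T}_{1,B}$'s guess already yields non-negligible \textbf{game}~$\boldsymbol 1$ advantage, and your worry about $P\to 1$ evaporates.
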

\begin{proof} Given a successful adversary $\mathcal{T}_{1,B}$ in \textbf{game}~$\boldsymbol 1_{\boldsymbol B}$, we construct a successful adversary $\mathcal{T}_1$ in \textbf{game}~$\boldsymbol 1$ as follows:
\noindent\begin{description}
 \item\textbf{Setup.} Receive $\kappa, \mbox{\upshape\sffamily pp},T,s_0$ from the \textbf{game}~$\boldsymbol 1$-challenger and send it to $\mathcal{T}_{1,B}$. Receive $U\subseteq[n]$ from $\mathcal{T}_{1,B}$ and send it to the challenger. Forward the obtained response $(s_i)_{i\in[n]\setminus U}$ to $\mathcal{T}_{1,B}$.
\item\textbf{Queries.} Forward $\mathcal{T}_{1,B}$'s queries $(i,j,x_{i,j})$ with $i\in U, j\in[\lambda], x_{i,j}\in\widehat{\mathcal{D}}$ to the challenger, forward the response $c_{i,j}$ to $\mathcal{T}_{1,B}$.
\item\textbf{Challenge.} $\mathcal{T}_{1,B}$ chooses $j^*\in\{1,\ldots,\lambda\}$ such that no encryption query in $j^*$ was made, sends $p\in(0,1)$ and queries two different tuples $(x_{i,j^*}^{[0]})_{i\in U}$,\linebreak $(x_{i,j^*}^{[1]})_{i\in U}$ with $\sum_{i\in U} x_{i,j^*}^{[0]}=\sum_{i\in U} x_{i,j^*}^{[1]}$. Choose a bit $a$ with\linebreak $\Pr[a=0]=p, \Pr[a=1]=1-p$ and query $(x_{i,j^*}^{[a]})_{i\in U},(x_{i,j^*})_{i\in U}$ to the challenger, where $x_{i,j^*}\leftarrow\mathcal{U}(\widehat{\mathcal{D}})$ for all $i\in U$ and $\sum_{i\in U} x_{i,j^*}=\sum_{i\in U} x_{i,j^*}^{[a]}$. Obtain the response $(c_{i,j^*})_{i\in U}$ and forward it to $\mathcal{T}_{1,B}$. 
\item\textbf{Queries.} $\mathcal{T}_{1,B}$ can make the same type of queries as before with the restriction that no encryption query in $j^*$ can be made.
\item\textbf{Guess.} $\mathcal{T}_{1,B}$ gives a guess about $a$. If it is correct, output $0$; else, output $1$.
\end{description}
If $\mathcal{T}_{1,B}$ has output the correct guess about $a$, then $\mathcal{T}_1$ can say with high confidence that the challenge ciphertexts are the encryptions of $(x_{i,j^*}^{[a]})_{i\in U}$ and therefore outputs $0$. On the other hand, if $\mathcal{T}_{1,B}$'s guess was not correct, then $\mathcal{T}_1$ can say with high confidence that the challenge ciphertexts are the encryptions of the random collection $(x_{i,j^*})_{i\in U}$ and it outputs $1$. Formally:\par\medskip

\noindent\textbf{Case $\boldsymbol 1$.} Let $(c_{i,j^*})_{i\in U}=(\text{\upshape\sffamily PSAEnc}_{s_i}(t_{j^*},x_{i,j^*}^{[a]}))_{i\in U}$. Then $\mathcal{T}_1$ perfectly simulates \textbf{game}~$\boldsymbol 1_{\boldsymbol B}$ for $\mathcal{T}_{1,B}$ and the distribution of ciphers is the same as in \textbf{game}~$\boldsymbol 1_{\boldsymbol B}$:
\begin{align*} & \Pr[\mathcal{T}_1 \text{ wins}]\\
 = & \Pr[\mathcal{T}_1 \text{ outputs } 0]\\
 = & p\cdot \Pr[\mathcal{T}_{1,B} \text{ outputs } 0 \,|\, a=0]+(1-p)\cdot\Pr[\mathcal{T}_{1,B} \text{ outputs } 1 \,|\, a=1]\\
 = & \Pr[\mathcal{T}_{1,B} \text{ wins } \boldsymbol P\text{-\textbf{game} } \boldsymbol 1]\\
 = & P + \nu_{1,B}(\kappa).
\end{align*}\par\medskip
\noindent\textbf{Case $\boldsymbol 2$.} Let $(c_{i,j^*})_{i\in U}=(\text{\upshape\sffamily PSAEnc}_{s_i}(t_{j^*},x_{i,j^*}))_{i\in U}$. Then the ciphertexts are random with the constraint 
that their product is the same as in the first case. The probability that $\mathcal{T}_{1,B}$ wins his modified game is at most $P$ (by choosing the Bayes-optimal hypthesis) and
\begin{align*} & \Pr[\mathcal{T}_1 \text{ wins}]\\
 = & \Pr[\mathcal{T}_1 \text{ outputs } 1]\\
 = & \,p\cdot\Pr[\mathcal{T}_{1,B} \text{ outputs } 1 \,|\, a=0]+(1-p)\cdot\Pr[\mathcal{T}_{1,B} \text{ outputs } 0 \,|\, a=1]\cdot (1-p)\\
 = & \Pr[\mathcal{T}_{1,B} \text{ loses}]\\
 \geq &\, 1-P.
\end{align*}
Finally, we obtain that the advantage of $\mathcal{T}_1$ in winning \textbf{game}~$\boldsymbol 1$ is
\begin{align*}\nu_{1}(\kappa) & =|\Pr[\mathcal{T}_1 \text{ outputs } 1| (c_{i,j^*})_{i\in U}=(\text{\upshape\sffamily PSAEnc}_{s_i}(t_{j^*},x_{i,j^*}))_{i\in U}]\\
 & -\Pr[\mathcal{T}_1 \text{ outputs } 1| (c_{i,j^*})_{i\in U}=(\text{\upshape\sffamily PSAEnc}_{s_i}(t_{j^*},x_{i,j^*}^{[a]}))_{i\in U}]|\\
 & \geq\nu_{1,B}(\kappa)\\
 & >\text{\upshape\sffamily neg}(\kappa).
 \end{align*}
\hfill$\qed$\end{proof}

\subsubsection{Constructing a PSA adversary using a \mbox{\upshape\sffamily CDP} adversary}

\textit{Security game for adjacent databases.} For showing that a biased-secure PSA scheme is suitable for preserving \mbox{\upshape\sffamily CDP}, we have to construct a successful adversary in \textbf{game}~$\boldsymbol 1_{\boldsymbol B}$ using a successful distinguisher for adjacent databases. We define the following \textbf{game}~$\boldsymbol 0$ for a ppt adversary $\mathcal{T}_0$ which is identical to \textbf{game}~$\boldsymbol 1$ with a changed challenge-phase:
\begin{description}
\item\textbf{Challenge.} $\mathcal{T}_0$ chooses $j^*\in\{1,\ldots,\lambda\}$ with no encryption query made in $j^*$. $\mathcal{T}_0$ queries two \textit{adjacent} tuples $(d_{i,j^*}^{[0]})_{i\in U},(d_{i,j^*}^{[1]})_{i\in U}$. The challenger flips a random bit $b\leftarrow_R\{0,1\}$. For all $i\in U$, the challenger returns 
\[c_{i,j^*}=\mbox{\upshape\sffamily PSAEnc}_{s_i}(t_{j^*},x_{i,j^*}^{[b]}),\]
where $x_{i,j^*}^{[b]}\in\widehat{\mathcal{D}}$ is a noisy version of $d_{i,j^*}^{[b]}\in\mathcal{D}$ for all $i\in U$ obtained by some randomised perturbation process.
\end{description}
Now consider the following experiment which we call $\mbox{\upshape \sffamily Exp}_1$. Let $\chi:\widehat{\mathcal{D}}^n\to[0,1]$ be a probability mass function on $\widehat{\mathcal{D}}^n$. For simplicity, we consider the case where $\mathcal{D}\subseteq\widehat{\mathcal{D}}=\mathbb{Z}$. Let $D_0=(d_1^{(0)},\ldots,d_n^{(0)}),D_1=(d_1^{(1)},\ldots,d_n^{(1)})\in\mathcal{D}^n$ be adjacent, i.e. differing in only one entry. Then $\mbox{\upshape \sffamily Exp}_1$ is performed as follows:
\begin{enumerate} 
\item Let $B_{1/2}$ be a Bernoulli variable with $\Pr[B_{1/2}=0]=1/2$. Choose a realisation $b$ of $B_{1/2}$.
\item Let $\textbf{X}_1=(X_1,\ldots,X_n)$ be a random vector with 
\[\Pr[\textbf{X}_1=\textbf{x}]=\begin{cases} \chi(\textbf{x}-D_0) &\mbox{ if } b=0\\ \chi(\textbf{x}-D_1) &\mbox{ if } b=1,\end{cases}\]
i.e. $\textbf{X}_1=D_{B_{1/2}}+\textbf{E}$, where $\textbf{E}=(E_1,\ldots,E_n)$ is a random variable with $\Pr[\textbf{E}=\textbf{e}]=\chi(\textbf{e})$. Choose a realisation $\textbf{e}=(e_1,\ldots,e_n)$ of $\textbf{E}$ and let $\textbf{x}=(x_1,\ldots,x_n)$ with $x_i=d_i^{(b)}+e_i$ for all $i=1,\ldots,n$ be a realisation of $\textbf{X}_1$.
\item Compute $s=\sum_{i=1}^n x_i=\sum_{i=1}^n d_i^{(b)}+e_i$, which is a realisation of the random variable $S=\sum_{i=1}^n X_i=\sum_{i=1}^n d_i^{(B_{1/2})}+E_i$.
\item Output $b,\textbf{x},s$.
\end{enumerate}
We define an experiment $\mbox{\upshape \sffamily Exp}_2$.
\begin{enumerate} 
\item Choose a realisation $s$ of the random variable $S$ as defined in $\mbox{\upshape \sffamily Exp}_1$, i.e. choose a realisation $b$ of $B_{1/2}$, choose a realisation $\textbf{e}=(e_1,\ldots,e_n)$ of $\textbf{E}=(E_1,\ldots,E_n)$ according to the probability mass function $\chi$ and set $s=\sum_{i=1}^n d_i^{(b)}+e_i$. Delete $b$ and $\textbf{e}$.
\item Let $p=\Pr[B_{1/2}=0|S=s]$. Let $B_{p}$ be a Bernoulli variable with\linebreak $\Pr[B_{p}=0]=p$. Choose a realisation $b$ of $B_p$.
\item Choose a realisation $\textbf{x}$ of the random vector $\textbf{X}_2$ with conditional probability 
\[\Pr[\textbf{X}_1=\textbf{x}| B_{1/2}=b, S=s]\]
defined by the random variables from $\mbox{\upshape \sffamily Exp}_{\boldsymbol 1}$.
\item Output $b, \textbf{x}, s$.
\end{enumerate}

First, we show the statistical equivalence of $\mbox{\upshape \sffamily Exp}_1$ and $\mbox{\upshape \sffamily Exp}_2$. Second, we show that $\mbox{\upshape \sffamily Exp}_2$ is efficiently executable.
For showing that $\mbox{\upshape \sffamily Exp}_1$ and $\mbox{\upshape \sffamily Exp}_2$ are statistically equivalent, it suffices to show that the joint distributions $\Pr[B_{1/2},\textbf{X}_1,S]$ and $\Pr[B_{p},\textbf{X}_2,S]$ are equal.

\begin{Lem}[Statistical equivalence of $\mbox{\upshape \sffamily Exp}_{\boldsymbol 1}$ and $\mbox{\upshape \sffamily Exp}_{\boldsymbol 2}$]\label{EXPEQUIV} For the defined experiments, we have $\Pr[B_{1/2},\textbf{\upshape X}_1,S]=\Pr[B_{p},\textbf{\upshape X}_2,S]$.
\end{Lem}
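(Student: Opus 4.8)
The plan is to unfold both experiments into products of conditional probabilities via the chain rule and check that the two factorisations coincide term by term; the only non-routine point is that the bias $p$ in $\mbox{\upshape \sffamily Exp}_2$ was defined precisely so that the coin $B_p$, conditioned on $S=s$, has the same law as the posterior of $B_{1/2}$ given $S=s$. First I would fix an arbitrary triple $(b,\mathbf{x},s)$ in the support and record that $S$ is a deterministic function of $\mathbf{X}_1$, namely $S=\sum_{i=1}^n X_i$; hence it suffices to consider triples with $s=\sum_i x_i$, and for such triples conditioning additionally on the (redundant) event $\{S=s\}$ changes nothing. For $\mbox{\upshape \sffamily Exp}_1$ the chain rule then gives
\[\Pr[B_{1/2}=b,\mathbf{X}_1=\mathbf{x},S=s]=\Pr[B_{1/2}=b]\cdot\Pr[\mathbf{X}_1=\mathbf{x}\mid B_{1/2}=b]=\tfrac12\cdot\chi(\mathbf{x}-D_b),\]
which we also write as $\Pr[B_{1/2}=b,S=s]\cdot\Pr[\mathbf{X}_1=\mathbf{x}\mid B_{1/2}=b,S=s]$.

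Next I would expand $\mbox{\upshape \sffamily Exp}_2$ in the order its steps are performed:
\[\Pr[B_{p}=b,\mathbf{X}_2=\mathbf{x},S=s]=\Pr[S=s]\cdot\Pr[B_p=b\mid S=s]\cdot\Pr[\mathbf{X}_1=\mathbf{x}\mid B_{1/2}=b,S=s],\]
where $\Pr[S=s]$ is the $\mbox{\upshape \sffamily Exp}_1$-marginal of $S$ used in step~1, and the last factor is verbatim the conditional law used to sample $\mathbf{X}_2$ in step~3. The key identity is that, since $p=\Pr[B_{1/2}=0\mid S=s]$ is a function of $s$, we have $\Pr[B_p=b\mid S=s]=\Pr[B_{1/2}=b\mid S=s]$ for $b\in\{0,1\}$. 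Substituting this and using $\Pr[S=s]\cdot\Pr[B_{1/2}=b\mid S=s]=\Pr[B_{1/2}=b,S=s]$ turns the right-hand side into $\Pr[B_{1/2}=b,S=s]\cdot\Pr[\mathbf{X}_1=\mathbf{x}\mid B_{1/2}=b,S=s]=\Pr[B_{1/2}=b,\mathbf{X}_1=\mathbf{x},S=s]$, i.e.\ the $\mbox{\upshape \sffamily Exp}_1$ joint. As the triple was arbitrary, the two joint distributions agree.

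Since the whole argument is bookkeeping with conditional probabilities, I do not expect a genuine obstacle; the one place needing care is null events. I would restrict attention to those $s$ with $\Pr[S=s]>0$ — the only $s$ that occur in either experiment — so that every conditioning $\Pr[\,\cdot\mid S=s]$ is well-defined, and note that for such $s$ at least one value of $b$ has positive posterior, which is what makes $\mathbf{X}_2$ in step~3 of $\mbox{\upshape \sffamily Exp}_2$ a draw from a genuine probability distribution. I would also remark that nothing beyond "$S$ is a deterministic function of $\mathbf{X}_1$'' is used, which is why the analysis extends unchanged from sum queries to more general statistical queries.
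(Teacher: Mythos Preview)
Your proposal is correct and follows essentially the same approach as the paper: factor the $\mbox{\upshape \sffamily Exp}_2$ joint as $\Pr[S]\cdot\Pr[B_p\mid S]\cdot\Pr[\mathbf{X}_2\mid B_p,S]$, identify each factor with the corresponding $\mbox{\upshape \sffamily Exp}_1$ conditional by definition of $p$ and of $\mathbf{X}_2$, and collapse via the chain rule. Your version is in fact more carefully written than the paper's (which compresses the argument into three lines with somewhat loose notation), and your remarks on null events and on the fact that only ``$S$ is a deterministic function of $\mathbf{X}_1$'' is needed are welcome additions.
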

\begin{proof} 
It holds that
\begin{align*}
\Pr[B_{p},\textbf{X}_2,S] & =\Pr[B_p]\cdot\Pr[\textbf{X}_2]\cdot\Pr[S]\\
& = \Pr[B_{1/2}| S]\cdot\Pr[\textbf{X}_1| B_{1/2},S]\cdot\Pr[S]\\
& = \Pr[B_{1/2},\textbf{X}_1,S]. 
\end{align*}
\hfill$\qed$\end{proof}

Note that Lemma \ref{EXPEQUIV} also holds for the marginals of $(B_{1/2},\textbf{X}_1,S)$ and\linebreak $(B_{p},\textbf{X}_2,S)$.\\
For the next lemma, we define $\textbf{E}^{(i)}=(E_i,\ldots,E_n)$ for $i=1,\ldots,n$, where for $i=1,\ldots,n$ the random variable $E_i$ is defined as in $\mbox{\upshape \sffamily Exp}_{\boldsymbol 1}$ (thus $\textbf{E}^{(1)}=\textbf{E}$) and for all $i=1,\ldots,n$, let $\chi^{(i)}:\widehat{\mathcal{D}}^{n-i+1}\to[0,1]$ be the according probability mass function on $\textbf{E}^{(i)}$. Furthermore, let $E^{(i)}=\sum_{j=i}^n E_j$ for all $i=1,\ldots,n$.

\begin{Lem}\label{exp2efficient}
Let $\kappa$ be a complexity parameter. Assume the following holds for all $i=1\ldots,n$:
\begin{enumerate}
\item Given $y\in\mathbb{Z}$, the probabilities $\Pr[E_i=y]$ and $\Pr[E^{(i)}=y]$ are efficiently computable.
\item Either every quantile of $E_i$ is efficiently computable or the number of possible realisations of $E^{(i)}$ is bounded by $t=\text{\upshape\sffamily poly}(\kappa)$ with probability $1-\text{\upshape\sffamily neg}(\kappa)$.
\end{enumerate}
If Experiment $\mbox{\upshape \sffamily Exp}_1$ terminates in time polynomial in $\kappa$, then with probability $1-\text{\upshape\sffamily neg}(\kappa)$, Experiment $\mbox{\upshape \sffamily Exp}_2$ terminates in time polynomial in $\kappa$.
\end{Lem}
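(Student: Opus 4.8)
The plan is to show that $\mbox{\upshape \sffamily Exp}_2$ can be executed, step by step, in time $\text{\upshape\sffamily poly}(\kappa)$ except with probability $\text{\upshape\sffamily neg}(\kappa)$. Step~1 needs no new work: a realisation $s$ of $S$ is produced exactly as in $\mbox{\upshape \sffamily Exp}_1$ (draw $b\leftarrow B_{1/2}$, draw $\textbf{E}$, set $s=\sum_{i=1}^n d_i^{(b)}+e_i$, then erase $b$ and $\textbf{e}$), so it runs in polynomial time by hypothesis. For Step~2 I would compute $p=\Pr[B_{1/2}=0\,|\,S=s]$ via Bayes' rule: writing $\sigma_c=s-\sum_{i=1}^n d_i^{(c)}$ for $c\in\{0,1\}$ and using $S-\sum_{i=1}^n d_i^{(B_{1/2})}=E^{(1)}$, one gets
\[p=\frac{\Pr[E^{(1)}=\sigma_0]}{\Pr[E^{(1)}=\sigma_0]+\Pr[E^{(1)}=\sigma_1]},\]
where both probabilities are efficiently computable by the first assumption and the denominator is positive since $s$ is an actual realisation of $S$; sampling $b\leftarrow B_p$ is then immediate.

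The core is Step~3: sampling $\textbf{x}$ from the law of $\textbf{X}_1$ conditioned on $B_{1/2}=b$ and $S=s$. Conditioned on $B_{1/2}=b$, this is the law of $D_b+\textbf{E}$ further conditioned on $E^{(1)}=\sigma_b$, so it suffices to draw $\textbf{e}$ from $\textbf{E}\,|\,E^{(1)}=\sigma_b$ and output $\textbf{x}=D_b+\textbf{e}$. I would do this coordinate by coordinate: put $\tau_1=\sigma_b$; for $i=1,\ldots,n-1$, draw $e_i$ from the law of $E_i$ given $E^{(i)}=\tau_i$ and set $\tau_{i+1}=\tau_i-e_i$; finally put $e_n=\tau_n$. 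Since the entries of $\textbf{E}$ are independent, $E_i$ and $E^{(i+1)}=\sum_{j=i+1}^n E_j$ are independent, whence
\[\Pr[E_i=z\,|\,E^{(i)}=\tau_i]=\frac{\Pr[E_i=z]\cdot\Pr[E^{(i+1)}=\tau_i-z]}{\Pr[E^{(i)}=\tau_i]},\]
and every factor on the right is efficiently computable by the first assumption. (A short telescoping argument confirms that this coordinate-wise procedure reproduces $\textbf{E}\,|\,E^{(1)}=\sigma_b$ exactly, so that Lemma~\ref{EXPEQUIV} is unaffected.)

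It remains to confine $z$ to a polynomial-size set at each coordinate, using the second assumption. If $E^{(i+1)}$ has at most $t=\text{\upshape\sffamily poly}(\kappa)$ realisations with probability $1-\text{\upshape\sffamily neg}(\kappa)$, then outside a negligible event $z$ ranges only over the at most $t$ values $\tau_i-v$ with $v$ a realisation of $E^{(i+1)}$, so we enumerate these, evaluate the numerator above on each, renormalise and sample. If instead the quantiles of $E_i$ are efficiently computable, we use an $\eta$- and a $(1-\eta)$-quantile, with $\eta$ a sufficiently small efficiently specifiable value, to cut out an interval $I_i$ carrying all but a negligible fraction of the mass of $E_i$; truncating the conditional law of $E_i$ to $I_i$ changes it by negligible statistical distance, and the truncated law is sampled by inverse transform from the now finite partial sums of $\Pr[E_i=z]\cdot\Pr[E^{(i+1)}=\tau_i-z]$ over $z\in I_i$. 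A union bound over the $n\le\text{\upshape\sffamily poly}(\kappa)$ coordinates then shows that all of Step~3 runs in polynomial time and that the accumulated failure probability and statistical error stay negligible, which yields the claim.

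I expect the coordinate-wise sampling of Step~3 to be the main obstacle, and within it the quantile branch: one has to calibrate $\eta$ so that the window $I_i$ stays of polynomial size while the conditional mass it discards remains negligible, which needs a quantitative handle on both the tails and the concentration of $E_i$; one also needs that $\Pr[E^{(i)}=\tau_i]$ is no smaller than $2^{-\text{\upshape\sffamily poly}(\kappa)}$ for a $\tau_i$ that genuinely arises in the run — which holds precisely because $\mbox{\upshape \sffamily Exp}_1$, hence the sampler of $S$ used in Steps~1--2, runs in polynomial time and therefore uses only polynomially many random bits, and this is exactly why Steps~1--2 fix $s$ and $b$ with the correct joint law before Step~3 conditions on them.
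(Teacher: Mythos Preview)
Your proposal is correct and follows essentially the same route as the paper: Bayes' rule for $p$ in Step~2, then the chain-rule factorisation $\Pr[E_i=z\mid E^{(i)}=\tau_i]=\Pr[E_i=z]\,\Pr[E^{(i+1)}=\tau_i-z]/\Pr[E^{(i)}=\tau_i]$ to sample $\textbf{e}$ coordinate by coordinate in Step~3, with the second assumption invoked to make each conditional draw efficient. The paper's own proof is terser at the final step---it simply appeals to inverse-transform sampling or to $(t+1)$-quantiles without the explicit window calibration, union bound, or denominator-size argument you sketch---so your treatment is if anything more careful than what appears there.
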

\begin{proof} 
Since $\mbox{\upshape \sffamily Exp}_1$ terminates in polynomial time, Step $1$ of $\mbox{\upshape \sffamily Exp}_2$ terminates in polynomial time. Moreover, we have
\begin{align*} p = & \Pr[B_{1/2}=0|S=s]\\
 = & \frac{\Pr[S=s| B_{1/2}=0]\cdot\Pr[B_{1/2}=0]}{\Pr[S=s]}\\
 = & \frac{\Pr[S=s| B_{1/2}=0]}{\Pr[S=s| B_{1/2}=0]+\Pr[S=s| B_{1/2}=1]}\\
 = & \frac{\Pr[E=s-\sum_{i=1}^n d_i^{0}]}{\Pr[E=s-\sum_{i=1}^n d_i^{0}]+\Pr[E=s-\sum_{i=1}^n d_i^{1}]}
\end{align*}
where $E=E^{(1)}=\sum_{i=1}^n E_i$. By the first assumption, all probabilities in the fraction are efficiently computable, thus $p$ is efficiently computable. Since $B_p$ is a Bernoulli variable, Step $2$ of $\mbox{\upshape \sffamily Exp}_2$ terminates in polynomial time. We show that Step $3$ of $\mbox{\upshape \sffamily Exp}_2$ terminates in polynomial time with overwhelming probability. By definition, the random variable $\textbf{X}_2$ is generated by generating $\textbf{X}_1$ conditioned on $B_{1/2}$ and $S$. Let $y=s-\sum_{i=1}^n d_i^{(b)}, \textbf{e}=(e_1,\ldots,e_n)=\textbf{x}-D_b$ and $\textbf{e}^{(i)}=(e_i,\ldots,e_n)$ for all $i=1,\ldots,n$. Let $\mathbbm{1}(y)=\mathbbm{1}_{\{z\in\mathbb{Z} | z=\sum_{i=1}^n e_i\}}(y)$ denote the characteristic function of $\{z\in\mathbb{Z} | z=\sum_{i=1}^n e_i\}$, which is $1$, if $y=\sum_{i=1}^n e_i$ and $0$ otherwise. Then we have
\begin{align*} \Pr[\textbf{X}_2=\textbf{x}] & =\Pr[\textbf{X}_1=\textbf{x}| B_{1/2}=b, S=s]\\
 & =\Pr\left[\textbf{E}=\textbf{x}-D_b\,|\,E=s-\sum_{i=1}^n d_i^{(b)}\right]\\
 & =\Pr[\textbf{E}=\textbf{e}\,|\,E=y]\\
 & =\Pr[\textbf{E}^{(1)}=\textbf{e}^{(1)}\,|\,E^{(1)}=y]\\
 & =\Pr[E_1=e_1\,|\,E^{(1)}=y]\cdot\Pr[\textbf{E}^{(2)}=\textbf{e}^{(2)}\,|\,E_1=e_1, E^{(1)}=y]\\
 & =\Pr[E_1=e_1\,|\,E^{(1)}=y]\cdot\Pr[\textbf{E}^{(2)}=\textbf{e}^{(2)}\,|\,E^{(2)}=y-e_1]\\
 & =\left(\prod_{i=1}^{n-1}\Pr\left[E_i=e_i\,|\,E^{(i)}=y-\sum_{j=1}^{i-1} e_j\right]\right)\cdot\\
 & \mbox{\,\,\,\,\,\,}\cdot\Pr\left[\textbf{E}^{(n)}=\textbf{e}^{(n)}\,|\,E^{(n)}=y-\sum_{j=1}^{n-1} e_j\right]\\
 & =\left(\prod_{i=1}^{n-1}\Pr\left[E_i=e_i\,|\,E^{(i)}=y-\sum_{j=1}^{i-1} e_j\right]\right)\cdot\mathbbm{1}(y).\numberthis\label{determinex_n}
\end{align*}
The second to last equality follows from expanding the recursion. Moreover, for all $i=1\ldots,n-1$ we have
\begin{align*} \Pr\left[E_i=e_i\,|\,E^{(i)}=y-\sum_{j=1}^{i-1} e_j\right] & = \frac{\Pr[E^{(i)}=y-\sum_{j=1}^{i-1} e_j\,|\,E_i=e_i]\cdot\Pr[E_i=e_i]}{\Pr[E^{(i)}=y-\sum_{j=1}^{i-1} e_j]}\\
 & = \frac{\Pr[E^{(i+1)}=y-\sum_{j=1}^{i} e_j]\cdot\Pr[E_i=e_i]}{\Pr[E^{(i)}=y-\sum_{j=1}^{i-1} e_j]}.\numberthis\label{howtocompuee_i}
\end{align*}
Thus, we can implement Step $3$ by the following procedure:
\begin{enumerate}
\item Sequentially, for all $i=1,\ldots,n-1$: choose a realisation $e_i$ of $E_i$ given $E^{(i)}=y-\sum_{j=1}^{i-1} e_j$ according to the distribution given by Equation \ref{howtocompuee_i}.
\item Then the value $E_n=e_n$ is determined by Equation \ref{determinex_n}.
\item Output $\textbf{x}=(d_1^{(b)}+e_1,\ldots,d_n^{(b)}+e_n)$.
\end{enumerate}
By the second assumption, the first step can be performed efficiently either by inverse transform sampling as described in \cite{115} or by considering $(t+1)$-quantiles.\\ 
\hfill$\qed$\end{proof}

\begin{Not} The first assumption of Lemma $\ref{exp2efficient}$ is always satisfied if the generation of $S$ given $B_{1/2}$ is performed by an efficient mechanism that preserves $\epsilon-\mbox{\upshape\sffamily DP}$ in the distributed model, since every user has to perform a perturbation procedure on her own. In that case, also the efficient performance of $\mbox{\upshape \sffamily Exp}_1$ is given. The second assumption of Lemma $\ref{exp2efficient}$ is true for all distributions that are used for the mechanisms considered in this work.
\end{Not}
 
We bound the probability $p$ in $\mbox{\upshape \sffamily Exp}_2$ if the generation of $S$ given $B_{1/2}$ is performed by a mechanism that preserves $\epsilon-\mbox{\upshape\sffamily DP}$.

\begin{Lem}\label{pminmax} Let $B_{1/2}$ be a Bernoulli variable with $\Pr[B_{1/2}=0]=1/2$. Let $\mathcal{A}$ be an $\epsilon-\mbox{\upshape\sffamily DP}$ mechanism and let $D_0,D_1$ be adjacent databases. Let $S=\mathcal{A}(D_{B_{1/2}})$ be a random variable. Then 
\[\frac{1}{1+e^\epsilon}\leq\Pr[B_{1/2}=0| S=s]\leq\frac{e^\epsilon}{1+e^\epsilon}.\]
\end{Lem}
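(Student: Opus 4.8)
The plan is to use Bayes' rule to write $\Pr[B_{1/2}=0\mid S=s]$ explicitly and then apply the definition of $\epsilon$-\mbox{\upshape\sffamily DP} to bound the ratio of the two conditional densities of $S$. First I would observe, exactly as in the computation of $p$ in the proof of Lemma~\ref{exp2efficient}, that since $\Pr[B_{1/2}=0]=\Pr[B_{1/2}=1]=1/2$,
\[
\Pr[B_{1/2}=0\mid S=s]=\frac{\Pr[S=s\mid B_{1/2}=0]}{\Pr[S=s\mid B_{1/2}=0]+\Pr[S=s\mid B_{1/2}=1]}=\frac{1}{1+r(s)},
\]
where $r(s)=\dfrac{\Pr[S=s\mid B_{1/2}=1]}{\Pr[S=s\mid B_{1/2}=0]}=\dfrac{\Pr[\mathcal{A}(D_1)=s]}{\Pr[\mathcal{A}(D_0)=s]}$. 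Since $\Pr[B_{1/2}=0\mid S=s]$ is a decreasing function of $r(s)$, bounding $\Pr[B_{1/2}=0\mid S=s]$ from below and above is equivalent to bounding $r(s)$ from above and below respectively.

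Next I would invoke $\epsilon$-\mbox{\upshape\sffamily DP} of $\mathcal{A}$ applied to the adjacent pair $D_0,D_1$. Taking the measurable set $R=\{s\}$ (in the discrete case; in general a shrinking neighbourhood of $s$, or working directly with densities) the definition with $\delta=0$ gives $\Pr[\mathcal{A}(D_1)=s]\le e^{\epsilon}\Pr[\mathcal{A}(D_0)=s]$, i.e. $r(s)\le e^{\epsilon}$; applying the definition with the roles of $D_0$ and $D_1$ swapped gives $\Pr[\mathcal{A}(D_0)=s]\le e^{\epsilon}\Pr[\mathcal{A}(D_1)=s]$, i.e. $r(s)\ge e^{-\epsilon}$. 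Substituting $r(s)\le e^{\epsilon}$ into $1/(1+r(s))$ yields $\Pr[B_{1/2}=0\mid S=s]\ge \frac{1}{1+e^{\epsilon}}$, and substituting $r(s)\ge e^{-\epsilon}$ yields $\Pr[B_{1/2}=0\mid S=s]\le\frac{1}{1+e^{-\epsilon}}=\frac{e^{\epsilon}}{1+e^{\epsilon}}$, which is exactly the claimed two-sided bound.

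The only genuine subtlety — and the step I would treat most carefully — is the conditioning on the event $\{S=s\}$, which has probability zero when $S$ is continuous; here, though, $S$ takes values in $\mathbb{Z}$ (the mechanisms in this paper are discrete, and $\mathcal{D}\subseteq\widehat{\mathcal{D}}=\mathbb{Z}$), so $\{S=s\}$ has positive probability whenever $s$ is in the support, and the elementary Bayes computation above is rigorous. One should also note the degenerate case where $\Pr[S=s\mid B_{1/2}=0]$ and $\Pr[S=s\mid B_{1/2}=1]$ could both be zero: such $s$ simply lies outside the support of $S$ and is never realised in $\mbox{\upshape\sffamily Exp}_1$ or $\mbox{\upshape\sffamily Exp}_2$, so it may be excluded, and whenever at least one of the two is positive, $\epsilon$-\mbox{\upshape\sffamily DP} forces the other to be positive as well, so $r(s)$ is well defined. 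With that observation in place the proof is a two-line application of Bayes' rule plus the \mbox{\upshape\sffamily DP} inequality in both directions.
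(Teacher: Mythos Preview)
Your proof is correct and follows exactly the same approach as the paper: apply Bayes' rule to write $p=\Pr[B_{1/2}=0\mid S=s]$ in terms of the likelihood ratio, then bound that ratio by $e^{\pm\epsilon}$ using the $\epsilon$-\textsf{DP} inequality in both directions. Your additional remarks on the discrete support and the degenerate case are sound and merely make explicit what the paper leaves implicit.
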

\begin{proof} We bound the probability $p=\Pr[B_{1/2}=0| S=s]$ for the biased coin in $\mbox{\upshape \sffamily Exp}_2$. Since the generation of $S$ was performed by $\mathcal{A}$, we have
\begin{align*} e^{-\epsilon}\cdot\Pr[S=s| B_{1/2}=0] & \leq \Pr[S=s| B_{1/2}=1]\\
& \leq e^\epsilon\cdot\Pr[S=s| B_{1/2}=0].
\end{align*}
By the Bayes rule we get
\[p_{min}:=\frac{1}{1+e^\epsilon}\leq p\leq\frac{e^\epsilon}{1+e^\epsilon}=:p_{max}.\]
\hfill$\qed$\end{proof}

\noindent\textit{The Reduction.} With Lemma \ref{EXPEQUIV} in mind, we can show that playing \textbf{game}~$\boldsymbol 0$ is equivalent to playing \textbf{game}~$\boldsymbol 1_{\boldsymbol B}$.

\begin{Lem}\label{gamezeroPone} Let $\kappa$ be a security parameter. Let $\mathcal{T}_0$ be an adversary in \textbf{\upshape game} $\boldsymbol 0$. 
Let $B_{1/2}$ denote the random variable describing the challenge bit $b$ in \textbf{\upshape game} $\boldsymbol 0$ and let $S$ denote the random variable describing the aggregate of $(x_{i,j^*}^{[B_{1/2}]})_{i\in U}$. Let $p$ be the probability of $B_{1/2}=0$ given the choice of $S$ and let $P=\max\{p,1-p\}$. Then for $a,b=0,1$, under the assumptions of Lemma $\ref{exp2efficient}$, there exists an adversary $\mathcal{T}_{1,B}$ in \textbf{\upshape game}~$\boldsymbol 1_{\boldsymbol B}$ with 
\[\Pr[\mathcal{T}_{1,B}=a, B_p=b]=\Pr[\mathcal{T}_0=a, B_{1/2}=b].\footnote{This also shows that $\mathcal{T}_{1,B}$ has the same success probability as $\mathcal{T}_0$.}\] 
\end{Lem}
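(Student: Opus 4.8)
The plan is to construct the adversary $\mathcal{T}_{1,B}$ for \textbf{game}~$\boldsymbol 1_{\boldsymbol B}$ by simulating \textbf{game}~$\boldsymbol 0$ for $\mathcal{T}_0$, but producing the two challenge plaintext collections for the biased challenger by running the perturbation process locally and then re-routing through the equivalence of $\mbox{\upshape\sffamily Exp}_1$ and $\mbox{\upshape\sffamily Exp}_2$. Concretely, $\mathcal{T}_{1,B}$ forwards the Setup, the set $U$, and all encryption queries between its own \textbf{game}~$\boldsymbol 1_{\boldsymbol B}$-challenger and $\mathcal{T}_0$ verbatim. When $\mathcal{T}_0$ submits its two adjacent tuples $(d^{[0]}_{i,j^*})_{i\in U}$ and $(d^{[1]}_{i,j^*})_{i\in U}$, the adversary identifies these with the adjacent databases $D_0,D_1$ restricted to $U$ and needs to produce two collections with equal aggregate for its challenger, together with the bias $p$.

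The key move is to use $\mbox{\upshape\sffamily Exp}_2$: first $\mathcal{T}_{1,B}$ samples an aggregate $s$ according to the distribution of $S=\mathcal{A}(D_{B_{1/2}})$ (Step~1 of $\mbox{\upshape\sffamily Exp}_2$ — this is efficient by Lemma~\ref{exp2efficient}), computes the posterior $p=\Pr[B_{1/2}=0\mid S=s]$ (efficiently computable, and bounded away from $0$ and $1$ by Lemma~\ref{pminmax}), and sends $p$ to its challenger. It then runs Step~3 of $\mbox{\upshape\sffamily Exp}_2$ \emph{twice}: once conditioning on $b=0$ to obtain a noisy collection $(x^{[0]}_{i,j^*})_{i\in U}$ with $\sum_{i\in U}x^{[0]}_{i,j^*}=s$, and once conditioning on $b=1$ to obtain $(x^{[1]}_{i,j^*})_{i\in U}$ with the same aggregate $s$. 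These two collections are distinct (generically) and have equal aggregate, so they are a legal challenge for \textbf{game}~$\boldsymbol 1_{\boldsymbol B}$; the biased challenger picks $b'=0$ with probability $p$ and returns $(\mbox{\upshape\sffamily PSAEnc}_{s_i}(t_{j^*},x^{[b']}_{i,j^*}))_{i\in U}$. By construction of $\mbox{\upshape\sffamily Exp}_2$ and Lemma~\ref{EXPEQUIV}, the joint law of (challenge bit $b'$, challenge plaintexts, aggregate) produced here is exactly the joint law of $(B_{1/2},\textbf{X}_1,S)$ from $\mbox{\upshape\sffamily Exp}_1$, i.e.\ exactly the distribution of (challenge bit, noisy plaintexts, aggregate) that $\mathcal{T}_0$ sees in \textbf{game}~$\boldsymbol 0$. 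Finally $\mathcal{T}_{1,B}$ forwards the challenge ciphertexts to $\mathcal{T}_0$, relays the post-challenge queries, and outputs whatever guess $\mathcal{T}_0$ outputs.

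It then remains to verify the identity $\Pr[\mathcal{T}_{1,B}=a,\,B_p=b]=\Pr[\mathcal{T}_0=a,\,B_{1/2}=b]$. This follows because: (i) the view of $\mathcal{T}_0$ inside the simulation (Setup, query answers, challenge ciphertexts, later query answers) is distributed identically to its view in a genuine \textbf{game}~$\boldsymbol 0$ — the only nontrivial point is the challenge ciphertexts, and there the $\mbox{\upshape\sffamily Exp}_1$/$\mbox{\upshape\sffamily Exp}_2$ equivalence gives that $(x^{[B_p]}_{i,j^*})_{i\in U}$ has exactly the law of a noisy version of $d^{[B_{1/2}]}_{i,j^*}$ conditioned consistently with the aggregate; (ii) $\mathcal{T}_{1,B}$'s output equals $\mathcal{T}_0$'s output and the bit $B_p$ of the biased challenger equals the challenge bit $\mathcal{T}_0$ effectively faces. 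Marginalising over $b$ gives equal success probabilities (the footnote), and taking $P=\max\{p,1-p\}$ — which by Lemma~\ref{pminmax} lies in $[1/2,\,e^\epsilon/(1+e^\epsilon)]$ — lets us read this as a success probability against the $P$-biased game.

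The main obstacle I expect is the bookkeeping around the two-fold invocation of Step~3 of $\mbox{\upshape\sffamily Exp}_2$: one must argue that sampling $s$ once and then generating \emph{both} conditional collections (for $b=0$ and for $b=1$) off that single $s$, and only afterwards letting the challenger's biased coin select which one is encrypted, reproduces the $\mbox{\upshape\sffamily Exp}_1$ joint distribution rather than some coupled variant. This is precisely what Lemma~\ref{EXPEQUIV} is designed to handle — the factorisation $\Pr[B_p,\textbf{X}_2,S]=\Pr[B_{1/2}\mid S]\Pr[\textbf{X}_1\mid B_{1/2},S]\Pr[S]$ says that it is legitimate to sample $S$ first, then the bit from the posterior, then the plaintexts from the conditional — so the argument reduces to invoking that lemma (and its remark about marginals) at the right moment, plus checking that the simulated pre-challenge and post-challenge phases are verbatim copies and hence contribute no distortion.
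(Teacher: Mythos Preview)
Your proposal is correct and follows essentially the same construction as the paper's own proof: forward Setup and queries verbatim, sample $s$ via Step~1 of $\mbox{\upshape\sffamily Exp}_2$, compute the posterior $p$, run Step~3 of $\mbox{\upshape\sffamily Exp}_2$ once for each value of $b$ to produce two plaintext collections with common aggregate $s$, submit $(p,j^*,(x^{[0]}),(x^{[1]}))$ to the biased challenger, and output $\mathcal{T}_0$'s guess. Your identification of Lemma~\ref{EXPEQUIV} (and its marginals remark) together with Lemma~\ref{exp2efficient} as the two ingredients that make the simulation perfect matches the paper exactly; the reference to Lemma~\ref{pminmax} is not strictly needed here (it enters only in the subsequent $\mbox{\upshape\sffamily CDP}$ argument), but it does no harm.
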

\begin{proof} We construct a successful adversary $\mathcal{T}_{1,B}$ in \textbf{game}~$\boldsymbol 1_{\boldsymbol B}$ using $\mathcal{T}_0$ as follows:
\noindent\begin{description}
\item\textbf{Setup.} Receive $\kappa, \mbox{\upshape\sffamily pp},T,s_0$ from the \textbf{game}~$\boldsymbol 1_{\boldsymbol B}$-challenger and send it to $\mathcal{T}_0$. Receive $U=\{i_1,\ldots,i_u\}\subseteq[n]$ from $\mathcal{T}_0$ and send it to the challenger. Forward the obtained response $(s_i)_{i\in[n]\setminus U}$ to $\mathcal{T}_0$.
\item\textbf{Queries.} Forward $\mathcal{T}_0$'s queries $(i,j,d_{i,j})$ with $i\in U, j\in[\lambda], d_{i,j}\in\mathcal{D}$ to the challenger, forward the response $c_{i,j}$ to $\mathcal{T}_0$.
\item\textbf{Challenge.} $\mathcal{T}_0$ chooses $j^*\in\{1,\ldots,\lambda\}$ such that no encryption query in $j^*$ was made and queries two adjacent tuples $(d_{i,j^*}^{[0]})_{i\in U},(d_{i,j^*}^{[1]})_{i\in U}$. Choose a realisation $s$ of $S$ according to $\mbox{\upshape \sffamily Exp}_2$. Set $p=\Pr[B_{1/2}=0| S=s]$ and choose $(x_{i,j^*}^{[0]})_{i\in U}, (x_{i,j^*}^{[1]})_{i\in U}$ with probability 
\[\Pr[\textbf{X}_2=(x_{i,j^*}^{[0]})_{i\in U}], \Pr[\textbf{X}_2=(x_{i,j^*}^{[1]})_{i\in U}]\] 
respectively according to $\mbox{\upshape \sffamily Exp}_2$. Send $p,j^*,(x_{i,j^*}^{[0]},x_{i,j^*}^{[1]})_{i\in U}$ to the challenger. Obtain the response $(c_{i,j^*})_{i\in U}$ and forward it to $\mathcal{T}_0$. 
\item\textbf{Queries.} $\mathcal{T}_0$ can make the same type of queries as before with the restriction that no encryption query in $j^*$ can be made.
\item\textbf{Guess.} $\mathcal{T}_0$ gives a guess about the encrypted database. Output the same guess.
\end{description}
The rules of \textbf{game}~$\boldsymbol 1_{\boldsymbol B}$ are preserved, since $\mathcal{T}_{1,B}$ sends two tuples of the same aggregate $s$ to its challenger. On the other hand, since the ciphertexts generated by the challenger are determined by the challenge bit and the collection $(x_{i,j^*}^{[b]})_{i\in U}$, the rules of \textbf{game}~$\boldsymbol 0$ are preserved by Lemma \ref{EXPEQUIV} (the triple $(b,(x_{i,j^*}^{[b]})_{i\in U},s)$ is chosen according to $\mbox{\upshape \sffamily Exp}_2$, which can be performed efficiently by Lemma \ref{exp2efficient}). Therefore $\mathcal{T}_{1,B}$ perfectly simulates \textbf{game}~$\boldsymbol 0$.
\hfill$\qed$\end{proof}

\subsubsection{Proof of \mbox{\upshape\sffamily CDP}}

We have shown that no ppt adversary can win \textbf{game}~$\boldsymbol 1_{\boldsymbol B}$ if the underlying PSA scheme is secure and that \textbf{game}~$\boldsymbol 1_{\boldsymbol B}$ is equivalent to \textbf{game}~$\boldsymbol 0$. If the perturbation process in \textbf{game}~$\boldsymbol 0$ preserves $\epsilon$-{\sffamily DP}, then the whole construction provides $\epsilon$-\mbox{\upshape\sffamily CDP}, as we show now.


\begin{Thm}[$\mbox{\upshape\sffamily DP}$ and $\mbox{\upshape\sffamily AO}1$ give $\mbox{\upshape\sffamily CDP}$]\label{cdptheorem} Let $\mathcal{A}$ be a randomised mechanism that gets as inpout some database $D=(d_1,\ldots,d_n)\in\mathcal{D}^n$, generates some database $X=X(D)=(x_1(d_1),\ldots,x_n(d_n))=(x_1,\ldots,x_n)$ and outputs $s=\sum_{i=1}^n x_i$, such that $\epsilon$-\mbox{\upshape\sffamily DP} for $D$ is preserved. Let $\Sigma$ be a PSA scheme that gets as input values $x_1,\ldots,x_n$, outputs ciphers $c_1=c_1(x_1),\ldots,c_n=c_n(x_n)$ and $s=\sum_{i=1}^n x_i$ and achieves $\mbox{\upshape\sffamily AO}1$. Then the composition of $\Sigma$ with $\mathcal{A}$ achieves $\mbox{\upshape\sffamily AO}1$ and preserves $\epsilon$-\mbox{\upshape\sffamily CDP}.
\end{Thm}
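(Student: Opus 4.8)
The plan is to decompose the output that the composition hands to the aggregator, $\mathsf{view}(D)=\big((c_i(x_i))_{i=1}^n,\ s\big)$ with $(x_1,\dots,x_n)=X(D)$ and $s=\sum_i x_i$, into the aggregate $s$ — which is exactly the output of the $\epsilon$-\mbox{\upshape\sffamily DP} mechanism $\mathcal A$ — and the ciphers, and to use the $\mbox{\upshape\sffamily AO}1$-security of $\Sigma$ to show the ciphers reveal nothing beyond $s$ to a ppt observer. The $\mbox{\upshape\sffamily AO}1$ claim for the composition needs no real work: $\mathcal A$ only perturbs each $d_i$ locally before the $x_i$ are fed to $\Sigma$, while the $\mbox{\upshape\sffamily AO}1$ game is played over the plaintexts handed to $\Sigma$, so an attacker on the composition is already an attacker on $\Sigma$. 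The content is the $\epsilon$-\mbox{\upshape\sffamily CDP} claim.

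Fix an ordered pair of adjacent databases $D_0,D_1$ and a ppt distinguisher $\mathcal D_{\mbox{\scriptsize\upshape\sffamily CDP}}$; set $\pi_b=\Pr[\mathcal D_{\mbox{\scriptsize\upshape\sffamily CDP}}(1^\kappa,\mathsf{view}(D_b))=1]$. I would introduce an \emph{ideal} view $\widetilde{\mathsf{view}}(D_b)$ obtained by first drawing the aggregate $s\leftarrow\mathcal A(D_b)$ as the real mechanism does, then drawing a dummy collection $(x_i')_i\leftarrow\mathcal U(\widehat{\mathcal D}^n\mid\sum_i x_i'=s)$ with no reference to $b$, encrypting the $x_i'$ under $\Sigma$, and outputting those ciphers together with $s$; write $\widetilde\pi_b$ for the corresponding acceptance probability. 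Because, conditioned on any realised $s$, the genuine collection $X(D_b)$ and the dummy collection both aggregate to exactly $s$, a ppt distinguisher separating $\mathsf{view}(D_b)$ from $\widetilde{\mathsf{view}}(D_b)$ would yield — via the reduction that samples $s\leftarrow\mathcal A(D_b)$, submits the conditioned genuine collection and the dummy collection (of equal aggregate $s$) as the challenge pair, and runs $\mathcal D_{\mbox{\scriptsize\upshape\sffamily CDP}}$ on the returned ciphers and $s$ — a ppt adversary with non-negligible advantage in the $\mbox{\upshape\sffamily AO}1$ game, contradicting security of $\Sigma$. The conditioned genuine collection is sampled efficiently by the sequential procedure of Lemma~\ref{exp2efficient}; this reduction is the ``unbiased'' face of the biased game of Lemmas~\ref{gamePoneone}~and~\ref{gamezeroPone}, which one would need instead if one compared $\mathsf{view}(D_0)$ to $\mathsf{view}(D_1)$ directly, their aggregates being unequal. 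Hence $|\pi_b-\widetilde\pi_b|\le\mbox{\upshape\sffamily neg}(\kappa)$ for $b=0,1$.

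The second step is information-theoretic. In $\widetilde{\mathsf{view}}(D_b)$ everything other than $s$ is produced from fresh coins independent of $b$, so $\widetilde{\mathsf{view}}(D_b)=F(s;r)$ for a randomised map $F$ with $s\leftarrow\mathcal A(D_b)$ and $r$ independent; hence, with $h(s)=\Pr_r[\mathcal D_{\mbox{\scriptsize\upshape\sffamily CDP}}(1^\kappa,F(s;r))=1]\in[0,1]$, we get $\widetilde\pi_b=\mathbb E_{s\leftarrow\mathcal A(D_b)}[h(s)]$, and integrating the $\epsilon$-\mbox{\upshape\sffamily DP} guarantee of $\mathcal A$ over the level sets of $h$ gives $\widetilde\pi_0\le e^\epsilon\widetilde\pi_1$. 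Chaining, $\pi_0\le\widetilde\pi_0+\mbox{\upshape\sffamily neg}\le e^\epsilon\widetilde\pi_1+\mbox{\upshape\sffamily neg}\le e^\epsilon\pi_1+\mbox{\upshape\sffamily neg}$, i.e.\ $\epsilon$-\mbox{\upshape\sffamily CDP}.

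The point I expect to be delicate is keeping the bound \emph{multiplicative}: a reduction that merely forwards $\mathcal D_{\mbox{\scriptsize\upshape\sffamily CDP}}$'s bit and plays the adjacent-database \textbf{game}~$\boldsymbol 0$ only detects $\pi_0-\pi_1$, and bounding that through the biased game gives at best $\pi_0-\pi_1\le 2\,\mathbb E_s[\max(p(s),1-p(s))]-1=||\mathcal A(D_0)-\mathcal A(D_1)||_{\mathrm{TV}}$ (with $p(s)=\Pr[B_{1/2}=0\mid S=s]$, pinned to $[\tfrac1{1+e^\epsilon},\tfrac{e^\epsilon}{1+e^\epsilon}]$ by Lemma~\ref{pminmax}), which is not enough to conclude $\pi_0\le e^\epsilon\pi_1+\mbox{\upshape\sffamily neg}$ when $\pi_1$ is comparable to the negligible slack. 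The cure is precisely to carry out the real-versus-ideal step \emph{conditioned on each realised aggregate}, as above, so that the $e^\epsilon$ from $\mathcal A$ enters inside $\mathbb E_s[\cdot]$ and survives the composition; this is also why $\mbox{\upshape\sffamily Exp}_2$ resamples the collection subject to $S=s$. The remaining points are routine: efficiency of the conditional sampling and of computing $p(s)$ (Lemma~\ref{exp2efficient} and the Note after it), restricting everything to the good event on which all $x_i\in\widehat{\mathcal D}$, and folding the $(1-\gamma)n$ compromised users — who add no noise, hence contribute a public constant to the aggregate — into $\mathcal A$.
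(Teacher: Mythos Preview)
Your argument is correct, and it takes a genuinely different route from the paper's own proof.

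The paper never introduces an ideal view. Instead it works entirely through the biased game: it identifies the \mbox{\upshape\sffamily CDP}-distinguisher with an adversary $\mathcal T_0$ in \textbf{game}~$\boldsymbol 0$, invokes Lemma~\ref{gamezeroPone} to convert $\mathcal T_0$ into an adversary $\mathcal T_{1,B}$ in \textbf{game}~$\boldsymbol 1_{\boldsymbol B}$ with identical joint law $(\text{output},\text{bit})$, then uses Lemma~\ref{gamePoneone} (biased-security from $\mbox{\upshape\sffamily AO}$) to conclude $\Pr[\mathcal T_{1,B}=a\mid B_p=1]=\Pr[\mathcal T_{1,B}=a\mid B_p=0]+\mbox{\upshape\sffamily neg}(\kappa)$, and finally applies Lemma~\ref{pminmax} to convert the ratio $p_{\max}/p_{\min}$ into the factor $e^\epsilon$. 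So the paper compares $\mathsf{view}(D_0)$ to $\mathsf{view}(D_1)$ \emph{directly}, and the multiplicative $e^\epsilon$ enters through the posterior bounds on $p(s)$ rather than through post-processing.

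Your decomposition is the more standard simulation-style hybrid: you compare each $\mathsf{view}(D_b)$ to a simulated $\widetilde{\mathsf{view}}(D_b)$ that depends on $D_b$ only through $s=\mathcal A(D_b)$ (two applications of the \emph{unbiased} $\mbox{\upshape\sffamily AO}$ game, both legal because the genuine and dummy tuples share the aggregate $s$), and then invoke $\epsilon$-\mbox{\upshape\sffamily DP} of $\mathcal A$ plus post-processing on the simulated views. This cleanly separates the cryptographic step from the information-theoretic one and avoids the biased-coin machinery altogether; the only piece of the paper's infrastructure you actually need is the efficient conditional sampler of Lemma~\ref{exp2efficient} (to produce the genuine tuple given $S=s$ inside the reduction). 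Your remark about why a naive reduction through \textbf{game}~$\boldsymbol 0$ only controls total variation, and why conditioning on $s$ is what rescues the multiplicative form, is exactly the issue the paper's biased-game apparatus is built to handle---you simply solve it by a different (and arguably more transparent) rearrangement of the same ingredients.
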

\begin{proof} Then for all ppt adversaries $\mathcal{T}_{1,B}$ in \textbf{game}~$\boldsymbol 1_{\boldsymbol B}$ and $\mathcal{D}_{\mbox{\scriptsize\upshape\sffamily CDP}}=\mathcal{T}_0$ in \textbf{game}~$\boldsymbol 0$, it holds for $a=0,1$:
\begin{align*}
 \Pr[\mathcal{D}_{\mbox{\scriptsize\upshape\sffamily CDP}}=a, B_{1/2}=1] & = \Pr[\mathcal{T}_{1,B}=a, B_p=1]\numberthis\label{eqcdp1}\\
 & \leq p_{max}\cdot\Pr[\mathcal{T}_{1,B}=a| B_p=1]\\
 & = p_{max}\cdot\Pr[\mathcal{T}_{1,B}=a| B_p=0]+\mbox{\upshape\sffamily neg}(\kappa)\numberthis\label{eqone2}\\
 & \leq \frac{p_{max}}{p_{min}}\cdot\Pr[\mathcal{T}_{1,B}=a, B_p=0]+\mbox{\upshape\sffamily neg}(\kappa)\\
 & = e^\epsilon\cdot\Pr[\mathcal{T}_{1,B}=a, B_p=0]+\mbox{\upshape\sffamily neg}(\kappa)\numberthis\label{eqp}\\
 & = e^\epsilon\,\,\cdot\Pr[\mathcal{D}_{\mbox{\scriptsize\upshape\sffamily CDP}}=a, B_{1/2}=0]+\mbox{\upshape\sffamily neg}(\kappa).\numberthis\label{eqcdp2}
\end{align*}
Equation \eqref{eqone2} holds because of Lemma \ref{gamePoneone}, Equation \eqref{eqp} holds because of Lemma \ref{pminmax} and Equations \eqref{eqcdp1} and \eqref{eqcdp2} hold because of Lemma \ref{gamezeroPone}. 
It follows that
\[\Pr[\mathcal{D}_{\mbox{\scriptsize\upshape\sffamily CDP}}=a| B_{1/2}=1]\leq e^\epsilon\cdot \Pr[\mathcal{D}_{\mbox{\scriptsize\upshape\sffamily CDP}}=a| B_{1/2}=0]+\mbox{\upshape\sffamily neg}(\kappa).\]
\hfill$\qed$\end{proof}

As mentioned in the beginning of this section, we consider a mechanism preserving $\epsilon$-\mbox{\upshape\sffamily DP} given some event $Good$. Therefore, also Theorem \ref{cdptheorem} applies to this mechanism given $Good$. Accordingly, the mechanism \textit{unconditionally} preserves $(\epsilon,\delta)$-\mbox{\upshape\sffamily CDP}, where $\delta$ is a bound on the probability that $Good$ does not happen.

\section{A Weak PRF for \mbox{\upshape\sffamily CDP} based on DLWE}\label{lwepsasec}

We are ready to show how Theorem \ref{PSATHEOREM} contributes to build a prospective post-quantum secure PSA scheme for differentially private data analyses with a relatively high accuracy. Concretely, we can build a secure PSA scheme from the DLWE assumption with errors sampled according to the symmetric Skellam distribution. These errors automatically provide enough noise to preserve \mbox{\upshape\sffamily DP}.

\subsection{The Skellam Mechanism for Differential Privacy}\label{dpmech}

In this section we recall the geometric mechanism from \cite{52} and the binomial mechanism from \cite{14} and introduce the Skellam mechanism. Since these mechanisms make use of a discrete probability distribution, they are well-suited for an execution through a secure PSA scheme, thereby preserving \mbox{\upshape\sffamily CDP} as shown in the last section.

\begin{Def}[Symmetric Skellam Distribution \cite{29}] Let $\mu> 0$. A discrete random variable $X$ is drawn according to the symmetric Skellam distribution with parameter $\mu$ (short: $X\leftarrow\text{\upshape Sk}(\mu)$) if its probability distribution function $\psi_{\mu}\colon\mathbb{Z}\mapsto\mathbb{R}$ is $\psi_{\mu}(k)=e^{-\mu}I_k(\mu)$, where $I_k$ is the modified Bessel function of the first kind (see \cite{28}).
\end{Def}

A random variable $X\leftarrow\text{\upshape Sk}(\mu)$ can be generated as the difference of two Poisson variables with mean $\mu$, (see \cite{29}) and is therefore efficiently samplable. We use the fact that the sum of independent Skellam random variables is a Skellam random variable.

\begin{Lem}[Reproducibility of $\text{\upshape Sk}(\mu)$ \cite{29}]\label{sksum} Let $X\leftarrow\text{\upshape Sk}(\mu_1)$ and $Y\leftarrow\text{\upshape Sk}(\mu_2)$ be independent random variables. Then $Z:=X+Y$ is distributed according to $\text{\upshape Sk}(\mu_1+\mu_2)$.
\end{Lem}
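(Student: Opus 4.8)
The plan is to prove the reproducibility of the symmetric Skellam distribution by the standard characteristic-function (or probability generating function) argument, exploiting the representation of a Skellam variable as a difference of two independent Poisson variables that was just recalled. First I would recall that if $X \leftarrow \text{Sk}(\mu_1)$ then $X$ has the same distribution as $P_1 - Q_1$ where $P_1, Q_1$ are independent $\text{Poisson}(\mu_1)$ variables, and similarly $Y \leftarrow \text{Sk}(\mu_2)$ is distributed as $P_2 - Q_2$ with $P_2, Q_2$ independent $\text{Poisson}(\mu_2)$; moreover all four variables can be taken mutually independent since $X$ and $Y$ are independent. Then $Z = X + Y$ has the distribution of $(P_1 + P_2) - (Q_1 + Q_2)$.

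Next I would invoke the well-known reproducibility of the Poisson distribution: $P_1 + P_2 \leftarrow \text{Poisson}(\mu_1 + \mu_2)$ and $Q_1 + Q_2 \leftarrow \text{Poisson}(\mu_1 + \mu_2)$, and these two sums are independent (as functions of disjoint sets of independent variables). Hence $Z$ is the difference of two independent $\text{Poisson}(\mu_1 + \mu_2)$ variables, which is by definition $\text{Sk}(\mu_1 + \mu_2)$. Alternatively, and perhaps more self-contained, I would compute the characteristic function directly: using $\E[e^{itP}] = \exp(\mu(e^{it}-1))$ for $P \leftarrow \text{Poisson}(\mu)$, one gets $\E[e^{itX}] = \exp(\mu_1(e^{it} + e^{-it} - 2)) = \exp(-2\mu_1(1 - \cos t))$, so by independence $\E[e^{itZ}] = \E[e^{itX}]\E[e^{itY}] = \exp(-2(\mu_1+\mu_2)(1-\cos t))$, which is exactly the characteristic function of $\text{Sk}(\mu_1+\mu_2)$; uniqueness of characteristic functions then finishes the proof. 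One could equally verify this at the level of the mass function via the series identity $\sum_{k} I_k(\mu_1) I_{n-k}(\mu_2) = I_n(\mu_1+\mu_2)$ (an addition formula for modified Bessel functions), but that is just the convolution computation dressed up and the characteristic-function route avoids it.

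I do not expect any real obstacle here — the statement is a classical fact and every ingredient (Poisson reproducibility, the Poisson characteristic function, uniqueness theorem for characteristic functions) is textbook material. The only point requiring a word of care is the joint independence of the four Poisson variables underlying $X$ and $Y$: one should note that the representation $X \overset{d}{=} P_1 - Q_1$ can be realized on a product space, so that combining it with the independent copy for $Y$ yields four mutually independent Poissons, which legitimizes the grouping $(P_1+P_2)$ and $(Q_1+Q_2)$. With that remark in place the proof is a two-line computation, and I would present the characteristic-function version for brevity, citing \cite{29} for the difference-of-Poissons representation already mentioned in the text.
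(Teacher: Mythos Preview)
Your proof is correct in approach and essentially complete; the paper itself does not prove this lemma at all but simply cites \cite{29}, so there is no paper proof to compare against. Your argument via the difference-of-Poissons representation together with Poisson reproducibility (or equivalently via characteristic functions) is exactly the standard route.

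One small parameterisation slip: in the paper's convention $\text{Sk}(\mu)=\text{Sk}(\mu/2,\mu/2)$ (see the appendix), so $X\leftarrow\text{Sk}(\mu_1)$ is the difference of two independent $\text{Poisson}(\mu_1/2)$ variables, not $\text{Poisson}(\mu_1)$. Consequently the characteristic function is $\E[e^{itX}]=\exp(-\mu_1(1-\cos t))$, not $\exp(-2\mu_1(1-\cos t))$; this is consistent with the moment generating function $\E[e^{tX}]=e^{-\mu(1-\cosh t)}$ that the paper uses later in the proof of Lemma~\ref{SKELLAMBOUND}. With that factor corrected, your computation goes through verbatim and yields $\E[e^{itZ}]=\exp(-(\mu_1+\mu_2)(1-\cos t))$, which is the characteristic function of $\text{Sk}(\mu_1+\mu_2)$ as required.
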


An induction step shows that the sum of $n$ i.i.d. symmetric Skellam variables with variance $\mu$ is a symmetric Skellam variable with variance $n\mu$. The proofs of the following two Theorems are based on standard concentration inequalities and are provided in Section \ref{skellamsec} of the appendix.

\begin{Thm}[Skellam Mechanism]\label{privthm} Let $\epsilon>0$. For every database $D\in\mathcal{D}^n$ and query $f$ with sensitivity $S(f)$ the randomised mechanism $\mathcal{A}(D):=f(D)+Y$ preserves $(\epsilon,\delta)$-\mbox{\upshape\sffamily DP}, if $Y\leftarrow\text{\upshape Sk}(\mu)$ with 
\[\mu=\frac{\log(1/\delta)+\epsilon}{1-\cosh(\epsilon/S(f))+(\epsilon/S(f))\cdot\sinh(\epsilon/S(f))}.\]
\end{Thm}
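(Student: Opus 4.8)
The plan is to bound the privacy loss random variable of the mechanism $\mathcal{A}(D) = f(D) + Y$ with $Y \leftarrow \text{Sk}(\mu)$ and show that its deviation above $\epsilon$ occurs with probability at most $\delta$. First I would reduce to the one-dimensional case: since $S(f)$ is the $\ell_1$-sensitivity, it suffices to consider adjacent databases $D_0, D_1$ with $|f(D_0) - f(D_1)| \le S(f)$, and by monotonicity the worst case is $f(D_0) - f(D_1) = S(f) =: \Delta$ (an integer, since $f$ is integer-valued here). The privacy loss at output $y+k$ is then
\[
L(k) = \ln\frac{\psi_\mu(k)}{\psi_\mu(k+\Delta)} = \ln\frac{I_k(\mu)}{I_{k+\Delta}(\mu)},
\]
using $\psi_\mu(k) = e^{-\mu} I_k(\mu)$. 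The standard route (as in the Dwork--Roth treatment of $(\epsilon,\delta)$-DP) is: if $\Pr_{k \leftarrow \text{Sk}(\mu)}[L(k) > \epsilon] \le \delta$, then $(\epsilon,\delta)$-DP holds, because on the complement event the likelihood ratio is bounded by $e^\epsilon$ pointwise and the bad event contributes at most $\delta$.

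The core of the argument is a Chernoff/moment-generating-function bound on $L(k)$. The key identity I would exploit is the moment generating function of the Skellam distribution: for $Z \leftarrow \text{Sk}(\mu)$,
\[
\E[e^{tZ}] = \exp\bigl(\mu(\cosh t - 1)\bigr) = \exp\bigl(-\mu + \mu\cosh t\bigr).
\]
One needs a clean upper bound on the ratio $I_k(\mu)/I_{k+\Delta}(\mu)$; the natural one, following the ``ratio of Bessel functions'' estimates, is something like $I_k(\mu)/I_{k+\Delta}(\mu) \le e^{(\Delta/\mu)\, k + c(\Delta,\mu)}$ for an explicit constant, so that $L(k)$ is controlled linearly in $k$. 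Substituting the linear bound into a Markov/Chernoff argument: for the optimal tilt parameter $t$,
\[
\Pr[L(k) > \epsilon] \le e^{-\epsilon t}\,\E[e^{t L(k)}] \le e^{-\epsilon t}\exp\bigl(-\mu + \mu \cosh(\Delta t/\mu) + (\text{linear correction})\bigr),
\]
and then choosing $t$ to make the exponent equal to the stated expression. Setting $\alpha = \epsilon/S(f) = \epsilon/\Delta$ and optimizing should produce exactly the denominator $1 - \cosh(\alpha) + \alpha\sinh(\alpha)$ together with the numerator $\log(1/\delta) + \epsilon$, by equating $-\log\delta - \epsilon$ with $\mu\bigl(1 - \cosh\alpha + \alpha\sinh\alpha\bigr)$. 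Note $1 - \cosh\alpha + \alpha\sinh\alpha \ge 0$ with equality only at $\alpha = 0$ (its Taylor expansion is $\alpha^2/2 + O(\alpha^4) \ge 0$), so $\mu$ is well-defined and positive; this also explains why $\mu \approx 2S(f)^2\log(1/\delta)/\epsilon^2$ for small $\epsilon$, matching the Gaussian-type scaling that later yields the $O(S(f)/\epsilon \cdot \log(1/\delta))$ error bound.

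I expect the main obstacle to be the Bessel-function manipulation: obtaining a sufficiently tight and clean bound on the log-ratio $\ln\bigl(I_k(\mu)/I_{k+\Delta}(\mu)\bigr)$ that is linear in $k$ with controlled constants, so that the Chernoff step goes through and the exponent collapses to precisely the claimed closed form. Here I would either invoke known monotonicity/log-convexity properties of $k \mapsto I_k(\mu)$ (the sequence $I_k(\mu)$ is log-concave in $k$, which gives $I_k I_{k+2\Delta} \le I_{k+\Delta}^2$ and hence usable ratio bounds), or, more robustly, bypass pointwise Bessel estimates entirely by writing $Y$ as the difference $P_1 - P_2$ of two independent $\text{Poisson}(\mu/2)$-ish variables and arguing privacy through the Poisson structure together with the MGF computation above --- whichever yields the exact constants with least pain. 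Everything else (the reduction to $\Delta = S(f)$, the union-type bound giving the additive $\delta$, the final substitution) is routine. The remaining details of the Chernoff optimization and the concentration inequalities are deferred to Section~\ref{skellamsec} of the appendix as indicated.
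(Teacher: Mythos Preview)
Your high-level plan matches the paper's proof: bound the likelihood ratio pointwise to identify a ``good'' region for $Y$, then control the bad tail via the Skellam MGF/Chernoff bound. Where you are uncertain --- the Bessel step --- the paper does \emph{not} use a linear bound on $L(k)$; instead it telescopes $\psi_\mu(k)/\psi_\mu(k+S(f))=\prod_{j=1}^{S(f)} I_{k+j-1}(\mu)/I_{k+j}(\mu)$ and applies the Amos-type inequality (Lemma~\ref{modbesrat})
\[
\frac{I_{k-1}(\mu)}{I_{k}(\mu)}<\frac{\mu}{-k+\sqrt{k^2+\mu^2}},
\]
from which each factor is $\le e^{\epsilon/S(f)}$ precisely when $k+j\le \mu\sinh(\epsilon/S(f))$. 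This yields the threshold $k^*=\mu\sinh(\epsilon/S(f))-S(f)$, and then the tail bound of Lemma~\ref{SKELLAMBOUND} with $\sigma=\sinh(\epsilon/S(f))$ (so that $\sqrt{1+\sigma^2}=\cosh(\epsilon/S(f))$ and the optimal tilt is exactly $t=\epsilon/S(f)$) gives
\[
\Pr[Y>k^*]\le \exp\bigl(-\mu(1-\cosh(\epsilon/S(f))+(\epsilon/S(f))\sinh(\epsilon/S(f)))+\epsilon\bigr),
\]
which is the claimed constant on the nose. A genuinely linear bound $L(k)\le (\Delta/\mu)k+c$ would place the threshold at $\mu\cdot \epsilon/S(f)$ rather than $\mu\sinh(\epsilon/S(f))$, and the Chernoff optimisation would then produce $1-\sqrt{1+\alpha^2}+\alpha\operatorname{arsinh}\alpha$ instead of $1-\cosh\alpha+\alpha\sinh\alpha$; so your linear heuristic gets the scaling right but not the stated closed form. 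The log-concavity you mention (Tur\'an-type, Lemma~\ref{turanineq}) is used elsewhere in the paper but is not needed here.
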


\begin{Rem}\label{skdprem} The bound on $\mu$ from Theorem \text{\upshape \ref{privthm}} is smaller than $2\cdot (S(f)/\epsilon)^2\cdot(\log(1/\delta)+\epsilon)$, thus for the standard deviation $\sqrt{\mu}$ of $Y\leftarrow\text{\upshape Sk}(\mu)$ it holds that $\sqrt{\mu}=O(S(f)\cdot\sqrt{\log(1/\delta)}/\epsilon)$.
\end{Rem}

Executing this mechanism through a PSA scheme requires the use of the known constant $\gamma$ which denotes the a priori estimate of the lower bound on the fraction of uncompromised users. For this case, we provide the accuracy bound for the Skellam mechanism.

\begin{Thm}[Accuracy of the Skellam Mechanism]\label{errorthm} Let $\epsilon>0, 0<\delta<1, S(f)>0$ and let $0<\gamma<1$ be the a priori estimate of the lower bound on the fraction of uncompromised users in the network. By distributing the execution of a perturbation mechanism as described above and using the parameters from Theorem $\ref{privthm}$, we obtain $(\alpha,\beta)$-accuracy with
\[\alpha=\frac{S(f)}{\epsilon}\cdot\left(\frac{1}{\gamma}\cdot\left(\log\left(\frac{1}{\delta}\right)+\epsilon\right)+\log\left(\frac{2}{\beta}\right)\right).\]
\end{Thm}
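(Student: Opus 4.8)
The plan is to reduce the accuracy claim to a one-sided tail bound for a single symmetric Skellam variable and then invoke symmetry together with a union bound. First I would fix the noise aggregated across the network. Following the distributed execution described above, each uncompromised user draws independent noise from $\text{\upshape Sk}(\mu/(\gamma n))$, where $\mu$ is the parameter from Theorem~\ref{privthm}. If $n_u$ of the $n$ users are uncompromised, then by the reproducibility of the Skellam distribution (Lemma~\ref{sksum}) the total noise $Y^*$ carried by the aggregator's output on top of $f(D)$ is distributed according to $\text{\upshape Sk}(\mu^*)$ with $\mu^*=n_u\mu/(\gamma n)$. Since $\gamma n\le n_u\le n$, we have $\mu\le\mu^*\le\mu/\gamma$: the lower bound is precisely what Theorem~\ref{privthm} requires for $(\epsilon,\delta)$-privacy to hold, whereas the upper bound is what I would exploit for accuracy. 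As $\mathcal{A}(D)-f(D)=Y^*$, it suffices to prove $\Pr[|Y^*|\ge\alpha]\le\beta$ for the stated $\alpha$.

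Next I would apply a Chernoff bound using the moment generating function of the symmetric Skellam distribution, $\E[e^{tY^*}]=e^{\mu^*(\cosh t-1)}$, so that $\Pr[Y^*\ge\alpha]\le\exp(-t\alpha+\mu^*(\cosh t-1))$ for every $t>0$. The decisive step is to set $t=\epsilon/S(f)$, the scale that already appears in the formula for $\mu$. Writing $\rho=\epsilon/S(f)$, the elementary inequality $2(\cosh\rho-1)\le\rho\sinh\rho$ — which follows by comparing the Taylor coefficients $2/(2k)!\le 2k/(2k)!$ term by term — is equivalent to $\cosh\rho-1\le 1-\cosh\rho+\rho\sinh\rho$, hence
\[\mu^*(\cosh\rho-1)\le\frac{\mu}{\gamma}(\cosh\rho-1)=\frac{(\log(1/\delta)+\epsilon)(\cosh\rho-1)}{\gamma\,(1-\cosh\rho+\rho\sinh\rho)}\le\frac{\log(1/\delta)+\epsilon}{\gamma}.\]
Substituting back yields $\Pr[Y^*\ge\alpha]\le\exp(-(\epsilon/S(f))\,\alpha+(\log(1/\delta)+\epsilon)/\gamma)$.

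Finally I would set this right-hand side equal to $\beta/2$ and solve for $\alpha$, which gives exactly $\alpha=\frac{S(f)}{\epsilon}\left(\frac{1}{\gamma}(\log(1/\delta)+\epsilon)+\log(2/\beta)\right)$. By the symmetry of $\text{\upshape Sk}(\mu^*)$ about $0$ the identical argument bounds $\Pr[Y^*\le-\alpha]$ by $\beta/2$, and a union bound then gives $\Pr[|Y^*|\ge\alpha]\le\beta$, i.e.\ $(\alpha,\beta)$-accuracy. I expect the only delicate point to be the middle paragraph: choosing $t$ so that the Skellam moment generating function cancels against the denominator of $\mu$ from Theorem~\ref{privthm}. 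With a generic sub-Gaussian tail estimate (e.g.\ using the cruder bound $\mu<2(S(f)/\epsilon)^2(\log(1/\delta)+\epsilon)$ from Remark~\ref{skdprem}) one would instead obtain a $\sqrt{\log(2/\beta)}$ dependence with different constants, so the tailored choice $t=\epsilon/S(f)$ is what makes the bound collapse to the stated form; the remaining steps are routine bookkeeping on how the per-user variances combine.
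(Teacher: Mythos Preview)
Your proposal is correct and follows essentially the same route as the paper: both arguments apply the Chernoff bound with the Skellam moment generating function, make the specific choice $t=\epsilon/S(f)$, exploit the inequality $\cosh\rho-1\le 1-\cosh\rho+\rho\sinh\rho$ to cancel the denominator of $\mu$ from Theorem~\ref{privthm}, and pass to the distributed setting via $\mu^*\le\mu/\gamma$ (the paper phrases this as ``worst case all $n$ users add noise''). The only cosmetic difference is that the paper first derives the centralised bound and appends the factor $1/\gamma$ afterwards, whereas you carry $\mu^*$ through from the start.
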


\begin{figure*}\centering
\includegraphics[trim={1cm 0 0 0},scale=0.34]{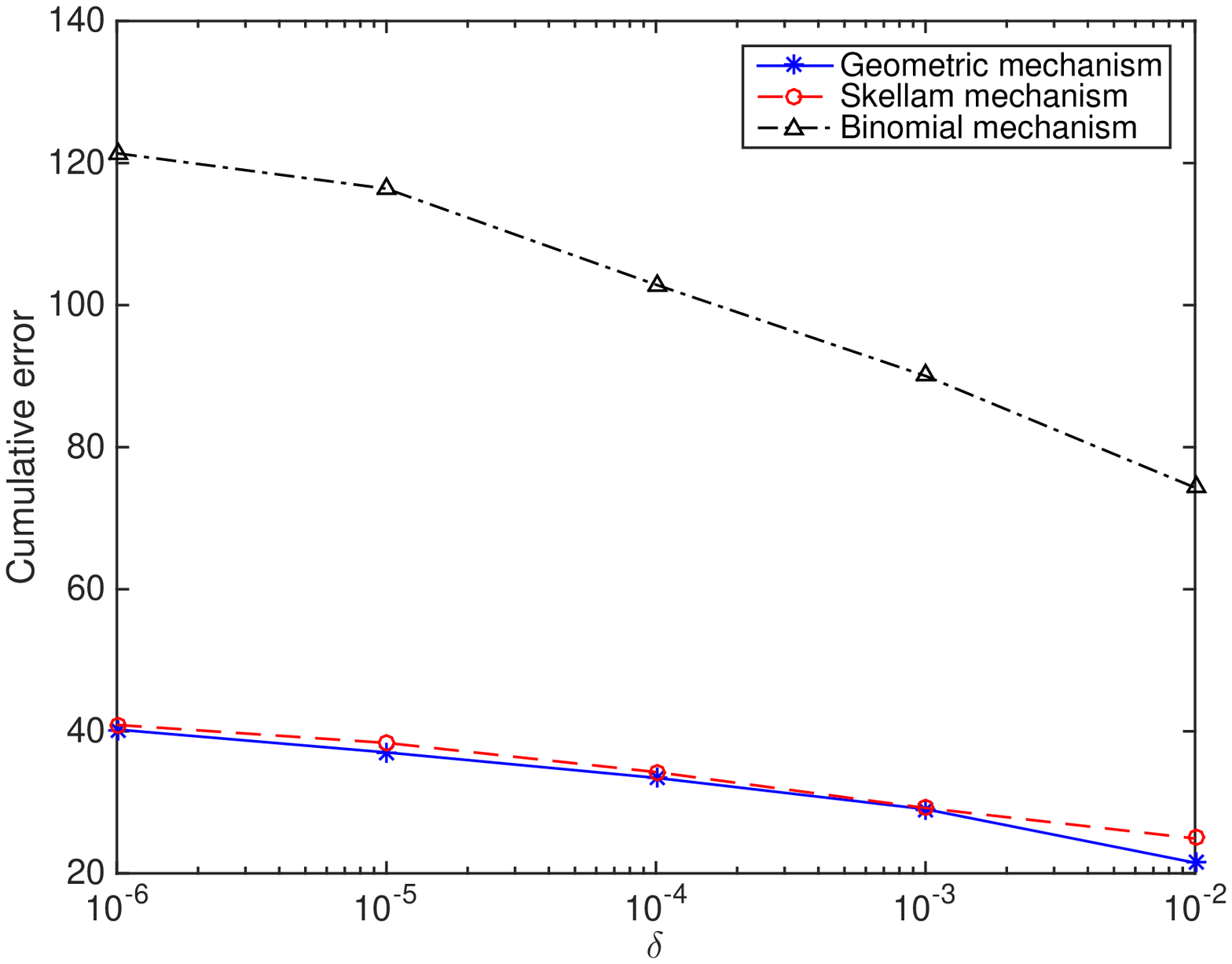}\includegraphics[trim={0.8cm 0 0 0},scale=0.34]{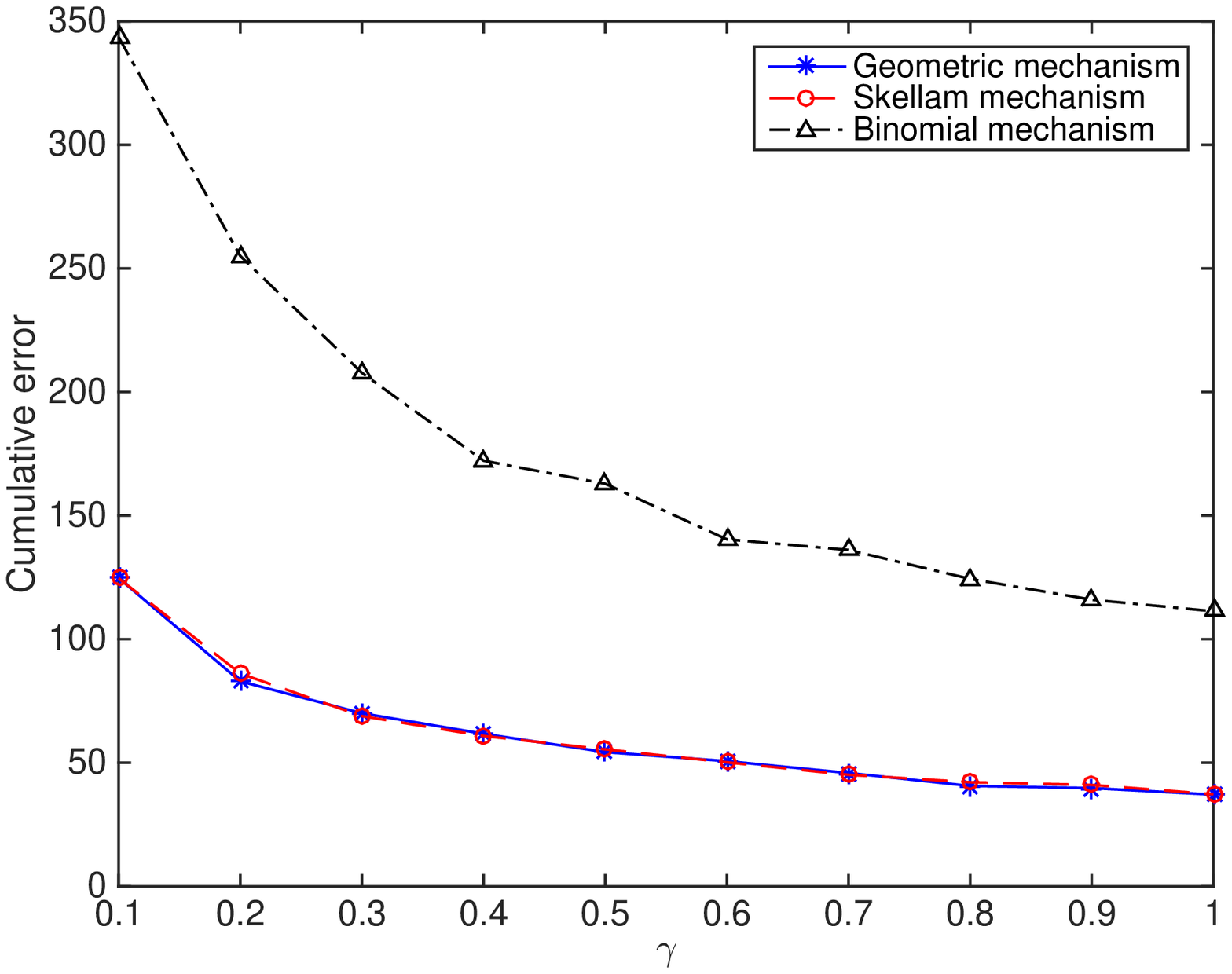}
\caption{Empirical error of the geometric, Skellam and binomial mechanisms. The fixed parameters are $\epsilon=0.1, S(f)=1, n=1000$. The plot on the left shows the mean of the error in absolute value for variable $\delta$ and $\gamma=1$ over $1000$ repeats, the plot on the right is for variable $\gamma$ and $\delta=10^{-5}$.}\label{accuracycomp}
\end{figure*}

Theorem \ref{errorthm} shows that for constant $\delta,\beta,\gamma$ the error of the Skellam mechanism is bounded by $O(S(f)/\epsilon)$. This is the same bound as for the geometric mechanism (see Theorem $3$ in \cite{2}) and the binomial mechanism from \cite{14}. Therefore, the Skellam mechanism has the same accuracy as known solutions. In Figure~\ref{accuracycomp}, an empirical comparison between the mechanisms shows that the error of the geometric and the Skellam mechanisms have a very similar behaviour for both variables $\delta$ and $\gamma$, while the error of the binomial mechanism is roughly three times larger.\\ 
On the other hand, as pointed out in Section \ref{mechov}, the execution of the geometric mechanism through a PSA scheme requires each user to generate full noise with a small probability. Complementary, the Skellam mechanism allows all users to simply generate noise of small variance. This fact makes the Skellam mechanism tremendously advantageous over the geometric mechanism, since it permits to construct a PSA scheme based on the DLWE problem, which automatically preserves \mbox{\upshape\sffamily CDP} without any loss in the accuracy compared to state-of-the-art solutions.

\subsection{Hardness of the LWE problem with Errors following the symmetric Skellam distribution}\label{HARDNESS}

For constructing a secure PSA scheme, we consider the following $\lambda$-bounded (Decisional) Learning with Errors problem and prove the subsequent result.

\begin{Def}[$\boldsymbol\lambda$-bounded LWE] Let $\kappa$ be a security parameter, let $\lambda = \lambda(\kappa) = \text{\upshape\sffamily poly}(\kappa)$ and $q = q(\kappa)\geq 2$ be integers and let $\chi$ be a distribution on $\mathbb{Z}_q$. Let $\textbf{\upshape x}\leftarrow\mathcal{U}(\mathbb{Z}_q^\kappa)$, let $\textbf{\upshape A}\leftarrow\mathcal{U}(\mathbb{Z}_q^{\lambda\times\kappa})$ and let $\textbf{\upshape e}\leftarrow\chi^\lambda$. The goal of the $\text{\upshape LWE}(\kappa,\lambda,q,\chi)$ problem is, given $(\textbf{\upshape A}, \textbf{\upshape Ax} + \textbf{\upshape e})$, to find $\textbf{\upshape x}$. The goal of the $\text{\upshape DLWE}(\kappa,\lambda,q,\chi)$ problem is, given $(\textbf{\upshape A}, \textbf{\upshape y})$, to decide whether $\textbf{\upshape y}=\textbf{\upshape Ax} + \textbf{\upshape e}$ or $\textbf{\upshape y}=\textbf{\upshape u}$ with $\textbf{\upshape u}\leftarrow\mathcal{U}(\mathbb{Z}_q^\lambda)$.
\end{Def}

\begin{Thm}[LWE with Skellam-distributed errors]\label{lweskthm}
Let $\kappa$ be a security parameter and let $\lambda=\lambda(\kappa)=\text{\upshape\sffamily poly}(\kappa)$ with $\lambda>3\kappa$. Let $q=q(\kappa)=\text{\upshape\sffamily poly}(\kappa)$ be a sufficiently large prime modulus and $\rho>0$ such that $\rho q\geq 4\lambda\sqrt{\kappa} s$. If there exists a ppt algorithm that solves the $\text{\upshape LWE}(\kappa,\lambda,q,{\text{\upshape Sk}((\rho q)^2/4)})$ problem with more than negligible probability, then there exists an efficient quantum-algorithm that approximates the decisional shortest vector problem ($\text{\upshape GapSVP}$) and the shortest independent vectors problem ($\text{\upshape SIVP}$) to within $\tilde{O}(\lambda\kappa/\rho)$ in the worst case.
\end{Thm}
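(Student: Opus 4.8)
The plan is to derive the hardness of LWE with Skellam errors from the hardness of LWE with \emph{discrete Gaussian} errors, which is known to follow from the quantum worst-case hardness of $\text{GapSVP}$ and $\text{SIVP}$ by Regev's reduction \cite{42}. Concretely, I would give an efficient transformation turning an instance $(\mathbf{A},\mathbf{b}=\mathbf{A}\mathbf{x}+\mathbf{g})$ of $\text{LWE}(\kappa,\lambda,q,D_{\mathbb{Z},s})$, where $\mathbf{g}\leftarrow D_{\mathbb{Z},s}^\lambda$, into an instance of $\text{LWE}(\kappa,\lambda,q,\text{\upshape Sk}((\rho q)^2/4))$ with the \emph{same} secret $\mathbf{x}$: sample $\mathbf{h}\leftarrow\text{\upshape Sk}(\mu_0)^\lambda$ publicly, where $\mu_0=(\rho q)^2/4-\sigma^2$ and $\sigma^2$ is the variance of $D_{\mathbb{Z},s}$, and output $(\mathbf{A},\mathbf{b}+\mathbf{h})$. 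Exploiting the reproducibility of the Skellam distribution (Lemma~\ref{sksum}) — the discrete analogue of the Gaussian being closed under convolution — the new error $\mathbf{g}+\mathbf{h}$ has a coordinate-wise distribution that would be \emph{exactly} $\text{\upshape Sk}((\rho q)^2/4)$ if only $D_{\mathbb{Z},s}$ itself were a Skellam of variance $\sigma^2$. A Skellam-LWE solver fed these transformed instances recovers $\mathbf{x}$, and since the answer to a search-LWE instance can be verified (the residual $\mathbf{b}-\mathbf{A}\mathbf{x}'$ is short only for $\mathbf{x}'=\mathbf{x}$, as $q$ is much larger than $s$), the solving probability can be amplified; composing with Regev's quantum worst-case-to-average-case reduction then gives the stated lattice algorithm. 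An alternative route, closer to \cite{39}, is via lossy codes: a uniform matrix is computationally indistinguishable from a ``lossy'' one under Gaussian-DLWE, while a lossy matrix together with the sufficiently spread Skellam noise statistically hides $\mathbf{x}$, so a Skellam-DLWE distinguisher (obtained for free from a search solver, and turned back into a search solver in the Gaussian case by the polynomial-modulus reduction of \cite{42,43}) would contradict Gaussian-DLWE hardness.

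The technical heart is a quantitative closeness lemma: $D_{\mathbb{Z},s}$ and $\text{\upshape Sk}(\sigma^2)$ (same variance) are close, with the distance controlled by Fourier analysis on $\mathbb{Z}$. The Skellam $\text{\upshape Sk}(\mu)$ has characteristic function $\theta\mapsto e^{\mu(\cos\theta-1)}$ on $[-\pi,\pi]$, while the discrete Gaussian $D_{\mathbb{Z},s}$ has characteristic function $e^{-\sigma^2\theta^2/2}$ up to a multiplicative factor $1\pm 2^{-\Omega(s^2)}$ coming from wrap-around terms; since $|1-\cos\theta-\theta^2/2|=O(\theta^4)$, the two agree to second order and disagree only at the fourth-cumulant level. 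This yields a single-coordinate bound of order $1/\sigma^2$, and by the data-processing inequality the same order bounds the divergence between $\text{\upshape Sk}((\rho q)^2/4)=\text{\upshape Sk}(\sigma^2)\ast\text{\upshape Sk}(\mu_0)$ and $D_{\mathbb{Z},s}\ast\text{\upshape Sk}(\mu_0)$. Over the $\lambda$ coordinates one loses a polynomially bounded factor precisely because of the hypotheses $\lambda>3\kappa$ and $\rho q\ge 4\lambda\sqrt{\kappa}\,s$ — the latter also keeping the Skellam noise far below $q$ so that reduction modulo $q$ is statistically transparent. I would phrase this step in terms of Rényi divergence rather than statistical distance, since for a \emph{search} problem a bounded Rényi divergence suffices to preserve a non-negligible solving probability, which is more robust than requiring negligible statistical distance.

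The step I expect to be the main obstacle is exactly this comparison of $\text{\upshape Sk}$ with $D_{\mathbb{Z}}$ and the propagation of the resulting error through the $\lambda$ samples and the mod-$q$ reduction. Unlike the continuous Gaussian, the discrete Gaussian is not infinitely divisible inside the Skellam family, so the convolution identity is only approximate; the nonzero fourth cumulant $\kappa_4(\text{\upshape Sk}(\mu))=\mu$ is the obstruction, and one must show it neither spoils the reduction's success probability nor leaks $\mathbf{x}$. Making this quantitative requires sharp tail bounds for both $e^{-\mu}I_k(\mu)$ and $D_{\mathbb{Z},s}$ together with a local-CLT/Edgeworth estimate for the Bessel probabilities in the bulk, and then a careful choice of parameters — this is where $\lambda>3\kappa$ and $\rho q\ge 4\lambda\sqrt{\kappa}\,s$ are used. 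Once the transformation is shown to map, up to small Rényi divergence, genuine Gaussian-LWE instances to genuine Skellam-LWE instances, tracking the Gaussian width $s$ through Regev's connection (approximation factor $\tilde{O}(\kappa q/s)$, rewritten as $\tilde{O}(\lambda\kappa/\rho)$ via the constraint on $\rho q$) yields the claimed bound.
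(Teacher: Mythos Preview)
Your primary approach---the convolution/R\'enyi-divergence reduction---is \emph{not} what the paper does. The paper takes exactly your ``alternative route'': it constructs a lossy code (Construction~\ref{lossysk}) whose pseudo-randomness follows from Gaussian-DLWE, proves that a uniform code is non-lossy (Lemma~\ref{nonloss}), and then spends all its technical effort on Lemma~\ref{lossysklem}, which shows that when $\textbf{A}$ is drawn from the lossy distribution and the error is Skellam with $\mu\ge 4\lambda^2\nu s^2$, the conditional min-entropy of $\textbf{x}$ given $\textbf{Ax}+\textbf{e}$ is $\omega(\log\kappa)$. That lemma is a long chain of inequalities driven by the Bessel-ratio bound (Lemma~\ref{modbesrat}) and the elementary estimate of Lemma~\ref{techlem}, together with norm bounds on $\textbf{Gz}$ (Lemma~\ref{smallnormvec}) and a balancing argument for the maximiser (Lemma~\ref{maximizing}). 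Theorem~\ref{lossythm} then converts the lossy code into hardness of Skellam-LWE, and the final substitution $\rho=2\alpha\lambda s$ gives the stated approximation factor.

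Your R\'enyi route is a genuine alternative, and the paper even acknowledges it in a footnote, attributing the template to Bai et al.\ \cite{84}: it suffices to bound the R\'enyi divergence between $\chi+D(\nu)$ and $\chi$. But the authors explicitly remark that ``the realisation for the Skellam distribution is technically non-trivial,'' and your sketch does not close that gap. The concrete concern is your claimed single-coordinate $O(1/\sigma^2)$ bound on the divergence between $D_{\mathbb{Z},s}$ and $\text{Sk}(\sigma^2)$: a fourth-cumulant/Edgeworth heuristic is suggestive, but turning it into a rigorous pointwise ratio bound that survives the tails of $e^{-\mu}I_k(\mu)$ versus the discrete Gaussian, and then propagating it through $\lambda$ coordinates while matching the specific constraint $\mu\ge 4\lambda^2\nu s^2$, is exactly the hard part the paper sidesteps. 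The lossy-code argument trades this distributional comparison for an entropy lower bound, which the Bessel-function machinery (Lemmas~\ref{modbesrat} and~\ref{turanineq}) handles directly; your approach would instead need uniform control on $I_k(\sigma^2)/\rho_s(k)$ across all $k$, including the regime $|k|\gg\sigma$, which your outline does not address.
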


Based on the same assumptions, the decisional problem $\text{\upshape DLWE}(\kappa,\lambda,q,$\linebreak ${\text{\upshape Sk}((\rho q)^2/4)})$ is also hard due to the search-to-decision reduction from \cite{43}.\\
Basic notions and facts about the LWE problem can be found in Section \ref{hardnessoflwe} of the appendix. As mentioned in the introduction, our proof uses ideas by D\"{o}ttling and M\"{u}ller-Quade \cite{39}. Similarly to their work, we construct a \textit{lossy code} for the symmetric Skellam distribution from the LWE problem where the errors are taken from the discrete Gaussian distribution $D(\nu)$ with parameter $\nu$. Variants of lossy codes were first used in \cite{53} and since then had applications in different hardness reductions, such as the reduction from the LWE problem to the Learning with Rounding problem from \cite{54}. Lossy codes are pseudo-random codes such that the encoding of a message with the addition of certain errors obliterates any information about the message. On the other hand, encoding the same message using a truly random code and adding the same type of error preserves the message. We will conclude that recovering the message when encoding it with a random code and adding Skellam noise must be computationally hard. If this was not the case, lossy codes could be efficiently distinguished from random codes, contradicting the pseudo-randomness-property of lossy codes.\footnote{Independently of \cite{39}, Bai et al. \cite{84} provide an alternative way to prove the hardness of the LWE problem with error distribution $\chi$. They prove that it is sufficient to show that the R\'{e}nyi divergence between the smoothed distribution $\chi+D(\nu)$ and $\chi$ is sufficiently small (where $D(\nu)$ is the discrete Gaussian with parameter $\nu$, such that the corresponding LWE problem is hard). They realise this proof technique for the uniform error distribution $\chi=\mathcal{U}$. However, the realisation for the Skellam distribution is technically non-trivial.}\\ 
Since the Skellam distribution is both reproducible and well-suited for preserving differential privacy (see Theorem \ref{privthm}), the error terms in our DLWE-based PSA scheme are used for two tasks: establishing the cryptographic security of the scheme and the distributed noise generation to preserve differential privacy.\\
As observed in \cite{39}, considering a $\lambda$-bounded LWE problem, where the adversary is given $\lambda(\kappa) = \text{\upshape\sffamily poly}(\kappa)$ samples, poses no restrictions to most cryptographic applications of the LWE problem, since they require only an a priori fixed number of samples. In our application to differential privacy, we identify $\lambda$ with the number of queries in a pre-defined time-series.\\

\noindent\textit{Entropy and Lossy Codes.} We introduce the conditional min-entropy as starting point for our technical tools. It can be seen as a measure of ambiguity.

\begin{Def}[Conditional min-entropy \cite{39}]
Let $\chi$ be a probability distribution with finite support $Supp(\chi)$ and let $X,\tilde{X}\leftarrow\chi$. Let $f,g$ be two (possibly randomised) maps on the domain $Supp(\chi)$. The $(f,g)$-\text{\upshape conditional min-entropy} $H_{\infty}(X\,|\,f(X)=g(\tilde{X}))$ of $X$ is defined as 
\[H_{\infty}(X\,|\,f(X)=g(\tilde{X}))=-\log_2\left(\max_{\xi\in Supp(\chi)}\{\Pr[X=\xi\,|\,f(X)=g(\tilde{X})]\}\right).\]
\end{Def}

In the remainder of the work we consider $f=f_{\textbf{A},\textbf{e}}$ and $g=g_{\textbf{A},\textbf{e}}$ as maps to the set of LWE instances, i.e.
\[f_{\textbf{A},\textbf{e}}(\textbf{y})=g_{\textbf{A},\textbf{e}}(\textbf{y})=\textbf{Ay} + \textbf{e}.\]
In this work, we consider the $(f_{\textbf{A},\textbf{e}},f_{\textbf{A},\tilde{\textbf{e}}})$-conditional min-entropy
\[H_{\infty}(\textbf{\upshape x}\,|\,f_{\textbf{\upshape A},\textbf{\upshape e}}(\textbf{\upshape x})=f_{\textbf{\upshape A},\tilde{\textbf{\upshape e}}}(\tilde{\textbf{\upshape x}}))=-\log_2\left(\max_{\boldsymbol\xi\in\mathbb{Z}_q^\kappa}\left\{\Pr_{(\textbf{x},\textbf{e})}[\textbf{x}=\boldsymbol\xi\,|\,\textbf{Ax} + \textbf{e}=\textbf{A}\tilde{\textbf{x}} + \tilde{\textbf{e}}]\right\}\right)\]
of a random variable $\textbf{x}$, i.e. the min-entropy of $\textbf{\upshape x}$ given that a LWE instance generated with $(\textbf{\upshape A},\textbf{\upshape x},\textbf{e})$ is equal to another LWE instance generated with $(\textbf{\upshape A},\tilde{\textbf{\upshape x}},\tilde{\textbf{e}})$. Now we provide the notion of lossy codes, which is the main technical tool used in the proof of the hardness result. 

\begin{Def}[Families of Lossy Codes \cite{39}]\label{lossydef}
Let $\kappa$ be a security parameter, let $\lambda = \lambda(\kappa) = \text{\upshape\sffamily poly}(\kappa)$ and let $q=q(\kappa)\geq 2$ be a modulus, $\Delta=\Delta(\kappa)$ and let $\chi$ be a distribution on $\mathbb{Z}_q$. Let $\{\mathcal{C}_{\kappa,\lambda,q}\}$ be a family of distributions, where $\mathcal{C}_{\kappa,\lambda,q}$ is defined on $\mathbb{Z}_q^{\lambda\times\kappa}$. The distribution family $\{\mathcal{C}_{\kappa,\lambda,q}\}$ is $\Delta$-\text{\upshape lossy} for the error distribution $\chi$, if the following hold:
\begin{enumerate}
 \item $\mathcal{C}_{\kappa,\lambda,q}$ is pseudo-random: It holds that $\mathcal{C}_{\kappa,\lambda,q}\approx_c\mathcal{U}(\mathbb{Z}_q^{\lambda\times\kappa})$.
 \item $\mathcal{C}_{\kappa,\lambda,q}$ is lossy: Let $f_{\textbf{\upshape B},\textbf{\upshape b}}(\textbf{\upshape y})=\textbf{\upshape B}\textbf{\upshape y} + \textbf{\upshape b}$. Let $\textbf{\upshape A}\leftarrow\mathcal{C}_{\kappa,\lambda,q}$, $\tilde{\textbf{\upshape x}}\leftarrow\mathcal{U}(\mathbb{Z}_q^\kappa), \tilde{\textbf{\upshape e}}\leftarrow\chi^\lambda$, let $\textbf{\upshape x}\leftarrow\mathcal{U}(\mathbb{Z}_q^\kappa)$ and $\textbf{\upshape e}\leftarrow\chi^\lambda$. Then it holds that 
 \[\Pr_{(\textbf{\upshape A},\tilde{\textbf{\upshape x}},\tilde{\textbf{\upshape e}})}[H_{\infty}(\textbf{\upshape x}\,|\,f_{\textbf{\upshape A},\textbf{\upshape e}}(\textbf{\upshape x})=f_{\textbf{\upshape A},\tilde{\textbf{\upshape e}}}(\tilde{\textbf{\upshape x}}))\geq\Delta]\geq 1-\text{\upshape\sffamily neg}(\kappa).\]
 \item $\mathcal{U}(\mathbb{Z}_q^{\lambda\times\kappa})$ is non-lossy: Let $f_{\textbf{\upshape B},\textbf{\upshape b}}(\textbf{\upshape y})=\textbf{\upshape B}\textbf{\upshape y} + \textbf{\upshape b}$. Let $\textbf{\upshape A}\leftarrow\mathcal{U}(\mathbb{Z}_q^{\lambda\times\kappa})$,\linebreak $\tilde{\textbf{\upshape x}}\leftarrow\mathcal{U}(\mathbb{Z}_q^\kappa), \tilde{\textbf{\upshape e}}\leftarrow\chi^\lambda$, let $\textbf{\upshape x}\leftarrow\mathcal{U}(\mathbb{Z}_q^\kappa)$ and $\textbf{\upshape e}\leftarrow\chi^\lambda$. Then it holds that 
\[\Pr_{(\textbf{\upshape A},\tilde{\textbf{\upshape x}},\tilde{\textbf{\upshape e}})}[H_{\infty}(\textbf{\upshape x}\,|\,f_{\textbf{\upshape A},\textbf{\upshape e}}(\textbf{\upshape x})=f_{\textbf{\upshape A},\tilde{\textbf{\upshape e}}}(\tilde{\textbf{\upshape x}}))=0]\geq 1-\text{\upshape\sffamily neg}(\kappa).\]
\end{enumerate}
\end{Def}

It is not hard to see that the map-conditional entropy suffices for showing that the existence of a lossy code for the error distribution $\chi$ implies the hardness of the LWE problem with error distribution $\chi$.

\begin{Thm}[Lossy code gives hard LWE \cite{39}]\label{lossythm}
Let $\kappa$ be a security parameter, let $\lambda=\lambda(\kappa)=\text{\upshape\sffamily poly}(\kappa)$ and let $q=q(\kappa)$ be a modulus. Let the distribution $\chi$ on $\mathbb{Z}_q$ be efficiently samplable. Let $\Delta=\Delta(\kappa)=\omega(\log(\kappa))$. Then the $\text{\upshape LWE}(\kappa,\lambda,q,\chi)$ problem is hard, given that there exists a family $\{\mathcal{C}_{\kappa,\lambda,q}\}\subseteq\mathbb{Z}_q^{\lambda\times\kappa}$ of $\Delta$-lossy codes for the error distribution $\chi$.
\end{Thm}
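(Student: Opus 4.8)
The plan is to prove the contrapositive by a reduction to the pseudo-randomness of the lossy-code family (property~1 of Definition~\ref{lossydef}). Suppose some ppt algorithm $\mathcal{S}$ solves $\text{\upshape LWE}(\kappa,\lambda,q,\chi)$ with non-negligible probability $\nu(\kappa)$, i.e.\ given $(\mathbf{A},\mathbf{Ax}+\mathbf{e})$ with $\mathbf{A}\leftarrow\mathcal{U}(\mathbb{Z}_q^{\lambda\times\kappa})$, $\mathbf{x}\leftarrow\mathcal{U}(\mathbb{Z}_q^\kappa)$ and $\mathbf{e}\leftarrow\chi^\lambda$ it outputs $\mathbf{x}$ with probability $\nu(\kappa)$. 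From $\mathcal{S}$ I would build a ppt distinguisher $\mathcal{B}$ between $\mathcal{C}_{\kappa,\lambda,q}$ and $\mathcal{U}(\mathbb{Z}_q^{\lambda\times\kappa})$, contradicting $\mathcal{C}_{\kappa,\lambda,q}\approx_c\mathcal{U}(\mathbb{Z}_q^{\lambda\times\kappa})$.

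The distinguisher $\mathcal{B}$, on input a matrix $\mathbf{M}\in\mathbb{Z}_q^{\lambda\times\kappa}$, samples $\mathbf{x}\leftarrow\mathcal{U}(\mathbb{Z}_q^\kappa)$ and $\mathbf{e}\leftarrow\chi^\lambda$ (the latter being feasible since $\chi$ is efficiently samplable), runs $\mathcal{S}$ on $(\mathbf{M},\mathbf{Mx}+\mathbf{e})$, and outputs $1$ if and only if $\mathcal{S}$ returns exactly $\mathbf{x}$; since $\mathcal{B}$ holds $\mathbf{x}$ it can perform this check in polynomial time, so $\mathcal{B}$ is ppt. It then remains to estimate $\Pr[\mathcal{B}(\mathbf{M})=1]$ in the two regimes. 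If $\mathbf{M}\leftarrow\mathcal{U}(\mathbb{Z}_q^{\lambda\times\kappa})$, the pair $(\mathbf{M},\mathbf{Mx}+\mathbf{e})$ is exactly a fresh $\text{\upshape LWE}(\kappa,\lambda,q,\chi)$ instance, so $\mathcal{B}$ outputs $1$ with probability $\nu(\kappa)$. (Property~3 is not needed for the argument, but it does confirm that on a uniform code the secret is information-theoretically determined with overwhelming probability, so ``$\mathcal{S}$ returns $\mathbf{x}$'' is an unambiguous notion of success.)

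If instead $\mathbf{M}\leftarrow\mathcal{C}_{\kappa,\lambda,q}$, I would invoke the lossy property. The point is that $\max_{\boldsymbol\xi}\Pr_{(\mathbf{x},\mathbf{e})}[\mathbf{x}=\boldsymbol\xi\mid\mathbf{Mx}+\mathbf{e}=\mathbf{y}]$ is precisely the largest probability with which \emph{any} procedure — bounded or not, hence in particular $\mathcal{S}$ — can output the secret of the instance $(\mathbf{M},\mathbf{y})$ when the secret is drawn uniformly and the error from $\chi^\lambda$; moreover this quantity is a function of $\mathbf{M}$ and the observed value $\mathbf{y}$ only, not of the particular secret/error pair producing $\mathbf{y}$, so the high-entropy event in property~2 is measurable with respect to $(\mathbf{M},\mathbf{y})$. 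Consequently, with probability $1-\text{\upshape\sffamily neg}(\kappa)$ over the instance this maximum posterior is at most $2^{-\Delta}$, and since $\Delta=\omega(\log\kappa)$ we have $2^{-\Delta}=\text{\upshape\sffamily neg}(\kappa)$; bounding the remaining $\text{\upshape\sffamily neg}(\kappa)$-fraction of ``bad'' instances trivially by $1$ yields $\Pr[\mathcal{B}(\mathbf{M})=1]=\text{\upshape\sffamily neg}(\kappa)$.

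Putting the two estimates together, $\mathcal{B}$ has distinguishing advantage at least $\nu(\kappa)-\text{\upshape\sffamily neg}(\kappa)>\text{\upshape\sffamily neg}(\kappa)$, contradicting the pseudo-randomness of $\{\mathcal{C}_{\kappa,\lambda,q}\}$; hence no such $\mathcal{S}$ exists and $\text{\upshape LWE}(\kappa,\lambda,q,\chi)$ is hard. The step I expect to need the most care is the lossy case: rewriting the $(f_{\mathbf{A},\mathbf{e}},f_{\mathbf{A},\tilde{\mathbf{e}}})$-conditional min-entropy — which is phrased via a fresh pair conditioned on colliding with a target instance — as a bound on the Bayesian posterior of the true secret given the sample, and handling the ``with overwhelming probability over the instance'' quantifier correctly (in particular, that averaging over the good instances and bounding the bad ones by $1$ is legitimate because the high-entropy event depends only on $(\mathbf{M},\mathbf{y})$). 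Everything else is routine.
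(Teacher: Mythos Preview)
Your proposal is correct and follows essentially the same route as the paper: build a distinguisher for $\mathcal{C}_{\kappa,\lambda,q}$ versus $\mathcal{U}(\mathbb{Z}_q^{\lambda\times\kappa})$ by sampling a fresh secret and error, feeding the resulting instance to the assumed LWE solver, and outputting $1$ iff the solver returns the planted secret; the uniform case gives success probability $\nu(\kappa)$ and the lossy case gives $\text{\upshape\sffamily neg}(\kappa)$ via the Bayes-optimal bound $2^{-\Delta}$. Your observation that the high-entropy event depends only on $(\mathbf{M},\mathbf{y})$ and your remark that property~3 is not strictly needed (merely confirming that ``output $\mathbf{x}$'' is well-posed) are both accurate refinements of what the paper states more tersely.
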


Thus, for our purposes it suffices to show the existence of a lossy code for the error distribution $\text{\upshape Sk}(\mu)$. First, it is easy to show that $\mathcal{U}(\mathbb{Z}_q^{\lambda\times\kappa})$ is always non-lossy if the corresponding error distribution $\chi$ can be bounded, thus the third property of Definition \ref{lossydef} is satisfied. 

\begin{Lem}[Non-lossiness of $\mathcal{U}(\mathbb{Z}_q^{\lambda\times\kappa})$ \cite{39}]\label{nonloss}
Let $\kappa$ be a security parameter and $\chi$ a probability distribution on $\mathbb{Z}$. Assume the support of $\chi$ can be bounded by $r=r(\kappa)=\text{\upshape\sffamily poly}(\kappa)$. Moreover, let $q>(4r+1)^{1+\tau}$ for a constant $\tau>0$ and $\lambda=\lambda(\kappa)>(1+2/\tau)\kappa$. Let $f_{\textbf{\upshape B},\textbf{\upshape b}}(\textbf{\upshape y})=\textbf{\upshape B}\textbf{\upshape y} + \textbf{\upshape b}$. Let $\textbf{\upshape A}\leftarrow\mathcal{U}(\mathbb{Z}_q^{\lambda\times\kappa})$, $\tilde{\textbf{\upshape x}}\leftarrow\mathcal{U}(\mathbb{Z}_q^\kappa), \tilde{\textbf{\upshape e}}\leftarrow\chi^\lambda$, let $\textbf{\upshape x}\leftarrow\mathcal{U}(\mathbb{Z}_q^\kappa)$ and $\textbf{\upshape e}\leftarrow\chi^\lambda$. Then
\[\Pr_{(\textbf{\upshape A},\tilde{\textbf{\upshape x}},\tilde{\textbf{\upshape e}})}[H_{f_{\textbf{\upshape A},\textbf{\upshape e}},f_{\textbf{\upshape A},\tilde{\textbf{\upshape e}}},\tilde{\textbf{\upshape x}}}(\textbf{\upshape x})=0]\geq 1-\text{\upshape\sffamily neg}(\kappa).\]
\end{Lem}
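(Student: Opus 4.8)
The claim is that when $\textbf{A}$ is \emph{truly} uniform, no ambiguity remains: with overwhelming probability over $(\textbf{A},\tilde{\textbf{x}},\tilde{\textbf{e}})$, there is a unique $\textbf{x}\in\mathbb{Z}_q^\kappa$ together with an admissible error $\textbf{e}$ (i.e. $\textbf{e}\in Supp(\chi)^\lambda$, hence every coordinate in $\{-2r,\ldots,2r\}$ after taking differences) such that $\textbf{Ax}+\textbf{e}=\textbf{A}\tilde{\textbf{x}}+\tilde{\textbf{e}}$. The natural approach is a union bound over ``bad events''. Fix the realisation $\textbf{v}=\textbf{A}\tilde{\textbf{x}}+\tilde{\textbf{e}}$. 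A competing preimage $\boldsymbol\xi\neq\tilde{\textbf{x}}$ with $\boldsymbol\xi$ achieving the maximum in the min-entropy would require $\textbf{A}(\boldsymbol\xi-\tilde{\textbf{x}})=\tilde{\textbf{e}}-\textbf{e}'$ for some $\textbf{e}'\in Supp(\chi)^\lambda$; writing $\textbf{z}=\boldsymbol\xi-\tilde{\textbf{x}}\neq\textbf{0}$ and $\textbf{d}=\tilde{\textbf{e}}-\textbf{e}'$, each coordinate of $\textbf{d}$ lies in a set of size at most $4r+1$. So it suffices to bound
\[
\Pr_{\textbf{A}}\bigl[\exists\, \textbf{z}\in\mathbb{Z}_q^\kappa\setminus\{\textbf{0}\},\ \textbf{d}\in\{-2r,\ldots,2r\}^\lambda:\ \textbf{Az}=\textbf{d}\bigr].
\]

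First I would fix a nonzero $\textbf{z}$. Since $q$ is prime and $\textbf{z}\neq\textbf{0}$, the map $\textbf{A}\mapsto\textbf{Az}$ sends the uniform distribution on $\mathbb{Z}_q^{\lambda\times\kappa}$ to the uniform distribution on $\mathbb{Z}_q^\lambda$ (each row $\langle \textbf{a}_i,\textbf{z}\rangle$ is uniform and independent across rows, because $\textbf{z}$ has a nonzero entry and $q$ is prime so that coordinate can be ``solved'' to hit any target). Hence $\Pr_{\textbf{A}}[\textbf{Az}\in\{-2r,\ldots,2r\}^\lambda]=\bigl((4r+1)/q\bigr)^\lambda$. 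Next I would union-bound over the at most $q^\kappa$ choices of $\textbf{z}$, giving a total bad probability at most
\[
q^\kappa\cdot\left(\frac{4r+1}{q}\right)^\lambda = (4r+1)^\lambda\cdot q^{\kappa-\lambda}.
\]
Then I plug in the hypotheses $q>(4r+1)^{1+\tau}$ and $\lambda>(1+2/\tau)\kappa$. Taking logarithms base $(4r+1)$, the exponent of the bad probability is $\lambda - (\lambda-\kappa)\log_{4r+1}(q) < \lambda - (\lambda-\kappa)(1+\tau)$. Using $\lambda>(1+2/\tau)\kappa$, i.e. $\lambda-\kappa > (2/\tau)\kappa$ so $\kappa < (\tau/2)(\lambda-\kappa)$, one gets $\lambda < \kappa + (\lambda-\kappa) < (\tau/2)(\lambda-\kappa) + (\lambda-\kappa) = (1+\tau/2)(\lambda-\kappa)$, hence the exponent is $< (1+\tau/2)(\lambda-\kappa) - (1+\tau)(\lambda-\kappa) = -(\tau/2)(\lambda-\kappa)$. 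Since $\lambda-\kappa = \Omega(\kappa)$ and $4r+1\geq 2$, the bad probability is at most $(4r+1)^{-(\tau/2)(\lambda-\kappa)} = 2^{-\Omega(\kappa)} = \text{\upshape\sffamily neg}(\kappa)$. On the complement of this event, for every $\textbf{v}$ the only $\boldsymbol\xi$ with $\textbf{A}\boldsymbol\xi+\textbf{e}=\textbf{v}$ for some admissible $\textbf{e}$ is $\boldsymbol\xi=\tilde{\textbf{x}}$, so the conditional distribution of $\textbf{x}$ is a point mass and $H_\infty=0$.

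The only subtlety — and the step I would be most careful about — is the precise bookkeeping of which difference errors are admissible: $\textbf{e}$ and $\tilde{\textbf{e}}$ are each drawn from $\chi^\lambda$ with support bounded by $r$ (in the centered representation, coordinates in $\{-r,\ldots,r\}$), so $\tilde{\textbf{e}}-\textbf{e}$ has coordinates in $\{-2r,\ldots,2r\}$, a set of size $4r+1$, which is exactly where the bound $q>(4r+1)^{1+\tau}$ comes from. One should also note the statement as printed uses the slightly nonstandard notation $H_{f_{\textbf{A},\textbf{e}},f_{\textbf{A},\tilde{\textbf{e}}},\tilde{\textbf{x}}}(\textbf{x})$ for the conditional min-entropy $H_\infty(\textbf{x}\,|\,f_{\textbf{A},\textbf{e}}(\textbf{x})=f_{\textbf{A},\tilde{\textbf{e}}}(\tilde{\textbf{x}}))$; since this quantity is nonnegative and we are showing it equals $0$ with overwhelming probability, no additional work is needed beyond the uniqueness argument above. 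The rest is the routine union bound and arithmetic with $\tau$ sketched above, which I would present compactly rather than belaboring.
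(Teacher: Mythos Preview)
Your argument is correct and is exactly the standard union-bound counting argument one expects here. Note, however, that the paper does not actually supply its own proof of this lemma: it states the result with a citation to \cite{39} and remarks just before the lemma that ``it is easy to show that $\mathcal{U}(\mathbb{Z}_q^{\lambda\times\kappa})$ is always non-lossy if the corresponding error distribution $\chi$ can be bounded.'' Your write-up is therefore the natural fleshing-out of that remark, and the arithmetic with $\tau$ (showing the exponent is at most $-(\tau/2)(\lambda-\kappa)=-\Omega(\kappa)$) is carried out cleanly.

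One small point worth flagging: you invoke ``$q$ is prime'' to argue that $\textbf{A}\mapsto\textbf{Az}$ is uniform on $\mathbb{Z}_q^\lambda$ for any $\textbf{z}\neq\textbf{0}$. The lemma as stated does not explicitly assume primality of $q$, though the surrounding context of the paper (e.g.\ Theorem~\ref{lweskthm} and the search-to-decision reduction) does take $q$ prime. If you want the proof to stand on its own, either add ``$q$ prime'' as a hypothesis or note that for general $q$ one restricts to $\textbf{z}$ with at least one unit coordinate and handles the remaining (at most $q^{\kappa}\cdot(1-\phi(q)/q)^\kappa$, typically negligible) $\textbf{z}$ separately; in the paper's setting this is moot.
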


For the first and the second properties we construct a lossy code for the Skellam distribution as follows. It is essentially the same construction that was used for the uniform error distribution in \cite{39}.

\begin{Ctn}[Lossy code for the symmetric Skellam distribution]\label{lossysk}
Let $\kappa$ be an even security parameter, let $\lambda=\lambda(\kappa)=\text{\upshape\sffamily poly}(\kappa)$, $\nu>0$ and let $q=q(\kappa)$ be a prime modulus. The distribution $\mathcal{C}_{\kappa,\lambda,q,\nu}$ defined on $\mathbb{Z}_q^{\lambda\times\kappa}$ is specified as follows. Choose $\textbf{\upshape A}^\prime\leftarrow\mathcal{U}(\mathbb{Z}_q^{\lambda\times\kappa/2})$, $\textbf{\upshape T}\leftarrow\mathcal{U}(\mathbb{Z}_q^{\kappa/2\times\kappa/2})$ and $\textbf{\upshape G}\leftarrow D(\nu)^{\lambda\times\kappa/2}$. Output 
\[\textbf{\upshape A}=(\textbf{\upshape A}^\prime||(\textbf{\upshape A}^\prime\textbf{\upshape T}+\textbf{\upshape G})).\]
\end{Ctn}

From the matrix version of the LWE problem and the search-to-decision reduction from \cite{43} (see our Theorem \ref{lwestod}), it is straightforward to see that $\mathcal{C}_{\kappa,\lambda,q,\nu}$ is pseudo-random in the sense of Property $1$ of Definition \ref{lossydef} assuming the hardness of the $\text{\upshape LWE}(\kappa,\lambda,q,D(\nu))$ problem.\\
It remains to show that Construction \ref{lossysk} satisfies Property $2$ of Definition \ref{lossydef}. We first state three supporting claims, whose simple proofs are provided in Section \ref{hardnessoflwe} of the appendix. Let $\textbf{\upshape A}=(\textbf{\upshape A}^\prime||\textbf{\upshape A}^\prime\textbf{\upshape T}+\textbf{\upshape G})$ be the code as defined in Construction \ref{lossysk}. In our further analysis we can consider only $\textbf{\upshape G}$ instead of $\textbf{\upshape A}$.

\begin{Lem}\label{ginsteadofa}
Let $\kappa$ be an even integer, $\textbf{\upshape A}=(\textbf{\upshape A}^\prime||(\textbf{\upshape A}^\prime\textbf{\upshape T}+\textbf{\upshape G}))$ with $\textbf{\upshape A}^\prime\in\mathbb{Z}_q^{\lambda\times\kappa/2}$, $\textbf{\upshape T}\in\mathbb{Z}_q^{\kappa/2\times\kappa/2}$, $\textbf{\upshape G}\in\mathbb{Z}_q^{\lambda\times\kappa/2}$. For all $\textbf{\upshape x}\in\mathbb{Z}_q^{\kappa/2}$ there is a $\textbf{\upshape x}^\prime\in\mathbb{Z}_q^\kappa$ with $\textbf{\upshape Ax}^\prime=\textbf{\upshape Gx}$.
\end{Lem}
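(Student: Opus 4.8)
The plan is to exploit the block structure of $\textbf{\upshape A}$ directly, without any entropy considerations. Write the sought vector in two halves as $\textbf{\upshape x}^\prime=(\textbf{\upshape x}_1^\prime\,||\,\textbf{\upshape x}_2^\prime)$ with $\textbf{\upshape x}_1^\prime,\textbf{\upshape x}_2^\prime\in\mathbb{Z}_q^{\kappa/2}$. By the definition of matrix multiplication with a concatenated matrix,
\[\textbf{\upshape A}\textbf{\upshape x}^\prime=\textbf{\upshape A}^\prime\textbf{\upshape x}_1^\prime+(\textbf{\upshape A}^\prime\textbf{\upshape T}+\textbf{\upshape G})\textbf{\upshape x}_2^\prime=\textbf{\upshape A}^\prime(\textbf{\upshape x}_1^\prime+\textbf{\upshape T}\textbf{\upshape x}_2^\prime)+\textbf{\upshape G}\textbf{\upshape x}_2^\prime.\]
First I would set $\textbf{\upshape x}_2^\prime:=\textbf{\upshape x}$, which makes the $\textbf{\upshape G}$-term equal to the desired right-hand side $\textbf{\upshape Gx}$; then I would cancel the remaining $\textbf{\upshape A}^\prime$-term by choosing $\textbf{\upshape x}_1^\prime:=-\textbf{\upshape T}\textbf{\upshape x}$ (all arithmetic over $\mathbb{Z}_q$), so that $\textbf{\upshape x}_1^\prime+\textbf{\upshape T}\textbf{\upshape x}_2^\prime=\boldsymbol 0$. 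Substituting back gives $\textbf{\upshape A}\textbf{\upshape x}^\prime=\textbf{\upshape Gx}$, which is exactly the claim; note $\textbf{\upshape x}^\prime=(-\textbf{\upshape Tx}\,||\,\textbf{\upshape x})$ indeed lies in $\mathbb{Z}_q^\kappa$.

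There is essentially no obstacle here: the lemma is a one-line linear-algebra identity, and the only thing to verify is that the block-multiplication distributes as written, which is immediate. (Uniqueness of $\textbf{\upshape x}^\prime$ is neither claimed nor needed — only existence.) The role of this lemma is conceptual rather than technical: it shows that the image of the lossy code $\textbf{\upshape A}$ from Construction \ref{lossysk} contains the image of the discrete-Gaussian block $\textbf{\upshape G}$ alone. Consequently, in the subsequent analysis of Property $2$ of Definition \ref{lossydef} — where one must lower-bound the conditional min-entropy $H_\infty(\textbf{\upshape x}\,|\,f_{\textbf{\upshape A},\textbf{\upshape e}}(\textbf{\upshape x})=f_{\textbf{\upshape A},\tilde{\textbf{\upshape e}}}(\tilde{\textbf{\upshape x}}))$ — one may replace the structured $\lambda\times\kappa$ matrix $\textbf{\upshape A}$ by the simpler $\lambda\times(\kappa/2)$ matrix $\textbf{\upshape G}$ whose entries are i.i.d. samples from $D(\nu)$, since any collision $\textbf{\upshape Ax}+\textbf{\upshape e}=\textbf{\upshape A}\tilde{\textbf{\upshape x}}+\tilde{\textbf{\upshape e}}$ pulls back to a collision of the form $\textbf{\upshape G}\textbf{\upshape z}+\textbf{\upshape e}=\textbf{\upshape G}\tilde{\textbf{\upshape z}}+\tilde{\textbf{\upshape e}}$.
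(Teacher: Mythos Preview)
Your proof is correct and essentially identical to the paper's: both set $\textbf{\upshape x}^\prime=(-\textbf{\upshape Tx}\,||\,\textbf{\upshape x})$ and verify the block-multiplication identity directly. The additional commentary on the lemma's role is accurate but not part of the proof itself.
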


\begin{Lem}\label{techlem}
$-C+\sqrt{C^2+1}\geq\exp(-C)$ for all $C\geq 0$.
\end{Lem}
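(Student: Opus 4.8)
The plan is to rewrite the claimed inequality in a form to which a standard one-variable argument applies. First I would rationalise the left-hand side: since $(-C+\sqrt{C^2+1})(C+\sqrt{C^2+1}) = (C^2+1)-C^2 = 1$, we have $-C+\sqrt{C^2+1} = 1/(C+\sqrt{C^2+1})$, which is strictly positive. Hence the assertion is equivalent to
\[
C + \sqrt{C^2+1} \leq e^{C} \qquad \text{for all } C \geq 0.
\]

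For this reformulated inequality I would use the hyperbolic substitution $C = \sinh t$ with $t\geq 0$, which is an increasing bijection of $[0,\infty)$ onto itself. Then $\sqrt{C^2+1} = \cosh t$ and $C+\sqrt{C^2+1} = \sinh t + \cosh t = e^{t}$, so the inequality becomes $e^{t}\leq e^{\sinh t}$, i.e. $t \leq \sinh t$ for $t\geq 0$. The latter is immediate, e.g. from the power series $\sinh t = t + t^{3}/3! + t^{5}/5! + \cdots$ whose terms are all non-negative for $t\geq 0$, or from $\frac{d}{dt}(\sinh t - t) = \cosh t - 1 \geq 0$ together with $\sinh 0 = 0$. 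Reversing the substitution yields the claim. An alternative self-contained route avoiding the substitution is to set $g(C) = e^{C} - C - \sqrt{C^2+1}$ and check $g(0)=0$, $g'(0)=0$, and $g''(C) = e^{C} - (C^2+1)^{-3/2} \geq 0$ on $[0,\infty)$ since $e^{C}\geq 1\geq (C^2+1)^{-3/2}$ there; then $g'$ is non-decreasing, hence $g'\geq g'(0)=0$, hence $g$ is non-decreasing, hence $g\geq g(0)=0$.

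This is an elementary real-analysis fact, so there is no substantial obstacle. The only mild subtlety is that at $C=0$ both sides of the original inequality are equal and their first derivatives coincide as well, so a naive first-order comparison does not suffice; this is precisely why the proof is routed either through a reformulation that is monotone (the $\sinh$ argument) or through a second-derivative/convexity estimate.
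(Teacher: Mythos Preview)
Your proposal is correct. Both routes you give are valid and complete.

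The paper's own proof is different and considerably terser: it sets $f(C)=(-C+\sqrt{C^2+1})\,e^{C}$, observes $f(0)=1$, and asserts that $f$ is monotonically increasing on $[0,\infty)$, whence $f(C)\geq 1$ and the claim follows. The monotonicity is not justified in the paper; checking it amounts to computing
\[
f'(C)=e^{C}\,(\sqrt{C^2+1}-C)\,\frac{\sqrt{C^2+1}-1}{\sqrt{C^2+1}}\geq 0,
\]
which is straightforward but omitted. Your second approach is the additive counterpart of this multiplicative one (working with $g=e^{C}-C-\sqrt{C^2+1}$ instead of $f=e^{C}/(C+\sqrt{C^2+1})$) and is more explicit, since you actually carry out the derivative computation and explain why a second-order argument is needed given $g(0)=g'(0)=0$. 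Your first approach via $C=\sinh t$ is genuinely different and arguably the cleanest: it reduces the statement to the textbook fact $t\leq\sinh t$, with no derivative bookkeeping at all. Either of your arguments would make the paper's proof self-contained.
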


\begin{Lem}\label{smallnormvec}
Let $\kappa$ be a security parameter, let $s=s(\kappa)=\omega(\log(\kappa))$ and let $\nu=\nu(\kappa)=\text{\upshape\sffamily poly}(\kappa)$. Let $\lambda=\lambda(\kappa)=\text{\upshape\sffamily poly}(\kappa), 0<\zeta=\zeta(\kappa)=\text{\upshape\sffamily poly}(\kappa)$ be integers. Let $\textbf{\upshape G}\leftarrow D(\nu)^{\lambda\times\zeta}$. Then for all $\textbf{\upshape z}\in\{0,1\}^\zeta$ the following hold:
\begin{enumerate}
\item $\Pr[||\textbf{\upshape Gz}||_\infty>\zeta\sqrt{\nu}]\leq\text{\upshape\sffamily neg}(\kappa)$, where $||\cdot||_\infty$ is the supremum norm.
\item $\Pr[||\textbf{\upshape Gz}||_2^2>\lambda\zeta^2\nu]\leq\text{\upshape\sffamily neg}(\kappa)$,
where $||\cdot||_2$ is the Euclidean norm.
\end{enumerate}
\end{Lem}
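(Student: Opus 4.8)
The plan is to prove Lemma~\ref{smallnormvec} via standard Gaussian tail bounds applied entrywise, together with a union bound over the $\lambda$ coordinates of $\textbf{\upshape Gz}$. First I would observe that for any fixed $\textbf{\upshape z}\in\{0,1\}^\zeta$, each entry $(\textbf{\upshape Gz})_k=\sum_{j:z_j=1}G_{k,j}$ is a sum of at most $\zeta$ independent copies of the one-dimensional discrete Gaussian $D(\nu)$; hence, by the reproducibility (or at least sub-Gaussian concentration) of the discrete Gaussian, $(\textbf{\upshape Gz})_k$ is itself a mean-zero discrete Gaussian-like variable with parameter at most $\zeta\nu$ (equivalently standard deviation $O(\sqrt{\zeta\nu})$). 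The key input here is the tail bound for the discrete Gaussian: $\Pr[|X|>c]\leq 2\exp(-\Omega(c^2/\nu))$ for $X\leftarrow D(\nu)$, which is a standard fact (Banaszczyk-type bound) that should already be available in the appendix section referenced as \ref{hardnessoflwe}.

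For part~$1$, I would bound $\Pr[|(\textbf{\upshape Gz})_k|>\zeta\sqrt{\nu}]$. Since $(\textbf{\upshape Gz})_k$ has parameter at most $\zeta\nu$, the threshold $\zeta\sqrt{\nu}$ corresponds to roughly $\sqrt{\zeta}$ standard deviations, so the tail probability is at most $2\exp(-\Omega(\zeta))$. Because $s=s(\kappa)=\omega(\log\kappa)$ and $\zeta$ is polynomial (and, crucially, the hypothesis ties $s$ into the picture so that $\zeta$ can be taken at least $\omega(\log\kappa)$ — I would check whether the intended reading is $\zeta\geq s$ or that the relevant deviation count is $\omega(\log\kappa)$), this is negligible in $\kappa$. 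A union bound over the $\lambda=\text{poly}(\kappa)$ coordinates keeps the total negligible, giving $\Pr[\|\textbf{\upshape Gz}\|_\infty>\zeta\sqrt\nu]\leq\text{neg}(\kappa)$.

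For part~$2$, the cleanest route is to condition on the event from part~$1$: if every coordinate satisfies $|(\textbf{\upshape Gz})_k|\leq\zeta\sqrt\nu$, then $\|\textbf{\upshape Gz}\|_2^2=\sum_{k=1}^\lambda (\textbf{\upshape Gz})_k^2\leq\lambda\cdot\zeta^2\nu$ deterministically. Hence $\Pr[\|\textbf{\upshape Gz}\|_2^2>\lambda\zeta^2\nu]\leq\Pr[\|\textbf{\upshape Gz}\|_\infty>\zeta\sqrt\nu]\leq\text{neg}(\kappa)$, so part~$2$ follows immediately from part~$1$ without any fresh computation. (One could alternatively get a sharper bound of order $\lambda\zeta\nu$ via a chi-squared-type concentration, but since the statement only asks for the weaker $\lambda\zeta^2\nu$, the deterministic implication suffices.)

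The main obstacle I anticipate is not the probabilistic machinery but pinning down the precise form of the discrete Gaussian tail/sum bound and reconciling the role of the parameter $s$ in the hypothesis — the statement introduces $s=\omega(\log\kappa)$ but part~$1$'s threshold $\zeta\sqrt\nu$ does not visibly involve $s$, so I would need to confirm that either $\zeta$ is implicitly assumed to be $\Omega(s)$ (making $\exp(-\Omega(\zeta))$ negligible through $s$) or that the bound on $\|\textbf{\upshape Gz}\|_\infty$ should really read $s\sqrt{\zeta\nu}$ or similar. Once the correct normalisation of the discrete Gaussian and the intended dependence on $s$ are fixed, the rest is a routine union bound.
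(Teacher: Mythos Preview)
Your proposal is correct, and the derivation of part~2 from part~1 (bound each coordinate by $\zeta\sqrt{\nu}$, then square and sum) is exactly what the paper intends. The route to part~1, however, differs from the paper's. The paper's proof is a one-line appeal to Lemma~\ref{disgausbound}, which bounds the $L_1$-norm $\sum_{j=1}^{\zeta}|g_j|$ of a vector of independent $D(\nu)$ variables by $\zeta\sqrt{\nu}$ with overwhelming probability; the triangle inequality then gives $|(\textbf{\upshape Gz})_k|\leq\sum_{j}|G_{k,j}|\leq\zeta\sqrt{\nu}$ for each row, and a union bound over the $\lambda$ rows finishes. That $L_1$ bound is itself proved via Hoeffding after conditioning each $|g_j|$ to lie below $\sqrt{s\nu}$. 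Your approach instead treats $(\textbf{\upshape Gz})_k$ directly as a sub-Gaussian of parameter $\leq\zeta\nu$ and invokes a Banaszczyk-type tail at $\sqrt{\zeta}$ standard deviations. Both are valid; yours is arguably more streamlined, while the paper's has the advantage of not needing any reproducibility or sub-Gaussian property of the \emph{discretised} Gaussian beyond a single-variable tail bound.

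Your worry about the role of $s$ is well placed and applies equally to the paper: the proof of Lemma~\ref{disgausbound} silently chooses $s=\omega(\log\kappa)$ with $s^2=o(\zeta)$, so the argument only goes through when $\zeta$ is super-polylogarithmic. In the sole application (Lemma~\ref{lossysklem}) one has $\zeta=\kappa/2$, so this is harmless, but you are right that the hypothesis $s=\omega(\log\kappa)$ in the statement is there precisely to make the tail probability $\exp(-\Omega(\zeta))$ (or $\exp(-\Omega(s))$ in the paper's version) negligible; it is not that the threshold should read $s\sqrt{\zeta\nu}$.
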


\begin{Lem}\label{maximizing}
Let $\kappa$ be an even security parameter and $\textbf{\upshape A}\in\mathbb{Z}_q^{\lambda\times\kappa}$. Let $s=s(\kappa)=\omega(\log(\kappa))$, let $\mu=\mu(\kappa)$, let $q=\text{\upshape\sffamily poly}(\kappa)$ be a sufficiently large prime modulus and let $\lambda=\lambda(\kappa)=\text{\upshape\sffamily poly}(\kappa)$ be even. Let $\textbf{\upshape e},\tilde{\textbf{\upshape e}}\leftarrow\text{\upshape Sk}(\mu)^\lambda$ and let $\tilde{\boldsymbol\xi}=\argmax_{\boldsymbol\xi\in\mathbb{Z}_q^\kappa}\left\{\Pr_{\textbf{\upshape e}}[\textbf{\upshape e}=\textbf{\upshape A}\boldsymbol\xi+\tilde{\textbf{\upshape e}}]\right\}$. Let $\textbf{\upshape u}=\textbf{\upshape A}\tilde{\boldsymbol\xi}+\tilde{\textbf{\upshape e}}$. Then $||\textbf{\upshape u}||_1\leq\lambda s\sqrt{\mu}$ with probability $1-\text{\upshape\sffamily neg}(\kappa)$.
\end{Lem}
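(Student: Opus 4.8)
The plan is to combine the two supporting facts already in place: the norm concentration for $D(\nu)$-matrices from Lemma~\ref{smallnormvec} and the basic tail behaviour of the symmetric Skellam distribution (the sub-Gaussian-type concentration underlying Theorem~\ref{privthm} and Remark~\ref{skdprem}). First I would unwind the definition of $\tilde{\boldsymbol\xi}$: it is the mode of the posterior on $\boldsymbol\xi$, equivalently the value maximising $\Pr_{\textbf{\upshape e}}[\textbf{\upshape e}=\textbf{\upshape A}\boldsymbol\xi+\tilde{\textbf{\upshape e}}]$. Writing $\textbf{\upshape u}=\textbf{\upshape A}\tilde{\boldsymbol\xi}+\tilde{\textbf{\upshape e}}$, the quantity we must bound is $\|\textbf{\upshape u}\|_1$, which is exactly $\|\textbf{\upshape e}'\|_1$ for the ``candidate error'' $\textbf{\upshape e}'=\textbf{\upshape u}$ that, under $\boldsymbol\xi=\tilde{\boldsymbol\xi}$, would explain the observed LWE instance. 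The key observation is that $\textbf{\upshape u}$ is, by the choice of $\tilde{\boldsymbol\xi}$, \emph{at least as likely} under $\text{\upshape Sk}(\mu)^\lambda$ as the genuine error $\tilde{\textbf{\upshape e}}$: indeed $\boldsymbol\xi=\textbf{0}$ is a legal choice giving $\Pr_{\textbf{\upshape e}}[\textbf{\upshape e}=\tilde{\textbf{\upshape e}}]$, so by maximality $\Pr_{\textbf{\upshape e}}[\textbf{\upshape e}=\textbf{\upshape u}]\geq\Pr_{\textbf{\upshape e}}[\textbf{\upshape e}=\tilde{\textbf{\upshape e}}]$.

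The next step translates this likelihood inequality into a norm inequality. Since $\psi_\mu(k)=e^{-\mu}I_k(\mu)$ and $I_k$ is decreasing in $|k|$ (and the coordinates are i.i.d.), the product $\prod_{\ell}\psi_\mu(u_\ell)\geq\prod_\ell\psi_\mu(\tilde e_\ell)$ forces the multiset of $|u_\ell|$ to be, in an appropriate majorisation/log-concavity sense, no larger than that of $|\tilde e_\ell|$; more concretely, I would use that $-\log\psi_\mu$ grows (roughly quadratically near $0$ and linearly far out, but in any case super-linearly enough) so that the likelihood bound yields $\sum_\ell g(|u_\ell|)\leq\sum_\ell g(|\tilde e_\ell|)$ for $g=-\log\psi_\mu$, and then $g(k)\geq c\cdot k$ for $k$ in the relevant range together with the concentration of $\sum_\ell|\tilde e_\ell|$ (which is $O(\lambda\sqrt\mu)$ with overwhelming probability, since $\E|\tilde e_\ell|=O(\sqrt\mu)$ and Skellam has exponential tails) gives $\|\textbf{\upshape u}\|_1=O(\lambda\sqrt\mu)$; inserting the explicit constant $s=\omega(\log\kappa)$ as the slack in the tail bound yields the stated $\lambda s\sqrt\mu$.

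I would actually expect the cleanest route to avoid majorisation and instead argue directly: condition on $\textbf{\upshape A}$ and on $\tilde{\textbf{\upshape e}}$; by Lemma~\ref{ginsteadofa} it suffices to work with $\textbf{\upshape G}$, and by Lemma~\ref{smallnormvec} every short combination $\textbf{\upshape Gz}$ has $\|\cdot\|_\infty$ and $\|\cdot\|_2$ polynomially bounded with overwhelming probability. One shows that if $\|\textbf{\upshape u}\|_1$ were larger than $\lambda s\sqrt\mu$ then $\Pr_{\textbf{\upshape e}}[\textbf{\upshape e}=\textbf{\upshape u}]$ would be exponentially (in $s$, hence super-polynomially) smaller than $\Pr_{\textbf{\upshape e}}[\textbf{\upshape e}=\tilde{\textbf{\upshape e}}]$ — using that with overwhelming probability $\|\tilde{\textbf{\upshape e}}\|_1\leq\lambda s\sqrt\mu/2$ by Skellam tail bounds and that the Skellam pmf decays like $e^{-\Theta(|k|\log(|k|/\mu))}$ (or at least $e^{-\Theta(|k|)}$ once $|k|\gtrsim\mu$) — contradicting the maximality of $\tilde{\boldsymbol\xi}$. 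A union bound over the (exponentially many) candidate $\boldsymbol\xi$ is \emph{not} needed because $\tilde{\boldsymbol\xi}$ is a single fixed maximiser; we only need the event $\{\|\tilde{\textbf{\upshape e}}\|_1\text{ small}\}$, which holds with probability $1-\text{\upshape\sffamily neg}(\kappa)$.

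The main obstacle is the middle step: carefully controlling the ratio $\psi_\mu(u)/\psi_\mu(\tilde e)$ of Bessel-function values and turning a \emph{product}-likelihood comparison into a \emph{sum}-of-absolute-values bound. The Skellam pmf is not log-linear, so one must be precise about which regime ($|k|\lesssim\mu$ versus $|k|\gtrsim\mu$) each coordinate lies in; the safe choice is to prove a clean one-sided bound $\psi_\mu(k)\leq\psi_\mu(0)\cdot e^{-c|k|}$ valid for \emph{all} $k$ with a universal constant $c$ (which follows from monotonicity of $I_k$ and standard Bessel asymptotics / the Poisson-difference representation), so that $\Pr[\textbf{\upshape e}=\textbf{\upshape u}]\leq\psi_\mu(0)^\lambda e^{-c\|\textbf{\upshape u}\|_1}$, while $\Pr[\textbf{\upshape e}=\tilde{\textbf{\upshape e}}]\geq\big(\min_{|k|\leq O(\sqrt{\mu\log\kappa})}\psi_\mu(k)\big)^\lambda$ is not too small on the good event; comparing the two exponents gives the bound with room to spare, the factor $s=\omega(\log\kappa)$ absorbing all polynomial and logarithmic losses.
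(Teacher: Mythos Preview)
Your high-level strategy is right and matches the paper: use $\boldsymbol\xi=\textbf{0}$ to get $\Pr_{\textbf{e}}[\textbf{e}=\textbf{u}]\geq\Pr_{\textbf{e}}[\textbf{e}=\tilde{\textbf{e}}]$, then show this is incompatible with $\|\textbf{u}\|_1>\lambda s\sqrt\mu$ once $\tilde{\textbf{e}}$ is typical. You also correctly identify log-concavity of $k\mapsto\psi_\mu(k)$ as the relevant structural fact. The paper's proof stays exactly on that first route you sketch and then drop: it uses the Tur\'an-type inequality $I_k(\mu)^2>I_{k-1}(\mu)I_{k+1}(\mu)$ (log-concavity in the order) to show that among all vectors with a fixed $L_1$-norm $C$, the \emph{balanced} one $\tilde{\textbf{u}}=(\lceil C/\lambda\rceil,\ldots,\lfloor C/\lambda\rfloor)$ is most likely; hence $\Pr[\textbf{e}=\textbf{u}]\leq\Pr[\textbf{e}=\tilde{\textbf{u}}]$. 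If $C>\lambda s\sqrt\mu$ then every coordinate of $\tilde{\textbf{u}}$ exceeds $\lfloor s\sqrt\mu\rfloor$, while with probability $1-\text{\sffamily neg}(\kappa)$ every $|\tilde e_\ell|\leq s\sqrt\mu$, so coordinatewise monotonicity gives $\Pr[\textbf{e}=\tilde{\textbf{e}}]>\Pr[\textbf{e}=\tilde{\textbf{u}}]\geq\Pr[\textbf{e}=\textbf{u}]$, the desired contradiction. No appeal to Lemma~\ref{ginsteadofa} or Lemma~\ref{smallnormvec} is made or needed; the statement holds for arbitrary $\textbf{A}$.

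The ``clean'' route you prefer, however, has a real gap. The bound $\psi_\mu(k)\leq\psi_\mu(0)e^{-c|k|}$ with a \emph{universal} $c>0$ is false: near $k=0$ the Skellam pmf is Gaussian-like, and $\psi_\mu(1)/\psi_\mu(0)=I_1(\mu)/I_0(\mu)\to 1$ as $\mu\to\infty$, so the best $c$ valid for all $k$ is only of order $1/\mu$. With that $c$ your upper bound becomes $\Pr[\textbf{e}=\textbf{u}]\leq\psi_\mu(0)^\lambda e^{-\|\textbf{u}\|_1/(2\mu)}$, and for $\|\textbf{u}\|_1=\lambda s\sqrt\mu$ the exponent is only $\lambda s/(2\sqrt\mu)$, whereas your lower bound on $\Pr[\textbf{e}=\tilde{\textbf{e}}]$ under $|\tilde e_\ell|\leq s\sqrt\mu$ has exponent of order $\lambda s^2$. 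Since $s=\omega(\log\kappa)$ but $\mu$ is polynomial (so $s\ll\sqrt\mu$), the comparison goes the wrong way and no contradiction follows. The point is precisely that a uniform linear-in-$|k|$ bound on $-\log\psi_\mu$ is too loose; you must exploit convexity of $-\log\psi_\mu$ (equivalently, the Tur\'an inequality) to reduce to the balanced vector, where every coordinate individually sits in the tail and the simple monotonicity comparison suffices. Your first, ``majorisation/log-concavity'' paragraph was on the right track; the second and third paragraphs replace it with an argument that does not close.
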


We now show the lossiness of Construction \ref{lossysk} for the error distribution $\text{\upshape Sk}(\mu)$.

\begin{Lem}[Lossiness of Construction $\ref{lossysk}$]\label{lossysklem}
Let $\kappa$ be an even security parameter, $s=s(\kappa)=\omega(\log(\kappa))$, let $\nu=\nu(\kappa)$, let $q=\text{\upshape\sffamily poly}(\kappa)$ be a sufficiently large prime modulus, let $\lambda=\lambda(\kappa)=\text{\upshape\sffamily poly}(\kappa)$ and let $\Delta=\Delta(\kappa)=\omega(\log(\kappa))$. Let $\mu=\mu(\kappa)\geq 4\lambda^2\nu s^2$. Let $f_{\textbf{\upshape B},\textbf{\upshape b}}(\textbf{\upshape y})=\textbf{\upshape B}\textbf{\upshape y} + \textbf{\upshape b}$. Let $\textbf{\upshape A}\leftarrow\{\mathcal{C}_{\kappa,\lambda,q,\nu}\}$ for $\{\mathcal{C}_{\kappa,\lambda,q,\nu}\}$ as in Construction $\text{\upshape\ref{lossysk}}$, $\tilde{\textbf{\upshape x}}\leftarrow\mathcal{U}(\mathbb{Z}_q^\kappa), \tilde{\textbf{\upshape e}}\leftarrow\text{\upshape Sk}(\mu)^\lambda$, $\textbf{\upshape x}\leftarrow\mathcal{U}(\mathbb{Z}_q^\kappa)$ and $\textbf{\upshape e}\leftarrow\text{\upshape Sk}(\mu)^\lambda$. Then
\[\Pr_{(\textbf{\upshape A},\tilde{\textbf{\upshape x}},\tilde{\textbf{\upshape e}})}[H_{\infty}(\textbf{\upshape x}\,|\,f_{\textbf{\upshape A},\textbf{\upshape e}}(\textbf{\upshape x})=f_{\textbf{\upshape A},\tilde{\textbf{\upshape e}}}(\tilde{\textbf{\upshape x}}))\geq\Delta]\geq 1-\text{\upshape\sffamily neg}(\kappa).\]
\end{Lem}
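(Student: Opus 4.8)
The plan is to estimate the conditional min-entropy directly through its Bayesian form together with a counting argument over the many low-norm preimages supplied by the Gaussian block $\textbf{G}$ of the code. Fix the instance $\textbf{v}:=\textbf{A}\tilde{\textbf{x}}+\tilde{\textbf{e}}$. Since $\textbf{x}\leftarrow\mathcal{U}(\mathbb{Z}_q^\kappa)$ is uniform, Bayes' rule turns the quantity whose logarithm must be bounded into
\[\max_{\boldsymbol\xi\in\mathbb{Z}_q^\kappa}\Pr_{(\textbf{x},\textbf{e})}[\textbf{x}=\boldsymbol\xi\mid\textbf{Ax}+\textbf{e}=\textbf{v}]=\frac{\max_{\boldsymbol\xi}\Pr_{\textbf{e}}[\textbf{e}=\textbf{v}-\textbf{A}\boldsymbol\xi]}{\sum_{\boldsymbol\xi}\Pr_{\textbf{e}}[\textbf{e}=\textbf{v}-\textbf{A}\boldsymbol\xi]},\]
so it suffices to show that, with probability $1-\text{neg}(\kappa)$ over $(\textbf{A},\tilde{\textbf{x}},\tilde{\textbf{e}})$, the denominator is at least $2^{\Delta}$ times the numerator. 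Let $\boldsymbol\xi^{*}$ be a maximiser and set $\textbf{u}^{*}:=\textbf{v}-\textbf{A}\boldsymbol\xi^{*}$. Substituting $\tilde{\textbf{e}}=\textbf{v}-\textbf{A}\tilde{\textbf{x}}$ identifies $\textbf{u}^{*}=\textbf{A}\tilde{\boldsymbol\xi}+\tilde{\textbf{e}}$ with $\tilde{\boldsymbol\xi}=\argmax_{\boldsymbol\xi}\Pr_{\textbf{e}}[\textbf{e}=\textbf{A}\boldsymbol\xi+\tilde{\textbf{e}}]$, i.e. exactly the vector of Lemma~\ref{maximizing}; hence $\|\textbf{u}^{*}\|_{1}\le\lambda s\sqrt{\mu}$ with probability $1-\text{neg}(\kappa)$, uniformly in $\textbf{A}$ and $\tilde{\textbf{x}}$. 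This controls the numerator.

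For the denominator I would keep only the preimages $\boldsymbol\xi^{*}+\textbf{z}'$ with $\textbf{z}\in\{0,1\}^{\kappa/2}$, where $\textbf{z}'=(-\textbf{T}\textbf{z}\,\|\,\textbf{z})\in\mathbb{Z}_q^{\kappa}$ is the vector from Lemma~\ref{ginsteadofa} satisfying $\textbf{A}\textbf{z}'=\textbf{G}\textbf{z}$; distinct $\textbf{z}$ give distinct preimages, and $\textbf{v}-\textbf{A}(\boldsymbol\xi^{*}+\textbf{z}')=\textbf{u}^{*}-\textbf{G}\textbf{z}$. Using Lemma~\ref{smallnormvec} and a first-moment argument over the random choice of $\textbf{G}$ (grouping the binary vectors $\textbf{z}$ by Hamming weight, so that no union bound over all $2^{\kappa/2}$ of them is required), with probability $1-\text{neg}(\kappa)$ at least $2^{\kappa/2-1}$ of the $\textbf{z}$ simultaneously satisfy $\|\textbf{G}\textbf{z}\|_{\infty}\le(\kappa/2)\sqrt{\nu}$ and $\|\textbf{G}\textbf{z}\|_{2}^{2}\le\lambda(\kappa/2)^{2}\nu$; I discard the rest. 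The denominator is then at least $2^{\kappa/2-1}$ times the minimum over the retained $\textbf{z}$ of
\[\frac{\Pr_{\textbf{e}}[\textbf{e}=\textbf{u}^{*}-\textbf{G}\textbf{z}]}{\Pr_{\textbf{e}}[\textbf{e}=\textbf{u}^{*}]}=\prod_{k=1}^{\lambda}\frac{I_{u^{*}_{k}-(\textbf{G}\textbf{z})_{k}}(\mu)}{I_{u^{*}_{k}}(\mu)}.\]

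The analytic core is a lower bound on each Bessel ratio. From the continued-fraction expansion of $I_{n+1}(\mu)/I_{n}(\mu)$ one gets, for $n\ge0$ and an absolute constant $c$, $I_{n+1}(\mu)/I_{n}(\mu)\ge\sqrt{(n+c)^{2}/\mu^{2}+1}-(n+c)/\mu\ge e^{-(n+c)/\mu}$ by Lemma~\ref{techlem}. Telescoping along the monotone integer path from $u^{*}_{k}$ to $u^{*}_{k}-(\textbf{G}\textbf{z})_{k}$ — where $I_{n}=I_{-n}$ keeps every visited index of absolute value $\le|u^{*}_{k}|+|(\textbf{G}\textbf{z})_{k}|$, and steps decreasing $|n|$ contribute a factor $\ge1$ — yields the per-coordinate bound $\exp\!\big(-|(\textbf{G}\textbf{z})_{k}|(|u^{*}_{k}|+|(\textbf{G}\textbf{z})_{k}|+c)/\mu\big)$. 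Summing the exponents over $k$ and applying H\"older with $\|\textbf{u}^{*}\|_{1}\le\lambda s\sqrt{\mu}$ and the two bounds on $\textbf{G}\textbf{z}$ gives
\[\frac{\Pr_{\textbf{e}}[\textbf{e}=\textbf{u}^{*}-\textbf{G}\textbf{z}]}{\Pr_{\textbf{e}}[\textbf{e}=\textbf{u}^{*}]}\ge\exp\!\left(-\frac{1}{\mu}\Big(\tfrac{\kappa}{2}\sqrt{\nu}\,\lambda s\sqrt{\mu}+\lambda\big(\tfrac{\kappa}{2}\big)^{2}\nu+c\lambda\tfrac{\kappa}{2}\sqrt{\nu}\Big)\right).\]
The hypothesis $\mu\ge4\lambda^{2}\nu s^{2}$ gives $\sqrt{\mu}\ge2\lambda s\sqrt{\nu}$, which makes the first (dominant) exponent term at most $\kappa/4$ and the remaining two $o(\kappa)$ (using also that $\lambda$ is at least roughly linear in $\kappa$, as in Theorem~\ref{lweskthm}, and $s=\omega(\log\kappa)$). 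Hence denominator $/$ numerator $\ge 2^{\kappa/2-1}e^{-\kappa/4-o(\kappa)}=2^{\Omega(\kappa)}$, which exceeds $2^{\Delta}$ for any $\Delta=\omega(\log\kappa)$ (indeed for any $\Delta$ below a small constant times $\kappa$); thus $H_{\infty}(\textbf{x}\mid f_{\textbf{A},\textbf{e}}(\textbf{x})=f_{\textbf{A},\tilde{\textbf{e}}}(\tilde{\textbf{x}}))\ge\Delta$ with probability $1-\text{neg}(\kappa)$, as claimed.

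The main obstacle will be the Bessel-function ratio estimate and its telescoping, in particular the sign bookkeeping that guarantees an index path crossing the origin costs nothing (via $I_{n}=I_{-n}$) and that all consecutive-ratio bounds combine into the stated $\exp(-O(\cdot/\mu))$ factor. A secondary difficulty is the counting step: since the tail bound of Lemma~\ref{smallnormvec} is only known to be negligible, one must replace a naive union bound over the exponentially many binary preimages by a first-moment argument stratified by Hamming weight in order to retain $2^{\Omega(\kappa)}$ admissible preimages.
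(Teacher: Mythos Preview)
Your proposal is correct and follows essentially the same route as the paper: Bayes' rule to reduce to a ratio of Skellam masses, Lemma~\ref{maximizing} to bound $\|\textbf{u}^{*}\|_{1}$, the Bessel-ratio bound of Lemma~\ref{modbesrat} telescoped and then relaxed via Lemma~\ref{techlem} into an exponential, H\"older's inequality, the passage from $\textbf{A}$ to $\textbf{G}$ via Lemma~\ref{ginsteadofa}, restriction to $\textbf{z}\in\{0,1\}^{\kappa/2}$, and the norm bounds of Lemma~\ref{smallnormvec} together with $\mu\ge4\lambda^{2}\nu s^{2}$ to get $2^{\kappa/2}\cdot e^{-\kappa/4-o(\kappa)}=2^{\Omega(\kappa)}\ge 2^{\Delta}$.

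Two remarks on where you differ in presentation rather than substance. First, the paper handles the sign issue not by tracking an index path across the origin but by invoking $I_{n}=I_{-n}$ to assume $\tilde{e}_{j}\ge0$ and then, for each fixed $\textbf{z}$, flipping the sign of the $j$th row of $\textbf{A}$ whenever $(\textbf{A}\textbf{z})_{j}<0$; this is a worst-case move that preserves the (symmetric) distribution of $\textbf{A}$ and avoids your path-crossing bookkeeping altogether. Second, at the counting step the paper simply plugs Lemma~\ref{smallnormvec} into the full sum over $\{0,1\}^{\kappa/2}$ without making the union-bound issue explicit; your observation that a naive union bound over $2^{\kappa/2}$ vectors is not covered by a merely negligible per-vector tail, and that one should instead use a first-moment argument, is a genuine improvement in rigour over the paper's presentation.
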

\begin{proof}
Let $(\textbf{\upshape Mz})_j$ denote the $j^{\text{th}}$ entry of $\textbf{\upshape Mz}$ for a matrix $\textbf{\upshape M}$ and a vector $\textbf{\upshape z}$. Let $\textbf{\upshape A}=(\textbf{\upshape A}^\prime||(\textbf{\upshape A}^\prime\textbf{\upshape T}+\textbf{\upshape G}))$ be distributed according to $\mathcal{C}_{\kappa,\lambda,q,\nu}$ with $\textbf{\upshape A}^\prime\leftarrow\mathcal{U}(\mathbb{Z}_q^{\lambda\times\kappa/2})$, $\textbf{\upshape T}\leftarrow\mathcal{U}(\mathbb{Z}_q^{\kappa/2\times\kappa/2})$ and $\textbf{\upshape G}\leftarrow D(\nu)^{\lambda\times\kappa/2}$. Let $\tilde{\textbf{\upshape e}}=(\tilde{e}_j)_{j=1,\ldots,\lambda}\leftarrow\text{\upshape Sk}(\mu)^\lambda$ and let $\tilde{\boldsymbol\xi}=\argmax_{\boldsymbol\xi\in\mathbb{Z}_q^\kappa}\left\{\Pr_{\textbf{\upshape e}}[\textbf{\upshape e}=\textbf{\upshape A}\boldsymbol\xi+\tilde{\textbf{\upshape e}}]\right\}$. Then we have the following chain of (in)equations:
\begin{align*}
 & \Pr_{(\textbf{\upshape A},\tilde{\textbf{\upshape x}},\tilde{\textbf{\upshape e}})}[H_{\infty}(\textbf{\upshape x}\,|\,f_{\textbf{\upshape A},\textbf{\upshape e}}(\textbf{\upshape x})=f_{\textbf{\upshape A},\tilde{\textbf{\upshape e}}}(\tilde{\textbf{\upshape x}}))\geq\Delta]\\
= & \Pr_{(\textbf{\upshape A},\tilde{\textbf{\upshape x}},\tilde{\textbf{\upshape e}})}\left[\max_{\boldsymbol\xi\in\mathbb{Z}_q^\kappa}\left\{\Pr_{(\textbf{x},\textbf{\upshape e})}[\textbf{x}=\boldsymbol\xi\,|\,\textbf{Ax} + \textbf{e}=\textbf{A}\tilde{\textbf{x}} + \tilde{\textbf{e}}]\right\}\leq 2^{-\Delta}\right]\\
= & \Pr_{(\textbf{\upshape A},\tilde{\textbf{\upshape x}},\tilde{\textbf{\upshape e}})}\left[\max_{\boldsymbol\xi\in\mathbb{Z}_q^\kappa}\left\{\Pr_{(\textbf{x},\textbf{\upshape e})}[\textbf{Ax} + \textbf{e}=\textbf{A}\tilde{\textbf{x}} + \tilde{\textbf{e}}\,|\,\textbf{\upshape x}=\boldsymbol\xi]\frac{\Pr_{\textbf{x}}[\textbf{x}=\boldsymbol\xi]}{\Pr_{(\textbf{x},\textbf{e})}[\textbf{\upshape A}\textbf{\upshape x}+\textbf{\upshape e}=\textbf{\upshape A}\tilde{\textbf{x}}+\tilde{\textbf{e}}]}\right\}\right.\\
& \mbox{\,\,\,\,\,\,\,\,\,\,\,\,\,\,\,\,\,\,\,\,\,\,\,\,\,\,\,\,}\left.\leq 2^{-\Delta}\right]\numberthis\label{bayeseq}\\
= & \Pr_{(\textbf{\upshape A},\tilde{\textbf{\upshape x}},\tilde{\textbf{\upshape e}})}\left[\max_{\boldsymbol\xi\in\mathbb{Z}_q^\kappa}\left\{\Pr_{(\textbf{x},\textbf{\upshape e})}[\textbf{Ax} + \textbf{e}=\textbf{A}\tilde{\textbf{x}} + \tilde{\textbf{e}}\,|\,\textbf{\upshape x}=\boldsymbol\xi]\cdot\right.\right.\\ 
& \mbox{\,\,\,\,\,\,\,\,\,\,\,\,\,\,\,\,\,\,\,\,\,\,\,\,\,\,\,\,}\left.\left.\cdot\frac{\Pr_{\textbf{x}}[\textbf{\upshape x}=\boldsymbol\xi]}{\sum_{\textbf{\upshape z}\in\mathbb{Z}_q^\kappa}\Pr_{(\textbf{x},\textbf{\upshape e})}[\textbf{\upshape A}\textbf{\upshape x}+\textbf{\upshape e}=\textbf{\upshape A}\tilde{\textbf{\upshape x}}+\tilde{\textbf{\upshape e}}\,|\,\textbf{\upshape x}=\textbf{\upshape z}]\cdot\Pr_{\textbf{x}}[\textbf{\upshape x}=\textbf{\upshape z}]}\right\}\leq 2^{-\Delta}\right]\\
= & \Pr_{(\textbf{\upshape A},\tilde{\textbf{\upshape x}},\tilde{\textbf{\upshape e}})}\left[\max_{\boldsymbol\xi\in\mathbb{Z}_q^\kappa}\left\{\Pr_{\textbf{\upshape e}}[\textbf{e}=\textbf{A}(\tilde{\textbf{x}}-\boldsymbol\xi) + \tilde{\textbf{e}}]\cdot\right.\right.\\ 
& \mbox{\,\,\,\,\,\,\,\,\,\,\,\,\,\,\,\,\,\,\,\,\,\,\,\,\,\,\,\,}\left.\left.\cdot\frac{\Pr_{\textbf{x}}[\textbf{\upshape x}=\boldsymbol\xi]}{\sum_{\textbf{\upshape z}\in\mathbb{Z}_q^\kappa}\Pr_{\textbf{\upshape e}}[\textbf{\upshape e}=\textbf{\upshape A}(\tilde{\textbf{\upshape x}}-\textbf{z})+\tilde{\textbf{\upshape e}}]\cdot\Pr_{\textbf{x}}[\textbf{\upshape x}=\textbf{\upshape z}]}\right\}\leq 2^{-\Delta}\right]\\
= & \Pr_{(\textbf{\upshape A},\tilde{\textbf{\upshape x}},\tilde{\textbf{\upshape e}})}\left[\max_{\boldsymbol\xi\in\mathbb{Z}_q^\kappa}\left\{\Pr_{\textbf{\upshape e}}[\textbf{e}=\textbf{A}(\tilde{\textbf{x}}-\boldsymbol\xi) + \tilde{\textbf{e}}]\cdot\right.\right.\\ 
& \mbox{\,\,\,\,\,\,\,\,\,\,\,\,\,\,\,\,\,\,\,\,\,\,\,\,\,\,\,\,}\left.\left.\cdot\frac{1}{\sum_{\textbf{\upshape z}\in\mathbb{Z}_q^\kappa}\Pr_{\textbf{\upshape e}}[\textbf{\upshape e}=\textbf{\upshape A}(\tilde{\textbf{\upshape x}}-\textbf{z})+\tilde{\textbf{\upshape e}}]}\right\}\leq 2^{-\Delta}\right]\numberthis\label{applyuniform}\\
= & \Pr_{(\textbf{\upshape A},\tilde{\textbf{\upshape x}},\tilde{\textbf{\upshape e}})}\left[\max_{\boldsymbol\xi\in\mathbb{Z}_q^\kappa}\left\{\Pr_{\textbf{\upshape e}}[\textbf{e}=\textbf{A}\boldsymbol\xi + \tilde{\textbf{e}}]\cdot\frac{1}{\sum_{\textbf{\upshape z}\in\mathbb{Z}_q^\kappa}\Pr_{\textbf{\upshape e}}[\textbf{\upshape e}=\textbf{\upshape A}(\tilde{\textbf{\upshape x}}-\textbf{z})+\tilde{\textbf{\upshape e}}]}\right\}\leq 2^{-\Delta}\right]\numberthis\label{xiorxminusxi}\\
= & \Pr_{(\textbf{\upshape A},\tilde{\textbf{\upshape e}})}\left[\max_{\boldsymbol\xi\in\mathbb{Z}_q^\kappa}\left\{\frac{\Pr_{\textbf{\upshape e}}[\textbf{e}=\textbf{A}\boldsymbol\xi + \tilde{\textbf{e}}]}{\sum_{\textbf{\upshape z}\in\mathbb{Z}_q^\kappa}\Pr_{\textbf{\upshape e}}[\textbf{\upshape e}=\textbf{\upshape A}\textbf{z}+\tilde{\textbf{\upshape e}}]}\right\}\leq 2^{-\Delta}\right]\numberthis\label{eliminatevariable}\\
= & \Pr_{(\textbf{\upshape A},\tilde{\textbf{\upshape e}})}\left[\min_{\boldsymbol\xi\in\mathbb{Z}_q^\kappa}\left\{\frac{\sum_{\textbf{\upshape z}\in\mathbb{Z}_q^\kappa}\Pr_{\textbf{\upshape e}}[\textbf{\upshape e}=\textbf{\upshape A}\textbf{z}+\tilde{\textbf{\upshape e}}]}{\Pr_{\textbf{\upshape e}}[\textbf{e}=\textbf{A}\boldsymbol\xi + \tilde{\textbf{e}}]}\right\}> 2^{\Delta}\right]\\
= & \Pr_{(\textbf{\upshape A},\tilde{\textbf{\upshape e}})}\left[\frac{\sum_{\textbf{\upshape z}\in\mathbb{Z}_q^\kappa}\Pr_{\textbf{\upshape e}}[\textbf{\upshape e}=\textbf{\upshape A}\textbf{z}+\tilde{\textbf{\upshape e}}]}{\Pr_{\textbf{\upshape e}}[\textbf{e}=\textbf{A}\tilde{\boldsymbol\xi} + \tilde{\textbf{e}}]}> 2^{\Delta}\right]\numberthis\label{realisemin}\\
= & \Pr_{(\textbf{\upshape A},\tilde{\textbf{\upshape e}})}\left[\sum_{\textbf{\upshape z}\in\mathbb{Z}_q^\kappa}\frac{\Pr_{\textbf{\upshape e}}[\textbf{\upshape e}=\textbf{\upshape A}(\textbf{z}+\tilde{\boldsymbol\xi})+\tilde{\textbf{\upshape e}}]}{\Pr_{\textbf{\upshape e}}[\textbf{e}=\textbf{A}\tilde{\boldsymbol\xi} + \tilde{\textbf{e}}]}> 2^{\Delta}\right]\numberthis\label{indexshift}\\
= & \Pr_{(\textbf{\upshape A},\tilde{\textbf{\upshape e}})}\left[\sum_{\textbf{\upshape z}\in\mathbb{Z}_q^\kappa}\frac{\prod_{j=1}^\lambda \exp(-\mu)\cdot I_{(\textbf{\upshape A}(\textbf{\upshape z}+\tilde{\boldsymbol\xi}))_j+\tilde{e}_j}(\mu)}{\prod_{j=1}^\lambda \exp(-\mu)\cdot I_{(\textbf{A}\tilde{\boldsymbol\xi})_j+\tilde{e}_j}(\mu)}> 2^{\Delta}\right]\\
= & \Pr_{(\textbf{\upshape A},\tilde{\textbf{\upshape e}})}\left[\sum_{\textbf{\upshape z}\in\mathbb{Z}_q^\kappa}\prod_{j=1}^\lambda\frac{I_{(\textbf{\upshape A}(\textbf{\upshape z}+\tilde{\boldsymbol\xi}))_j+\tilde{e}_j}(\mu)}{I_{(\textbf{A}\tilde{\boldsymbol\xi})_j+\tilde{e}_j}(\mu)}> 2^{\Delta}\right]\\
\geq & \Pr_{(\textbf{\upshape A},\tilde{\textbf{\upshape e}})}\left[\sum_{\textbf{\upshape z}\in\mathbb{Z}_q^\kappa}\prod_{j=1}^\lambda\prod_{k=1+(\textbf{A}\tilde{\boldsymbol\xi})_j+\tilde{e}_j}^{(\textbf{\upshape A}(\textbf{\upshape z}+\tilde{\boldsymbol\xi}))_j+\tilde{e}_j}\frac{-k+\sqrt{k^2+\mu^2}}{\mu}> 2^{\Delta}\right]\numberthis\label{approxmodbes}\\
\geq & \Pr_{(\textbf{\upshape A},\tilde{\textbf{\upshape e}})}\left[\sum_{\textbf{\upshape z}\in\mathbb{Z}_q^\kappa}\prod_{j=1}^\lambda\left(\frac{-((\textbf{\upshape A}(\textbf{\upshape z}+\tilde{\boldsymbol\xi}))_j+\tilde{e}_j)}{\mu}+\sqrt{\left(\frac{(\textbf{\upshape A}(\textbf{\upshape z}+\tilde{\boldsymbol\xi}))_j+\tilde{e}_j}{\mu}\right)^2+1}\right)^{(\textbf{\upshape A}\textbf{\upshape z})_j}\right.\\
&\phantom{\Pr_{(\textbf{\upshape A},\tilde{\textbf{\upshape e}})}\left[\sum_{\textbf{\upshape z}\in\mathbb{Z}_q^\kappa}\prod_{j=1}^\lambda\left(\frac{-((\textbf{\upshape A}\textbf{\upshape z})_j+\tilde{e}_j)}{\mu}\right)\right]}\left.> 2^{\Delta}\right]\numberthis\label{mondeceq}\\
\geq & \Pr_{(\textbf{\upshape A},\tilde{\textbf{\upshape e}})}\left[\sum_{\textbf{\upshape z}\in\mathbb{Z}_q^\kappa}\prod_{j=1}^\lambda\exp\left(-\frac{(\textbf{\upshape A}(\textbf{\upshape z}+\tilde{\boldsymbol\xi}))_j+\tilde{e}_j}{\mu}\right)^{(\textbf{\upshape A}\textbf{\upshape z})_j}> 2^{\Delta}\right]\numberthis\label{techlemeq}\\
= & \Pr_{(\textbf{\upshape A},\tilde{\textbf{\upshape e}})}\left[\sum_{\textbf{\upshape z}\in\mathbb{Z}_q^\kappa}\prod_{j=1}^\lambda\exp\left(-\frac{(\textbf{\upshape A}\textbf{\upshape z})_j^2+(\textbf{\upshape A}\tilde{\boldsymbol\xi})_j\cdot(\textbf{\upshape A}\textbf{\upshape z})_j+\tilde{e}_j\cdot(\textbf{\upshape A}\textbf{\upshape z})_j}{\mu}\right)> 2^{\Delta}\right]\\
= & \Pr_{(\textbf{\upshape A},\tilde{\textbf{\upshape e}})}\left[\sum_{\textbf{\upshape z}\in\mathbb{Z}_q^\kappa}\prod_{j=1}^\lambda\exp\left(-\frac{(\textbf{\upshape A}\textbf{\upshape z})_j^2+(\textbf{\upshape A}\textbf{\upshape z})_j\cdot((\textbf{\upshape A}\tilde{\boldsymbol\xi})_j+\tilde{e}_j)}{\mu}\right)> 2^{\Delta}\right]\\
\geq & \Pr_{(\textbf{\upshape A},\tilde{\textbf{\upshape e}})}\left[\sum_{\textbf{\upshape z}\in\mathbb{Z}_q^\kappa}\exp\left(-\frac{||\textbf{\upshape A}\textbf{\upshape z}||_2^2+||\textbf{\upshape A}\textbf{\upshape z}||_\infty\cdot||\textbf{\upshape A}\tilde{\boldsymbol\xi}+\tilde{\textbf{e}}||_1}{\mu}\right)> 2^{\Delta}\right]\numberthis\label{hoelderineq}\\
\geq & \Pr_{\textbf{\upshape A}}\left[\sum_{\textbf{\upshape z}\in\mathbb{Z}_q^\kappa}\exp\left(-\frac{||\textbf{\upshape A}\textbf{\upshape z}||_2^2+||\textbf{\upshape A}\textbf{\upshape z}||_\infty\cdot\lambda s\sqrt{\mu}}{\mu}\right)> 2^{\Delta}\right]-\text{\upshape\sffamily neg}(\kappa)\numberthis\label{maximizingineq}\\
\geq & \Pr_{\textbf{\upshape G}}\left[\sum_{\textbf{\upshape z}\in\mathbb{Z}_q^{\kappa/2}}\exp\left(-\frac{||\textbf{\upshape G}\textbf{\upshape z}||_2^2+||\textbf{\upshape G}\textbf{\upshape z}||_\infty\cdot\lambda s\sqrt{\mu}}{\mu}\right)> 2^{\Delta}\right]-\text{\upshape\sffamily neg}(\kappa).\numberthis\label{fromAtoG}
\end{align*}

Equation \eqref{bayeseq} is an application of the Bayes rule and Equation \eqref{applyuniform} applies, since $\textbf{\upshape x}$ is sampled according to a uniform distribution. Equation \eqref{xiorxminusxi} applies, since maximising over $\boldsymbol\xi$ is the same as maximising over $\tilde{\textbf{x}}-\boldsymbol\xi$. Equation \eqref{eliminatevariable} is valid since in the denominator we are summing over all possible $\textbf{\upshape z}\in\mathbb{Z}_q^\kappa$. Equation \eqref{realisemin} holds by definition of $\tilde{\boldsymbol\xi}$. Equation \eqref{indexshift} is an index shift by $\tilde{\boldsymbol\xi}$. Inequation \eqref{approxmodbes} follows from essential properties of the modified Bessel functions (iterative application of Lemma \ref{modbesrat}). Note that the modified Bessel function of the first kind is symmetric when considered over integer orders. Therefore, from this point of the chain of (in)equations (i.e. from Inequation \eqref{approxmodbes}), we can assume that $\tilde{e}_j\geq 0$. Moreover, we can assume that $(\textbf{\upshape A}\textbf{\upshape z})_j\geq 0$, since otherwise $I_{(\textbf{\upshape A}\textbf{\upshape z})_j+\tilde{e}_j}(\mu)>I_{-(\textbf{\upshape A}\textbf{\upshape z})_j+\tilde{e}_j}(\mu)$. I.e. if $(\textbf{\upshape A}\textbf{\upshape z})_j< 0$, then we implicitly change the sign of the $j^{\text{th}}$ row in the original matrix $\textbf{\upshape A}$ while considering the particular $\textbf{\upshape z}$. In this way, we are always considering the worst-case scenario for every $\textbf{\upshape z}$. Note that this step does not change the distribution of $\textbf{\upshape A}$, since $\{\mathcal{C}_{\kappa,\lambda,q,\nu}\}$ is symmetric. Inequation \eqref{mondeceq} holds, since $f_\mu(k)=(-k+\sqrt{k^2+\mu^2})/\mu$ is a monotonically decreasing function. Inequation \eqref{techlemeq} follows from Lemma \ref{techlem} by setting $C=((\textbf{\upshape A}\textbf{\upshape z})_j+\tilde{e}_j)/\mu$. Inequation \eqref{hoelderineq} holds because of the H{\"o}lder's inequality.
Inequation \eqref{maximizingineq} follows from Lemma \ref{maximizing}. Inequation \eqref{fromAtoG} follows from Lemma \ref{ginsteadofa}, since 
\[\textbf{\upshape A}=(\textbf{\upshape A}^\prime||(\textbf{\upshape A}^\prime\textbf{\upshape T}+\textbf{\upshape G})).\]
Now consider the set $\mathcal{Z}=\{0,1\}^{\kappa/2}$. Then $|\mathcal{Z}|=2^{\kappa/2}$. Since $\mu\geq 4\lambda^2\nu s^2$, from Lemma \ref{smallnormvec}, it follows that
\begin{align*} & \Pr_{(\textbf{\upshape G},\tilde{\textbf{\upshape e}})}\left[\sum_{\textbf{\upshape z}\in\mathcal{Z}}\exp\left(-\frac{||\textbf{\upshape G}\textbf{\upshape z}||_2^2+||\textbf{\upshape G}\textbf{\upshape z}||_\infty\cdot\lambda s\sqrt{\mu}||_1}{\mu}\right)\geq 2^{\kappa/2}\cdot\exp\left(-\frac{\kappa}{4}-\frac{\kappa}{16s^2}\right)\right]\\
\geq & 1-\text{\upshape\sffamily neg}(\kappa),
\end{align*}
where the norm is computed in the central residue-class representation of the elements in $\mathbb{Z}_q$. Moreover we have
\[2^{\kappa/2}\cdot\exp\left(-\frac{\kappa}{4}-\frac{\kappa}{16s^2}\right)>C^\kappa\]
for some constant $C>1$. Therefore
\begin{align*}
 & \Pr_{(\textbf{\upshape A},\tilde{\textbf{\upshape x}},\tilde{\textbf{\upshape e}})}[H_{\infty}(\textbf{\upshape x}\,|\,f_{\textbf{\upshape A},\textbf{\upshape e}}(\textbf{\upshape x})=f_{\textbf{\upshape A},\tilde{\textbf{\upshape e}}}(\tilde{\textbf{\upshape x}}))\geq\Delta]\\ 
 \geq & \,\Pr_{\textbf{\upshape G}}\left[\sum_{\textbf{\upshape z}\in\mathcal{Z}}\exp\left(-\frac{||\textbf{\upshape G}\textbf{\upshape z}||_2^2+||\textbf{\upshape G}\textbf{\upshape z}||_\infty\cdot\lambda s\sqrt{\mu}}{\mu}\right)> 2^{\Delta}\right]-\text{\upshape\sffamily neg}(\kappa)\\
\geq & \,1-\text{\upshape\sffamily neg}(\kappa)-\text{\upshape\sffamily neg}(\kappa)\\
= & \,1-\text{\upshape\sffamily neg}(\kappa).
\end{align*}
\hfill$\qed$\end{proof}

\noindent Putting the previous results together, we finally show the hardness of the LWE problem with errors drawn from the symmetric Skellam distribution.

\begin{proof}[Proof of Theorem $\ref{lweskthm}$] By Theorem \ref{wtacr}, the $\text{\upshape LWE}(\kappa,\lambda,q,D(\nu))$ problem is hard for $\nu=(\alpha q)^2/(2\pi)>2\kappa/\pi$, if there exists no efficient quantum algorithm approximating the decisional shortest vector problem ($\text{\upshape GapSVP}$) and the shortest independent vectors problem ($\text{\upshape SIVP}$) to within $\tilde{O}(\kappa/\alpha)$ in the worst case. Let $q=q(\kappa)=\text{\upshape\sffamily poly}(\kappa)$, $s=s(\kappa)=\omega(\log(\kappa))$ and $\lambda>3\kappa$. Then for $\Delta=\omega(\log(\kappa))$, Lemma \ref{nonloss}, the pseudo-randomness of Construction \ref{lossysk} and Lemma \ref{lossysklem} provide that Construction \ref{lossysk} gives us a family of $\Delta$-lossy codes for the symmetric Skellam distribution with variance $\mu\geq 4\lambda^2\nu s^2$. As observed in Theorem \ref{lossythm}, this is sufficient for the hardness of the $\text{\upshape LWE}(\kappa,\lambda,q,\text{\upshape Sk}(\mu))$ problem. Setting $\rho=2\alpha\lambda s$ yields $(\rho q)^2>16\lambda^2\kappa s^2$ and the claim follows.
\hfill$\qed$\end{proof}

\noindent By the search-to-decision reduction from \cite{43} we obtain the hardness of the DLWE problem as a corollary.

\subsection{A \mbox{\upshape\sffamily CDP}-Preserving PSA scheme based on DLWE}

\subsubsection{Security of the scheme.}\label{secsubsec}

We can build an instantiation of Theorem \ref{PSATHEOREM} (without correct decryption) based on the $\text{\upshape DLWE}(\kappa,\lambda,q,\chi)$ problem as follows. Set $S=M=\mathbb{Z}_q^\kappa, G=\mathbb{Z}_q$, choose $\textbf{\upshape s}_i\leftarrow \mathcal{U}(\mathbb{Z}_q^\kappa)$ for all $i=1,\ldots,n$ and $\textbf{\upshape s}_0=-\sum_{i=1}^n \textbf{\upshape s}_i$, set $\text{\upshape\sffamily F}_{\textbf{\upshape s}_i}(\textbf{\upshape t})=\langle \textbf{\upshape t},\textbf{\upshape s}_i\rangle +e_i$ (which is a so-called \textit{randomised} weak pseudo-random function as described in \cite{49} and in \cite{44}), where $e_i\leftarrow\chi$ (for the uncompromised users) and let $\varphi$ be the identity function. Therefore
\[\langle \textbf{\upshape t},\textbf{\upshape s}_i\rangle +e_i+d_i=c_{i,\textbf{\upshape t}}\leftarrow\text{\upshape PSAEnc}_{\textbf{\upshape s}_i}(\textbf{\upshape t},d_i)\]
for data value $d_i\in\mathbb{Z}_q$, $i=1,\ldots,n$. The decryption function is defined by 
\[\sum_{i=1}^n d_i+\sum_{i=1}^n e_i=\langle \textbf{\upshape t},\textbf{\upshape s}_0\rangle+\sum_{i=1}^n \text{\upshape\sffamily F}_{\textbf{\upshape s}_i}(\textbf{\upshape t})+d_i=\langle \textbf{\upshape t},\textbf{\upshape s}_0\rangle+\sum_{i=1}^n c_{i,\textbf{\upshape t}}=\text{\upshape PSADec}_{\textbf{\upshape s}_0}(\textbf{\upshape t},c_{1,\textbf{\upshape t}},\ldots,c_{n,\textbf{\upshape t}}).\]
Thus, the decryption is not perfectly correct anymore, but yields a noisy aggregate. Let $\gamma\in(0,1]$ be the a priori known fraction of uncompromised users in the network. Then we can construct the following DLWE-based PSA schemes.

\begin{Exm}\label{discretegaussianexample}
Let $\chi=D(\nu/(\gamma n))$ with parameter $\nu/(\gamma n)=2\kappa/\pi$, then the\linebreak $\text{\upshape DLWE}(\kappa,\lambda,q,\chi)$ problem is hard and the above scheme is secure.
\end{Exm}

\begin{Exm}\label{skellamexample}
Let $\chi=\text{\upshape Sk}(\mu/(\gamma n))$ with variance $\mu/(\gamma n)=4\lambda^2\kappa s^2$, where $\lambda=\lambda(\kappa)=\text{\upshape\sffamily poly}(\kappa)$ with $\lambda>3\kappa$, then the $\text{\upshape DLWE}(\kappa,\lambda,q,\chi)$ problem is hard and the above scheme is secure.
\end{Exm}

\begin{Rem} The original result from $\text{\upshape \cite{42}}$ states that the LWE problem is hard in the set $\mathbb{T}=\mathbb{R}/\mathbb{Z}$ when the noise is distributed according to the \mbox{\em continuous} Gaussian distribution (with a certain bound on the variance) modulo $1$. Although the continuous Gaussian distribution is reproducible as well, it does not seem to fit well for a DLWE-based PSA scheme: For data processing reasons the values would have to be discretised. Therefore the resulting noise would follow a distribution which is not reproducible anymore.\footnote{In \cite{85} it was shown that the sum of $n$ discrete Gaussians each with parameter $\sigma^2$ is statistically close to a discrete Gaussian with parameter $\nu=n\sigma^2$ if $\sigma>\sqrt{n}\eta_\varepsilon(\Lambda)$ for some smoothing parameter $\eta_\varepsilon(\Lambda)$ of the underlying lattice $\Lambda$. However, this approach is less suitable for our purpose if the number of users is large, since the aggregated decryption outcome would have a an error with a variance of order $\nu=\Omega(n^2)$ (in example \ref{skellamexample} the variance is only of order $O(\lambda^2\kappa n)$).} 
\end{Rem}

\subsubsection{Differential privacy of the mechanism.}\label{privsubsec}

The total noise $\sum_{i=1}^n e_i$ in Example \ref{skellamexample} is distributed according to $\text{\upshape Sk}(\mu)$ due to Lemma \ref{sksum}. 
Thus, in contrast to the total noise in Example \ref{discretegaussianexample}, the total noise in Example \ref{skellamexample} preserves the distribution of the single noise and can be used for preserving differential privacy of the correct sum by splitting the task of perturbation among the users.\\
Suppose that adding symmetric Skellam noise with variance $\mu$ preserves $(\epsilon,\delta)$-\mbox{\upshape\sffamily DP}. We define $\mu_{user}=\mu/(\gamma n)$. Since the Skellam distribution is reproducible, the noise addition can be executed in a distributed manner: each (uncompromised) user simply adds (independent) symmetric Skellam noise with variance $\mu_{user}$ to her own value in order to preserve the privacy of the final output.

\subsubsection{Accuracy of the mechanism.}\label{acsubsec}

From Theorem \ref{errorthm} we know that the error of the Skellam mechanism executed in a distributed manner among $\gamma n$ uncompromised users does not exceed $O(S(f)\cdot\log(1/\delta)/\epsilon)$ with high probability. Theorem \ref{PSATHEOREM} indicates that the set $T$ contains all the time-frames where a query can be executed. We identify $|T|=\lambda$, i.e. the number of queries is equal to the number of equations in the instance LWE problem.\footnote{A result from \cite{47} indicates that for an efficient and accurate mechanism this number cannot be substantially larger than $n^2$.} Due to sequential composition\footnote{See for instance Theorem $3$ in \cite{48}.}, in order to preserve $(\epsilon,\delta)$-\mbox{\upshape\sffamily DP} for all $\lambda$ queries together, the executed mechanism must preserve $(\epsilon/\lambda,\delta)$-\mbox{\upshape\sffamily DP} for each query. Therefore the following holds: suppose $\text{\upshape Sk}(\mu^\prime)$-noise is sufficient in order to preserve $(\epsilon,\delta)$-\mbox{\upshape\sffamily DP} for a single query. Then, due to Remark \ref{skdprem}, we must use $\text{\upshape Sk}(\lambda^2\mu^\prime)$-noise in order to preserve $(\epsilon,\delta)$-\mbox{\upshape\sffamily DP} for all $\lambda$ queries. By Theorem \ref{errorthm} the error in each query within $T$ is bounded by $O(\lambda S(f)\cdot\log(1/\delta)/\epsilon)$ which is consistent with the effects of sequential composition.

\subsubsection{Combining Security, Privacy and Accuracy.}

Let $S(f)=\lambda w$ (i.e. $f$ is the statistical analysis over the whole time-period $T$) and at any time $j\in [\lambda]$, let the data of each user come from $\{-w/2,\ldots,w/2\}$. For $\mu=2\cdot (\lambda w/\epsilon)^2\cdot(\log(1/\delta)+\epsilon)$, it follows from the previous discussion and Remark \ref{skdprem} that if every user adds $\text{\upshape Sk}(\mu/(\gamma n))$-noise to her data for every $j=1,\ldots,\lambda$, then this suffices to preserve $(\epsilon,\delta)$-\mbox{\upshape\sffamily DP} for all $\lambda$ sum-queries executed during $T$.\\ 
Furthermore, if for a security parameter $\kappa$ we have that $\mu/(\gamma n)=4\lambda^2\kappa s^2$, then we obtain a secure protocol for sum-queries, where the security is based on prospectively hard lattice problems. As we showed in Section \ref{cdp}, a combination of these two results provides $(\epsilon,\delta)$-\mbox{\upshape\sffamily CDP} for all $\lambda$ sum-queries.

\begin{Cor} Let $\delta>0$. For all $j=1,\ldots,\lambda$, the PSA scheme from Example $\ref{skellamexample}$ preserves $(\epsilon,\delta)$-\mbox{\upshape\sffamily CDP}, where for the largest possible\footnote{Of course it is always possible to decrease $\epsilon$ by simply using a larger magnitude for the noise parameter.} $\epsilon=\epsilon(\kappa)$, it holds that
\[\sqrt{\frac{w^2\cdot\log(1/\delta)}{2\gamma n\kappa s^2}}\leq\epsilon\leq\sqrt{\frac{w^2\cdot\log(1/\delta)}{2\gamma n\kappa s^2}}+\frac{w^2}{2\gamma n\kappa s^2}.\]
\end{Cor}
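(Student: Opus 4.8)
The plan is to chain the results already established and then reduce the claim to an elementary estimate on the positive root of a quadratic. First I would record what Example~\ref{skellamexample} provides: the DLWE-based instantiation of Theorem~\ref{PSATHEOREM} is $\mbox{\upshape\sffamily AO}2$-secure provided $\text{\upshape DLWE}(\kappa,\lambda,q,\text{\upshape Sk}(\mu/(\gamma n)))$ is hard, and by Theorem~\ref{lweskthm} together with the search-to-decision reduction of \cite{43} this holds once $\mu/(\gamma n)=4\lambda^2\kappa s^2$, under the stated worst-case lattice assumptions. The point (Section~\ref{privsubsec}) is that the error $e_i\leftarrow\text{\upshape Sk}(\mu/(\gamma n))$ that each uncompromised user folds into $\mbox{\upshape\sffamily PSAEnc}$ does double duty, as cryptographic masking noise and as the differential-privacy perturbation. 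By reproducibility of the Skellam distribution (Lemma~\ref{sksum}) the errors of the $\gamma n$ honest users aggregate, in each time-step, to $\text{\upshape Sk}(\mu)$, and by Remark~\ref{skdprem} together with the ``Combining Security, Privacy and Accuracy'' discussion (sequential composition over the $\lambda$ time-steps in $T$, whole-period sensitivity $S(f)=\lambda w$), a per-time-step total of $\text{\upshape Sk}(\mu)$ with $\mu\ge 2(\lambda w/\epsilon)^2(\log(1/\delta)+\epsilon)$ suffices to preserve $(\epsilon,\delta)$-\mbox{\upshape\sffamily DP} for all $\lambda$ sum-queries. Passing this perturbation through the $\mbox{\upshape\sffamily AO}2$-secure scheme, Theorem~\ref{cdptheorem} then yields $(\epsilon,\delta)$-\mbox{\upshape\sffamily CDP}.

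Next I would identify the $\epsilon$ of the statement. Security pins the per-time-step noise budget to $\mu=4\lambda^2\kappa s^2\gamma n$, so the regime in which ``the same noise is used for security and for differential privacy'' is exactly the set of $\epsilon$ with $4\lambda^2\kappa s^2\gamma n\ge 2(\lambda w/\epsilon)^2(\log(1/\delta)+\epsilon)$; since the right side is strictly decreasing in $\epsilon$, this set is an interval $[\epsilon(\kappa),\infty)$, and its left endpoint $\epsilon(\kappa)$ --- the largest privacy parameter at which the \mbox{\upshape\sffamily DP}-mandated noise still exactly meets the security-mandated noise --- is the quantity to be bounded. Clearing $\epsilon^2$ and cancelling $2\lambda^2$ turns the equality case into
\[2\gamma n\kappa s^2\,\epsilon^2-w^2\epsilon-w^2\log(1/\delta)=0,\]
and $\epsilon(\kappa)$ is its unique positive root (the other root is negative, since the product of the two roots is $-w^2\log(1/\delta)/(2\gamma n\kappa s^2)<0$).

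Finally I would bound this root by a sign test. Set $g(\epsilon)=2\gamma n\kappa s^2\,\epsilon^2-w^2\epsilon-w^2\log(1/\delta)$, an upward parabola with one negative and one positive root, so that for $\epsilon>0$ we have $\epsilon\le\epsilon(\kappa)$ iff $g(\epsilon)\le 0$. Writing $P=\sqrt{w^2\log(1/\delta)/(2\gamma n\kappa s^2)}$ and $Q=w^2/(2\gamma n\kappa s^2)$, a direct substitution (using $2\gamma n\kappa s^2\,P^2=w^2\log(1/\delta)$ and $2\gamma n\kappa s^2\,Q=w^2$) gives $g(P)=-w^2P<0$ and $g(P+Q)=w^2P>0$, whence $P\le\epsilon(\kappa)\le P+Q$, which is precisely the claimed sandwich. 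For any fixed admissible $\delta$ the failure probability of the Skellam mechanism (the event $\neg Good$) and the $\text{\upshape\sffamily neg}(\kappa)$ slack in the definition of \mbox{\upshape\sffamily CDP} are absorbed into $\delta$, so no extra additive term appears.

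There is no genuine obstacle here; the work is essentially bookkeeping. The one thing to watch is keeping the three noise scales consistent --- the per-user variance $\mu/(\gamma n)$, its single-time-step aggregate $\mu$ over the $\gamma n$ honest users, and the accumulation over the $\lambda$ time-steps through sequential composition against the whole-period sensitivity $\lambda w$ --- since an off-by-$\lambda$ or off-by-$\gamma n$ factor would move the quadratic. I would also note, from the sufficiency form of Theorem~\ref{privthm} together with monotonicity of Skellam noise in its parameter, that for every $\epsilon\le\epsilon(\kappa)$ the security-mandated $\mu$ is at least what $(\epsilon,\delta)$-\mbox{\upshape\sffamily DP} requires, so the privacy guarantee is never weaker than claimed.
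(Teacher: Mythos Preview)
Your proposal is correct and follows essentially the same route as the paper: chain Theorem~\ref{PSATHEOREM}, Theorem~\ref{lweskthm}, Lemma~\ref{sksum}, Remark~\ref{skdprem} and Theorem~\ref{cdptheorem} to reduce to the equality $2(\lambda w/\epsilon)^2(\log(1/\delta)+\epsilon)=4\gamma n\lambda^2\kappa s^2$, then bound the positive root of the resulting quadratic $2\gamma n\kappa s^2\,\epsilon^2-w^2\epsilon-w^2\log(1/\delta)=0$. The only difference is cosmetic: the paper applies the quadratic formula and then estimates $\sqrt{w^4+8w^2\log(1/\delta)\gamma n\kappa s^2}$ via subadditivity of~$\sqrt{\cdot}$, whereas your sign test $g(P)=-w^2P<0<w^2P=g(P+Q)$ locates the root directly without writing it out --- a slightly cleaner end-game, but the same argument.
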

\begin{proof}
Assume that for $\mu=4\gamma n\lambda^2\kappa s^2$, every uncorrupted user in the network adds $\text{\upshape Sk}(\mu/(\gamma n))$-noise to her data for each of the $\lambda$ queries in order to securely encrypt it using the scheme from Example \ref{skellamexample}. Then there exist $\epsilon, \delta$ such that the decryption output preserves $(\epsilon,\delta)$-\mbox{\upshape\sffamily CDP} for all $\lambda$ queries. In order to calculate $\epsilon$, we set
\[2\cdot (w\lambda/\epsilon)^2\cdot(\log(1/\delta)+\epsilon)= 4\gamma n\lambda^2\kappa s^2.\]
Solving for $\epsilon$, we obtain
\[\epsilon=\frac{w^2+\sqrt{w^4+8w^2\cdot\log(1/\delta)\cdot\gamma n\kappa s^2}}{4\gamma n\kappa s^2}.\]
This expression is smaller than
\[\frac{w^2}{4\gamma n\kappa s^2}+\frac{w^2}{4\gamma n\kappa s^2}+\frac{\sqrt{8 w^2\cdot\log(1/\delta)\cdot\gamma n\kappa s^2}}{4\gamma n\kappa s^2}\]
and larger than
\[\frac{\sqrt{8 w^2\cdot\log(1/\delta)\cdot\gamma n\kappa s^2}}{4\gamma n\kappa s^2}.\]
\end{proof}

This indicates that $\epsilon=\epsilon(\kappa)$ depends on $1/\kappa$. Note that this is consistent with the original definition of \mbox{\upshape\sffamily CDP} from \cite{15}. Thus, in addition to a privacy/accuracy trade-off there is also a security/accuracy trade-off. More specifically, depending on $\kappa$ and $n$ we obtain an upper bound on the minimal $(\alpha,\beta)$-accuracy for every single query executed during~$T$:
\begin{align*}
\alpha & = \frac{\lambda w}{\epsilon}\cdot\left(\frac{1}{\gamma}\cdot\left(\log\left(\frac{1}{\delta}\right)+\epsilon\right)+\log\left(\frac{2}{\beta}\right)\right)\\
& \leq \lambda\cdot\sqrt{\frac{2\gamma n\kappa s^2}{\log(1/\delta)}}\cdot\left(\frac{1}{\gamma}\cdot\log\left(\frac{1}{\delta}\right)+\log\left(\frac{2}{\beta}\right)\right)+\frac{\lambda w}{\gamma}\\
& = O(\lambda s\sqrt{\kappa\cdot n}+\lambda w).
\end{align*}

Finally, we are able to prove our main result, Theorem \ref{mainthm}, which follows from the preceding analyses.

\begin{proof}[of Theorem $\ref{mainthm}$] The claim follows from Theorem \ref{cdptheorem} together with Theorem \ref{PSATHEOREM} (instantiated with the efficient constructions in Example \ref{skellamexample}) and from Theorem \ref{privthm} together with Theorem \ref{errorthm}.
\hfill$\qed$\end{proof}

\section{Conclusions}

In this work we continued a line of research opened by the work of Shi et al. \cite{2}. Using the notion of computational differential privacy, we provided a connection between a secure PSA scheme and a mechanism preserving differential privacy by showing a composition theorem saying that a differentially private mechanism preserves \mbox{\upshape\sffamily CDP} if it is executed through a secure PSA scheme. This closes a security reduction chain from key-homomorphic weak PRFs to \mbox{\upshape\sffamily CDP}, which was initiated in \cite{111}. After introducing the Skellam mechanism for differential privacy we constructed the first prospective post-quantum PSA scheme for analyses of large data amounts from large amounts of individuals under differential privacy. The theoretic basis of the scheme is the DLWE assumption with Skellam noise that is used both for security of the scheme and for preserving computational differential privacy.

\addcontentsline{toc}{chapter}{Literatur}
\bibliography{Literatur}

\appendix

\section{The Skellam mechanism}\label{skellamsec}

\subsection{Preliminaries}

As observed before, the distributed noise generation is feasible with a probability distribution function closed under convolution. For this purpose, we recall the Skellam distribution.

\begin{Def}[Skellam Distribution \cite{29}]\label{skellam} Let $\mu_1$, $\mu_2> 0$. A discrete random variable $X$ is drawn according to the Skellam distribution with parameters $\mu_1,\mu_2$ (short: $X\leftarrow\text{\upshape Sk}(\mu_1,\mu_2)$) if it has the following probability distribution function $\psi_{\mu_1,\mu_2}\colon\mathbb{Z}\mapsto\mathbb{R}$:
\[\psi_{\mu_1,\mu_2}(k)=e^{-(\mu_1+\mu_2)}\left(\frac{\mu_1}{\mu_2}\right)^{k/2}I_k(2\sqrt{\mu_1\mu_2}),\]
where $I_k$ is the modified Bessel function of the first kind (see pages $374$--$378$ in \cite{28}).
\end{Def}

A random variable $X\leftarrow\text{\upshape Sk}(\mu_1,\mu_2)$ has variance $\mu_1+\mu_2$ and can be generated as the difference of two random variables drawn according to the Poisson distribution of mean $\mu_1$ and $\mu_2$, respectively (see \cite{29}). Note that the Skellam distribution is not generally symmetric. However, we mainly consider the particular case $\mu_1=\mu_2=\mu/2$ and refer to this symmetric distribution as $\text{Sk}(\mu) = \text{Sk}(\mu/2,\mu/2)$.\\
Suppose that adding symmetric Skellam noise with variance $\mu$ preserves $(\epsilon,\delta)$-\mbox{\upshape\sffamily DP}. Recall that the network is given an a priori known estimate $\gamma$ of the lower bound on the fraction of uncompromised users. We define $\mu_{user}=\mu/(\gamma n)$ and instruct the users to add symmetric Skellam noise with variance $\mu_{user}$ to their own data. If compromised users will not add noise, the total noise will be still sufficient to preserve $(\epsilon,\delta)$-\mbox{\upshape\sffamily DP} by Lemma \ref{sksum}.\\ 
For our analysis, we use the following bound on the ratio of modified Bessel functions of the first kind.

\begin{Lem}[Bound on $I_k(\mu)$ \cite{27}]\label{modbesrat} For real $k$, let $I_k(\mu)$ be the modified Bessel function of the first kind and order $k$. Then
 \[1>\frac{I_k(\mu)}{I_{k-1}(\mu)}>\frac{-k+\sqrt{k^2+\mu^2}}{\mu},\,k\geq 0.\]
\end{Lem}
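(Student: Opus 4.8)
My plan is to derive both inequalities from the standard three--term recurrence for modified Bessel functions of the first kind together with one classical log--concavity fact. Fix $\mu>0$ and set $r_k:=I_k(\mu)/I_{k-1}(\mu)$ for integer $k\ge 1$ (this is the range in which the lemma is actually used, inside the product in the proof of Lemma~\ref{lossysklem}; for negative integer orders $I_{-k}(\mu)=I_k(\mu)$, and all the $I_k(\mu)$ that occur are strictly positive because $\mu>0$). The first ingredient is the recurrence $I_{k-1}(\mu)-I_{k+1}(\mu)=\tfrac{2k}{\mu}I_k(\mu)$, which after division by $I_k(\mu)$ reads $\tfrac{1}{r_k}=\tfrac{2k}{\mu}+r_{k+1}$. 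The second ingredient is the Tur\'an--type inequality $I_k(\mu)^2>I_{k-1}(\mu)\,I_{k+1}(\mu)$, i.e.\ the strict log--concavity of $\nu\mapsto I_\nu(\mu)$, which is classical (see \cite{28}) and is precisely the statement $r_{k+1}<r_k$.

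For the lower bound I would substitute $r_{k+1}<r_k$ into $\tfrac{1}{r_k}=\tfrac{2k}{\mu}+r_{k+1}$, obtaining $\tfrac1{r_k}<\tfrac{2k}{\mu}+r_k$, hence $r_k^2+\tfrac{2k}{\mu}r_k-1>0$; since $r_k>0$ this forces $r_k$ above the positive root of $t^2+\tfrac{2k}{\mu}t-1$, namely $r_k>\tfrac{-k+\sqrt{k^2+\mu^2}}{\mu}$. (Given the recurrence, the lower bound is in fact \emph{equivalent} to $r_{k+1}<r_k$, so nothing is wasted.) For the upper bound $r_k<1$, i.e.\ $I_k(\mu)<I_{k-1}(\mu)$, I would cite the classical strict monotonicity of $\nu\mapsto I_\nu(\mu)$ on $[0,\infty)$ for fixed $\mu>0$ (\cite{28}); for a partial self--contained check one notes from $I_\nu(\mu)=\sum_{m\ge0}(\mu/2)^{2m+\nu}/(m!\,\Gamma(m+\nu+1))$ that $r_k=\tfrac{\mu}{2}\cdot\bigl(\sum_m c_m/(m+k)\bigr)/\bigl(\sum_m c_m\bigr)$ with $c_m=(\mu/2)^{2m}/(m!\,\Gamma(m+k))>0$, i.e.\ $\tfrac{\mu}{2}$ times a weighted average of the numbers $1/(m+k)$, hence $r_k<\tfrac{\mu}{2k}$; this already gives $r_k<1$ when $\mu<2k$, and the large--$\mu$ range is covered by the cited monotonicity (or by the expansion $I_\nu(\mu)\sim e^\mu(2\pi\mu)^{-1/2}\bigl(1-\tfrac{4\nu^2-1}{8\mu}+\cdots\bigr)$, which gives $r_k=1-\tfrac{2k-1}{2\mu}+O(\mu^{-2})<1$).

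An alternative route for the lower bound, which avoids quoting Tur\'an's inequality, is to treat $h(\mu):=I_k(\mu)/I_{k-1}(\mu)$ as a function of $\mu$: the contiguous relations $\tfrac{d}{d\mu}I_k=I_{k-1}-\tfrac{k}{\mu}I_k$ and $\tfrac{d}{d\mu}I_{k-1}=I_k+\tfrac{k-1}{\mu}I_{k-1}$ yield the Riccati equation $h'=1-h^2-\tfrac{2k-1}{\mu}h$ with $h(\mu)\downarrow 0$ as $\mu\downarrow 0$. The function $L(\mu)=\tfrac{-k+\sqrt{k^2+\mu^2}}{\mu}$ is the positive root of $L^2+\tfrac{2k}{\mu}L-1=0$; plugging it into the Riccati right--hand side and using this identity collapses it to $L/\mu$, so $L$ is a strict subsolution as soon as $L'(\mu)<L(\mu)/\mu$, i.e.\ as soon as $\mu\mapsto L(\mu)/\mu$ is decreasing --- and after the substitution $u=\mu^2$ its derivative has numerator $-(\sqrt{k^2+u}-k)^2<0$. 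A comparison/continuation argument, using that $h$ and $L$ agree to leading order $\mu/(2k)$ at the origin while $h$'s correction term is strictly larger, then gives $h>L$ for all $\mu>0$.

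I expect the only genuine obstacle to be that the cleanest arguments rest on two facts that are classical but not entirely trivial to reprove from scratch: the Tur\'an/log--concavity inequality $I_k^2>I_{k-1}I_{k+1}$ for the lower bound, and the monotonicity $I_k(\mu)<I_{k-1}(\mu)$ for the upper bound (both in \cite{28}; and since the whole statement is exactly the content of \cite{27}, citing it outright is legitimate here). One small point worth flagging is the endpoint $k=0$, where $I_{-1}(\mu)=I_1(\mu)$ so that the upper bound $I_0(\mu)<I_1(\mu)$ actually fails; this causes no trouble, since the lemma is applied only for integer $k\ge 1$ in the proof of Lemma~\ref{lossysklem}.
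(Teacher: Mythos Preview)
The paper does not prove this lemma at all: it is stated as a known result with citation~\cite{27} and used as a black box (both here and via Lemma~\ref{turanineq}, itself cited from~\cite{107}). So there is no ``paper's own proof'' to compare against---your proposal goes strictly beyond what the paper does.

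That said, your argument is sound. The derivation of the lower bound from the recurrence $1/r_k=\tfrac{2k}{\mu}+r_{k+1}$ together with $r_{k+1}<r_k$ is clean and correct, and your observation that the lower bound is actually \emph{equivalent} to the Tur\'an inequality (given the recurrence) is a nice structural remark. The Riccati alternative is also valid. Your flag on the endpoint $k=0$ is well taken: with $I_{-1}=I_1$ the ratio $I_0/I_{-1}=I_0/I_1>1$, so the upper bound as literally stated fails there (while the lower bound degenerates to $>1$, which holds); since the paper only ever applies the lemma to integer orders $k\ge 1$ in the product in~\eqref{approxmodbes} and in the proof of Theorem~\ref{privthm}, this is harmless for the paper's purposes, but it is a genuine imprecision in the lemma's statement.
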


Moreover, we will use the Tur\'{a}n-type inequality on the modified Bessel functions from \cite{107}.

\begin{Lem}[Tur\'{a}n-type inequality \cite{107}]\label{turanineq} For $k>-1/2$, let $I_k(\mu)$ be the modified Bessel function of the first kind and order $k$. Then for all $\mu>0$:
\[I_k(\mu)^2>I_{k-1}(\mu)\cdot I_{k+1}(\mu).\]
\end{Lem}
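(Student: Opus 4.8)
The plan is to convert the difference $\Delta_k(\mu):=I_k(\mu)^2-I_{k-1}(\mu)I_{k+1}(\mu)$ into a single integral whose integrand is manifestly positive on $(0,\pi/2)$, so that positivity becomes obvious and, moreover, the hypothesis $k>-1/2$ enters in an explicit and natural way. The tool I would use is the classical product (multiplication) formula for modified Bessel functions of the first kind: for real orders $a,b$ with $a+b>-1$,
\[I_a(\mu)\,I_b(\mu)=\frac{2}{\pi}\int_0^{\pi/2} I_{a+b}(2\mu\cos\theta)\cos\bigl((a-b)\theta\bigr)\,d\theta\]
(see, e.g., \cite{28}). Applying this once with $(a,b)=(k,k)$ and once with $(a,b)=(k-1,k+1)$ is legitimate because in both cases $a+b=2k$, and $2k>-1$ is exactly the assumption $k>-1/2$. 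This yields $I_k(\mu)^2=\tfrac{2}{\pi}\int_0^{\pi/2}I_{2k}(2\mu\cos\theta)\,d\theta$ and $I_{k-1}(\mu)I_{k+1}(\mu)=\tfrac{2}{\pi}\int_0^{\pi/2}I_{2k}(2\mu\cos\theta)\cos(2\theta)\,d\theta$.

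Subtracting the two identities and using $1-\cos(2\theta)=2\sin^2\theta$ gives
\[\Delta_k(\mu)=\frac{4}{\pi}\int_0^{\pi/2} I_{2k}(2\mu\cos\theta)\,\sin^2\theta\,d\theta .\]
For $\mu>0$ and $\theta\in(0,\pi/2)$ we have $2\mu\cos\theta>0$; since $2k>-1$, every coefficient of the power series $I_{2k}(y)=\sum_{m\geq0}(y/2)^{2m+2k}/(m!\,\Gamma(m+2k+1))$ is strictly positive, so $I_{2k}(2\mu\cos\theta)>0$; and $\sin^2\theta>0$. Hence the integrand is strictly positive on all of $(0,\pi/2)$, and therefore $\Delta_k(\mu)>0$, which is the claim. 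Note that the single hypothesis $k>-1/2$ is used precisely twice, and in both places it is exactly what is needed: to validate the product formula and to guarantee the positivity of $I_{2k}$ on the positive half-line.

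The only genuinely non-elementary ingredient, and hence the main obstacle, is recalling and citing the product formula; everything else is bookkeeping. For a reader who prefers to work only with tools already in the paper, there is an alternative for the range $k\geq0$ based on Lemma~\ref{modbesrat}: the recurrences $I_{k-1}(\mu)=\tfrac{k}{\mu}I_k(\mu)+I_k'(\mu)$ and $I_{k+1}(\mu)=-\tfrac{k}{\mu}I_k(\mu)+I_k'(\mu)$ give $I_{k-1}(\mu)I_{k+1}(\mu)=I_k'(\mu)^2-\tfrac{k^2}{\mu^2}I_k(\mu)^2$, so $\Delta_k(\mu)=(1+k^2/\mu^2)I_k(\mu)^2-I_k'(\mu)^2$, which is positive iff $|\mu I_k'(\mu)/I_k(\mu)|<\sqrt{\mu^2+k^2}$. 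Writing $\mu I_k'(\mu)/I_k(\mu)=-k+\mu I_{k-1}(\mu)/I_k(\mu)=k+\mu I_{k+1}(\mu)/I_k(\mu)$ and using $I_k,I_{k-1},I_{k+1}>0$, the lower bound $>-\sqrt{\mu^2+k^2}$ is immediate, while Lemma~\ref{modbesrat} (after rationalising $\mu^2/(-k+\sqrt{k^2+\mu^2})=k+\sqrt{k^2+\mu^2}$) yields $\mu I_{k-1}(\mu)/I_k(\mu)<k+\sqrt{k^2+\mu^2}$, hence the upper bound. The endpoint $k=0$ reduces to the trivial $I_0(\mu)>I_1(\mu)$; the remaining sliver $k\in(-1/2,0)$ is most cleanly handled by the product-formula argument above, so I would present that one as the main proof.
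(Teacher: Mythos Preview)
The paper does not supply a proof of this lemma; it is simply quoted from the literature (reference \cite{107}). Your argument via the product formula $I_a(\mu)I_b(\mu)=\tfrac{2}{\pi}\int_0^{\pi/2}I_{a+b}(2\mu\cos\theta)\cos((a-b)\theta)\,d\theta$ is correct and is in fact one of the classical routes to this Tur\'{a}n-type inequality; the hypothesis $k>-1/2$ is used exactly where it must be, and the resulting integrand is manifestly positive. So there is nothing to compare against, and your main proof stands on its own.

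One small caveat on the secondary argument: Lemma~\ref{modbesrat} as stated in the paper is for $k\geq 0$, but note that the upper bound $I_k/I_{k-1}<1$ actually fails at $k=0$ (since $I_{-1}=I_1<I_0$), so that lemma is really meant for $k>0$. This does not affect your alternative derivation for $k>0$, and your separate handling of $k=0$ via $I_0>I_1$ is fine; I would just not describe the $k=0$ case as an instance of Lemma~\ref{modbesrat}. Since the product-formula proof already covers the full range $k>-1/2$ uniformly, the cleanest presentation is to give only that argument.
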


For the privacy analysis of the Skellam mechanism, we need a tail bound on the symmetric Skellam distribution.

\begin{Lem}[Tail bound on the Skellam distribution]\label{SKELLAMBOUND} Let $X\leftarrow\text{\upshape Sk}(\mu)$ and let $\sigma>0$. Then, for all $\tau\geq -\sigma\mu$, 
\[\Pr[X>\sigma\mu + \tau] \leq e^{-\mu\left(1-\sqrt{1+\sigma^2}+\sigma\ln(\sigma + \sqrt{1+\sigma^2})\right)-\tau\ln(\sigma + \sqrt{1+\sigma^2})}.\]
\end{Lem}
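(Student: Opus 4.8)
The plan is to use a Chernoff-type argument: bound the upper tail of $X$ by Markov's inequality applied to $e^{tX}$, and then choose $t>0$ using the explicit moment generating function of the symmetric Skellam distribution, settling for a clean near-optimal $t$ rather than the exact optimiser.

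First I would compute the moment generating function of $X\leftarrow\text{\upshape Sk}(\mu)=\text{\upshape Sk}(\mu/2,\mu/2)$. Writing $X=P_1-P_2$ with $P_1,P_2$ independent Poisson variables of mean $\mu/2$ each, and recalling that a Poisson variable of mean $\lambda$ has moment generating function $t\mapsto e^{\lambda(e^t-1)}$, independence gives
\[\E[e^{tX}]=e^{(\mu/2)(e^t-1)}\cdot e^{(\mu/2)(e^{-t}-1)}=e^{\mu(\cosh(t)-1)}\]
for all $t\in\mathbb{R}$; in particular it is finite everywhere, so no integrability concerns arise. (Alternatively, this follows directly from the Bessel generating identity $\sum_{k\in\mathbb{Z}}I_k(\mu)z^k=e^{(\mu/2)(z+z^{-1})}$ evaluated at $z=e^t$, since $\psi_\mu(k)=e^{-\mu}I_k(\mu)$.)

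Next, for any $t>0$ and any real $a$, Markov's inequality yields
\[\Pr[X>a]\leq\Pr\bigl[e^{tX}\geq e^{ta}\bigr]\leq e^{-ta}\,\E[e^{tX}]=e^{-ta+\mu(\cosh(t)-1)}.\]
I would then set $a=\sigma\mu+\tau$ and choose $t=\ln\bigl(\sigma+\sqrt{1+\sigma^2}\bigr)=\operatorname{arcsinh}(\sigma)$, which is positive since $\sigma>0$, so that $\cosh(t)=\sqrt{1+\sigma^2}$. Substituting and rearranging, the exponent $-ta+\mu(\cosh(t)-1)$ becomes exactly
\[-\mu\bigl(1-\sqrt{1+\sigma^2}+\sigma\ln(\sigma+\sqrt{1+\sigma^2})\bigr)-\tau\ln(\sigma+\sqrt{1+\sigma^2}),\]
which is the claimed bound.

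The argument is essentially routine, so there is no genuine obstacle; two bookkeeping points deserve a remark. First, one must recognise the Skellam moment generating function in the compact form $e^{\mu(\cosh t-1)}$. Second, the choice $t=\operatorname{arcsinh}(\sigma)$ is the exact Chernoff optimiser only in the case $\tau=0$ (for general $\tau$ the true optimiser is $\operatorname{arcsinh}((\sigma\mu+\tau)/\mu)$); keeping this particular $t$ for all $\tau$ is what produces the clean closed form, and it remains a valid—if slightly loose—upper bound. The hypothesis $\tau\geq-\sigma\mu$, equivalently $\sigma\mu+\tau\geq0$, is not needed for the Chernoff step itself but ensures that the bounded event is a genuine upper-tail event, as used in the proofs of Theorems~\ref{privthm} and~\ref{errorthm}.
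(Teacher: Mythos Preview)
Your proof is correct and follows essentially the same route as the paper: apply Markov's inequality to $e^{tX}$, use the Skellam moment generating function $\E[e^{tX}]=e^{\mu(\cosh t-1)}$, and fix $t=\ln(\sigma+\sqrt{1+\sigma^2})=\operatorname{arcsinh}(\sigma)$ so that $\cosh t=\sqrt{1+\sigma^2}$. Your derivation of the MGF via the Poisson-difference representation and your remarks on the (non-)optimality of this $t$ and the role of the hypothesis $\tau\geq-\sigma\mu$ are nice additions, but the underlying argument is the same Chernoff bound.
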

\begin{proof}
We use standard techniques from probability theory. Applying Markov's inequality, for any $t>0$,
\begin{align*}\Pr[X>\sigma\mu + \tau] & = \Pr[e^{tX}>e^{t(\sigma\mu + \tau)}]\\
 & \leq \frac{\E[e^{tX}]}{e^{t(\sigma\mu + \tau)}}.
\end{align*}
As shown in \cite{30}, for $X\leftarrow\text{Sk}(\mu)$, the moment generating function of $X$ is 
\[\E[e^{tX}]=e^{-\mu(1-\cosh(t))},\]
where $\cosh(t)= (e^t + e^{-t})/2$. Hence, we have
\[\Pr[X>\sigma\mu + \tau]\leq e^{-\mu(1-\cosh(t)+t\sigma)-t\tau}.\]
Fix $t=\ln(\sigma + \sqrt{1+\sigma^2})$. In order to conclude the proof, we observe that $\cosh(\ln(\sigma + \sqrt{1+\sigma^2}))=\sqrt{1+\sigma^2}$.
\hfill$\qed$\end{proof}

\noindent One can easily verify that, for $\sigma>0$,
\[1-\sqrt{1+\sigma^2}+\sigma\ln(\sigma + \sqrt{1+\sigma^2})> 0.\]

\subsection{Analysis of the Skellam mechanism}

In this section, we prove the bound on the variance $\mu$ of the symmetric Skellam distribution as stated in Theorem \ref{privthm} that is needed in order to preserve $(\epsilon,\delta)$-differential privacy and we compute the error that is thus produced.

\subsubsection{Privacy analysis.}


\begin{proof}[of Theorem $\ref{privthm}$] Let $D_0, D_1\in\mathcal{D}^n$ be adjacent databases with $|f(D_0)-f(D_1)|\leq S(f)$. The largest ratio between $\Pr[\mathcal{A}_{Sk}(D_0)=R]$ and $\Pr[\mathcal{A}_{Sk}(D_1)=R]$ is reached when $k:=R-f(D_0)=R-f(D_1)-S(f)\geq 0$, where $R$ is any possible output of $\mathcal{A}_{Sk}$.
Then, by Lemma \ref{modbesrat}, for all possible outputs $R$ of $\mathcal{A}_{Sk}$:
\begin{align*} \frac{\Pr[\mathcal{A}_{Sk}(D_0)=R]}{\Pr[\mathcal{A}_{Sk}(D_1)=R]} & = \frac{\Pr[Y=k]}{\Pr[Y=k+S(f)]}\\
 & = \prod_{j=1}^{S(f)}\frac{\Pr[Y=k+j-1]}{\Pr[Y=k+j]}\\
 & < \prod_{j=1}^{S(f)}\frac{\mu}{-(k+j)+\sqrt{(k+j)^2+\mu^2}}\\
 & \leq e^\epsilon.\numberthis\label{skmechdp}
\end{align*}
Inequality \eqref{skmechdp} holds if $k\leq\sinh(\epsilon/S(f))\cdot\mu-S(f)$, since it implies $k\leq\sinh(\epsilon/S(f))\cdot\mu-j$ and therefore
\[\frac{\mu}{-(k+j)+\sqrt{(k+j)^2+\mu^2}}\leq e^{\epsilon/S(f)}\] 
for all $j=1,\ldots,S(f)$. Applying Lemma \ref{SKELLAMBOUND} with $\sigma=\sinh(\epsilon/S(f))$ and $\tau=-S(f)$, we get
\[\Pr[Y>\sinh(\epsilon/S(f))\cdot\mu-S(f)]\leq e^{-\mu\cdot(1-\cosh(\epsilon/S(f))+(\epsilon/S(f))\cdot\sinh(\epsilon/S(f)))+\epsilon}\]
and this expression is set to be smaller or equal than $\delta$. This inequality is satisfied if
\[\mu\geq\frac{\log(1/\delta)+\epsilon}{1-\cosh(\epsilon/S(f))+(\epsilon/S(f))\cdot\sinh(\epsilon/S(f))}.\]
\hfill$\qed$\end{proof}

\subsubsection{Accuracy analysis.}

\begin{proof}[of Theorem $\ref{errorthm}$] Let 
\[\mu=\frac{\log(1/\delta)+\epsilon}{1-\cosh(\epsilon/S(f))+(\epsilon/S(f))\cdot\sinh(\epsilon/S(f))}\] 
be the bound on the variance for the Skellam mechanism provided in Theorem \ref{privthm} and let the random variable $Y$ denote the total noise induced by the mechanism. As in the proof of Lemma \ref{SKELLAMBOUND} and since $Y\leftarrow\text{\upshape Sk}(\mu)$, for $\alpha^\prime>0$,
\begin{align*} \Pr[|Y|>\alpha^\prime] & = 2\cdot\Pr[Y>\alpha^\prime]\\
& \leq 2\cdot e^{-\mu\cdot(1-\cosh(\epsilon/S(f)))-(\epsilon/S(f))\cdot\alpha^\prime}
\end{align*}
and this expression is set to be equal to $\beta$. Solving this equality for $\alpha^\prime$ yields
\begin{align*} \alpha^\prime & = \frac{S(f)}{\epsilon}\cdot\left(\left(\cosh\left(\frac{\epsilon}{S(f)}\right)-1\right)\cdot\mu+\log\left(\frac{2}{\beta}\right)\right)\\
 & = \frac{S(f)}{\epsilon}\cdot\left(\left(\log\left(\frac{1}{\delta}\right)+\epsilon\right)\cdot\frac{\cosh(\epsilon/S(f))-1}{1-\cosh(\epsilon/S(f))+(\epsilon/S(f))\cdot\sinh(\epsilon/S(f))}\right.\\
 &\phantom{=}\left.+\log\left(\frac{2}{\beta}\right)\right)\\
 & \leq \frac{S(f)}{\epsilon}\cdot\left(\log\left(\frac{1}{\delta}\right)+\epsilon+\log\left(\frac{2}{\beta}\right)\right)\\
 & = \alpha.
\end{align*}
\hfill$\qed$\end{proof}

\noindent For the distributed noise generation, each single user adds symmetric Skellam noise with variance $\mu_{user}=\mu/(\gamma n)$ to her data. The worst case for accuracy is when all $n$ users add noise, thus the total noise $Y$ is a symmetric Skellam variable with variance $\mu/\gamma$ and the accuracy becomes
\[\alpha=\frac{S(f)}{\epsilon}\cdot\left(\frac{1}{\gamma}\cdot\left(\log\left(\frac{1}{\delta}\right)+\epsilon\right)+\log\left(\frac{2}{\beta}\right)\right),\]
proving Theorem \ref{errorthm}.

\section{Additional details for Section \ref{HARDNESS}}\label{hardnessoflwe}

In this section we will provide more notions and facts about the LWE problem and more details for the proof of Theorem \ref{lweskthm}.

\subsection{Learning with Errors.}

In most of the results about lattice-based cryptography the LWE problem is considered with errors sampled according to a discrete Gaussian distribution.

\begin{Def}[Discrete Gaussian distribution \cite{42}]
Let $q$ be an integer and let $\Phi_s$ denote the normal distribution with variance $s^2/(2\pi)$. Let $\overline{\Psi}_\alpha$ denote the discretised Gaussian distribution with variance $(\alpha q)^2/(2\pi)$, i.e. $\overline{\Psi}_\alpha$ is sampled by taking a sample from $\Phi_{\alpha q}$ and performing a randomised rounding (see \cite{44}). Let $D(\nu)$ be the discretised Gaussian distribution with variance $\nu$, i.e. $D(\nu)=\overline{\Psi}_{\sqrt{2\pi\nu}/q}$.
\end{Def}

We consider a $\lambda$-bounded LWE problem, where the adversary is given $\lambda(\kappa) = \text{\upshape\sffamily poly}(\kappa)$ samples (which we can write conveniently in matrix-form). As observed in \cite{39}, this consideration poses no restrictions to most cryptographic applications of the LWE problem, since they require only an a priori fixed number of samples. In our application to differential privacy (see Section \ref{lwepsasec}) we identify $\lambda$ with the number of queries in a pre-defined time-series.\\

\noindent\textit{Problem} $1$. $\boldsymbol\lambda$\textbf{-bounded LWE Search Problem, Average-Case Version.} Let $\kappa$ be a security parameter, let $\lambda = \lambda(\kappa) = \text{\upshape\sffamily poly}(\kappa)$ and $q = q(\kappa)\geq 2$ be integers and let $\chi$ be a distribution on $\mathbb{Z}_q$. Let $\textbf{x}\leftarrow\mathcal{U}(\mathbb{Z}_q^\kappa)$, let $\textbf{A}\leftarrow\mathcal{U}(\mathbb{Z}_q^{\lambda\times\kappa})$ and let $\textbf{e}\leftarrow\chi^\lambda$. The goal of the $\text{\upshape LWE}(\kappa,\lambda,q,\chi)$ problem is, given $(\textbf{A}, \textbf{Ax} + \textbf{e})$, to find $\textbf{x}$.\\

\noindent\textit{Problem} $2$. $\boldsymbol\lambda$\textbf{-bounded LWE Distinguishing Problem.} Let $\kappa$ be a security parameter, let $\lambda = \lambda(\kappa) = \text{\upshape\sffamily poly}(\kappa)$ and $q = q(\kappa)\geq 2$ be integers and let $\chi$ be a distribution on $\mathbb{Z}_q$. Let $\textbf{x}\leftarrow\mathcal{U}(\mathbb{Z}_q^\kappa)$, let $\textbf{A}\leftarrow\mathcal{U}(\mathbb{Z}_q^{\lambda\times\kappa})$ and let $\textbf{e}\leftarrow\chi^\lambda$. The goal of the $\text{\upshape DLWE}(\kappa,\lambda,q,\chi)$ problem is, given $(\textbf{A}, \textbf{y})$, to decide whether $\textbf{y}=\textbf{Ax} + \textbf{e}$ or $\textbf{y}=\textbf{u}$ with $\textbf{u}\leftarrow\mathcal{U}(\mathbb{Z}_q^\lambda)$.\\

\subsection{Basic Facts}\label{fac}

\subsubsection{Facts about the used Distributions.}

We need to find a tail bound for the sum of discrete Gaussian variables such that the tail probability is negligible.

\begin{Lem}[Bound for the $L_1$-norm of a discrete Gaussian vector]\label{disgausbound}
Let $\kappa$ be a complexity parameter and let $\nu=\nu(\kappa)=\text{\upshape\sffamily poly}(\kappa)$. Let $\zeta=\zeta(\kappa)=\text{\upshape\sffamily poly}(\kappa)$. Let $g_1,\ldots,g_\zeta\leftarrow D(\nu)$ be independent discrete Gaussian variables. Then \[\Pr\left[\sum_{i=1}^\zeta |g_i|>\zeta\sqrt{\nu}\right]\leq\text{\upshape\sffamily neg}(\kappa).\]
\end{Lem}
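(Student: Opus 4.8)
The plan is to reduce everything to a single Chernoff bound on a sum of independent half‑normal random variables, being careful because the threshold $\zeta\sqrt{\nu}$ is only a constant factor above the mean. First I would unwind the definition of $D(\nu)$: since a sample $g_i\leftarrow D(\nu)$ is produced by randomized rounding of a continuous Gaussian $X_i\sim\mathcal{N}(0,\nu)$, we have $|g_i|\le|X_i|+1$, so $\sum_{i=1}^{\zeta}|g_i|\le\sum_{i=1}^{\zeta}|X_i|+\zeta$ and it suffices to prove $\Pr[\sum_{i=1}^{\zeta}|X_i|>\zeta(\sqrt{\nu}-1)]=\text{\upshape\sffamily neg}(\kappa)$, where the $X_i$ are i.i.d.\ centered Gaussians of variance $\nu$.

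Next I would invoke the sharp sub‑Gaussian estimate for the half‑normal moment generating function: for every $t\ge 0$, $\E[\exp(t|X_i|)]\le\exp\!\big(t\sqrt{2\nu/\pi}+t^2\nu/2\big)$ (this follows from the elementary identity $\E[e^{t|X_i|}]=2e^{t^2\nu/2}\Phi(t\sqrt{\nu})$ together with monotonicity of the inverse Mills ratio; alternatively, $x\mapsto|x|$ is $1$‑Lipschitz, so Gaussian concentration gives a $\nu$‑sub‑Gaussian bound on $|X_i|-\E|X_i|$ and $\E|X_i|=\sqrt{2\nu/\pi}$). The decisive point is that $\sqrt{2/\pi}<1$, so the mean of $|X_i|$ sits strictly below $\sqrt{\nu}$ and leaves a positive margin. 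Multiplying over $i$ and applying Markov's inequality to $e^{t\sum_i|X_i|}$ yields, for all $t\ge 0$,
\[\Pr\!\left[\sum_{i=1}^{\zeta}|X_i|>\zeta(\sqrt{\nu}-1)\right]\le\exp\!\left(-t\zeta\Big(\big(1-\sqrt{2/\pi}\big)\sqrt{\nu}-1\Big)+\tfrac12 t^2\zeta\nu\right).\]
Writing $c:=(1-\sqrt{2/\pi})\sqrt{\nu}-1$, which is positive for all sufficiently large $\kappa$ since $\nu=\text{\upshape\sffamily poly}(\kappa)$ grows, I would optimize by taking $t=c/\nu$, giving the bound $\exp(-\zeta c^2/(2\nu))$. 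Because $c=\Theta(\sqrt{\nu})$, the ratio $c^2/\nu$ is bounded below by a positive constant, so this is $\exp(-\Omega(\zeta))$.

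Finally, since $\zeta=\text{\upshape\sffamily poly}(\kappa)=\omega(\log\kappa)$, the quantity $\exp(-\Omega(\zeta))$ is negligible in $\kappa$, which proves the lemma. The main obstacle is exactly the tightness of the statement: the threshold $\zeta\sqrt{\nu}$ exceeds $\E[\sum_i|g_i|]\approx\zeta\sqrt{2\nu/\pi}$ only by a constant factor, so crude arguments fail — a Cauchy–Schwarz reduction to $\sum_i g_i^2\le\zeta\nu$ is useless since that threshold equals the mean, and a lossy moment‑generating‑function bound such as $\E[e^{t|g_i|}]\le 2\,\E[e^{tg_i}]$ loses the margin through the factor $2$. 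One genuinely needs the sharp half‑normal bound that tracks both the mean $\sqrt{2\nu/\pi}$ and the variance proxy $\nu$, and one needs $\nu$ (hence $\kappa$) large enough that the rounding correction $+\zeta$ and the $\sqrt{2/\pi}$ factor together still leave $c>0$; the remaining computations are routine.
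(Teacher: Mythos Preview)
Your argument is correct and shares the decisive observation with the paper's proof: the half-normal mean $\sqrt{2\nu/\pi}$ sits strictly below $\sqrt{\nu}$, and this constant-factor margin is exactly what makes the bound work. The executions differ, however. The paper first truncates, showing $|g_i|\le\sqrt{s\nu}$ with overwhelming probability for an auxiliary $s=\omega(\log\kappa)$ with $s^2=o(\zeta)$, and then applies Hoeffding's inequality to the bounded variables, obtaining a final bound of $\exp(-2s)$. You instead bound the half-normal moment generating function directly via Gaussian Lipschitz concentration and run a Chernoff argument, which avoids the truncation step and the extra parameter $s$ altogether and yields the sharper tail $\exp(-\Omega(\zeta))$. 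Your handling of the discretisation via $|g_i|\le|X_i|+1$ is also more explicit than the paper's, which simply asserts that the $|g_i|$ are half-normal. Both proofs tacitly need the parameters to grow with $\kappa$ (you need $\nu$ large enough that $c>0$ and $\zeta=\omega(\log\kappa)$; the paper needs $s^2=o(\zeta)$), assumptions that are satisfied in every use of the lemma later in the paper.
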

\begin{proof}
Let $s=s(\kappa)=\omega(\log(\kappa))$ with $s^2=o(\zeta)$. Since $g_1,\ldots,g_\zeta$ are discrete Gaussian variables, we can bound them by a continuous Gaussian variable $X$ with variance~$\nu$:
\[\Pr[|g_i|>\sqrt{s\nu}]\leq\Pr[|X|\geq\sqrt{s\nu}]\leq\exp(-s/2)=\text{\upshape\sffamily neg}(\kappa)\]
for all $i=1,\ldots\zeta$. The random variables $|g_i|$ are distributed according to a half-normal distribution (see \cite{116}) and have mean $\sqrt{2\nu/\pi}$. By the Hoeffding bound we obtain for all $t\geq 0$:
\[\Pr\left[\sum_{i=1}^\zeta |g_i|> t+\zeta\cdot\sqrt{2\nu/\pi}\right]\leq\exp\left(-\frac{2t^2}{\zeta\cdot\nu\cdot s}\right)\]
with probability $1-\text{\upshape\sffamily neg}(\kappa)$. Choosing $t=s\sqrt{\zeta\nu}$ yields the claim.
\hfill$\qed$\end{proof}
Moreover, we need a proper lower bound on the symmetric Skellam distribution that holds with probability exponentially close to $0$.

\begin{Lem}[Bound on the Skellam distribution]\label{skellbound}
Let $\kappa$ be a security parameter, let $s=s(\kappa)=\omega(\log(\kappa))$ and let $\mu=\mu(\kappa)=\text{\upshape\sffamily poly}(\kappa)$ with $\mu>s>0$. Let $X\leftarrow\text{\upshape Sk}(\mu)$. Then \[\Pr[X>s\sqrt{\mu}]\leq\text{\upshape\sffamily neg}(\kappa).\]
\end{Lem}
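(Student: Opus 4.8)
The plan is to obtain this tail bound directly from Lemma~\ref{SKELLAMBOUND} by choosing its free parameters $\sigma,\tau$ so that the threshold $\sigma\mu+\tau$ coincides with $s\sqrt{\mu}$. The natural choice is $\sigma=s/\sqrt{\mu}$ and $\tau=0$; the hypothesis $\tau\ge-\sigma\mu$ of Lemma~\ref{SKELLAMBOUND} then holds trivially, and the lemma yields
\[
\Pr[X>s\sqrt{\mu}]\ \le\ \exp\!\big(-\mu\,f(s/\sqrt{\mu})\big),\qquad f(\sigma):=1-\sqrt{1+\sigma^2}+\sigma\ln\!\big(\sigma+\sqrt{1+\sigma^2}\big).
\]
It then suffices to show that $\mu\,f(s/\sqrt{\mu})=\omega(\log\kappa)$, so that the right-hand side is $\text{\upshape\sffamily neg}(\kappa)$.

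The heart of the argument is a uniform lower bound on $f$. First I would note that $f$ is strictly increasing on $[0,\infty)$ with $f(0)=0$: a short computation (the two $\sigma/\sqrt{1+\sigma^2}$ terms cancel) gives $f'(\sigma)=\ln(\sigma+\sqrt{1+\sigma^2})>0$ for $\sigma>0$. Second, a Taylor expansion at the origin yields $f(\sigma)=\sigma^2/2+O(\sigma^4)$, so $f(\sigma)/\sigma^2\to 1/2$ as $\sigma\to 0^+$; since $f(\sigma)/\sigma^2$ is continuous and strictly positive on $(0,1]$ (positivity of $f$ being the elementary fact recorded just after Lemma~\ref{SKELLAMBOUND}), it is bounded below there by some constant $c>0$. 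Taking $c\le f(1)$ as well, the two observations combine to $f(\sigma)\ge c\,\min\{\sigma^2,1\}$ for every $\sigma>0$.

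Substituting $\sigma=s/\sqrt{\mu}$ gives $\mu\,f(s/\sqrt{\mu})\ge c\,\mu\min\{s^2/\mu,1\}=c\,\min\{s^2,\mu\}$. By hypothesis $\mu>s$, and $s^2\ge s$ for all sufficiently large $\kappa$ since $s=\omega(\log\kappa)\to\infty$; hence $\min\{s^2,\mu\}\ge s$, and therefore $\mu\,f(s/\sqrt{\mu})\ge c\,s=\omega(\log\kappa)$. Consequently $\Pr[X>s\sqrt{\mu}]\le\exp(-\omega(\log\kappa))=\text{\upshape\sffamily neg}(\kappa)$, which is the claim.

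I expect the only mildly delicate point to be establishing the uniform bound $f(\sigma)\ge c\min\{\sigma^2,1\}$, which implicitly handles both the regime $s\le\sqrt{\mu}$ (small $\sigma$, where $f$ is quadratic, so $\mu f(\sigma)\approx s^2/2$) and the regime $s>\sqrt{\mu}$ (large $\sigma$, where one merely needs $f$ bounded below by a constant, so $\mu f(\sigma)=\Omega(\mu)$). Once the monotonicity of $f$ and the limit $\lim_{\sigma\to 0^+}f(\sigma)/\sigma^2=1/2$ are in place, the rest is routine bookkeeping.
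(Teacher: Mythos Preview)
Your proposal is correct, and in fact cleaner than the paper's own argument. Both proofs ultimately rest on the Chernoff-type bound for the Skellam distribution, but they diverge in execution. The paper does not invoke Lemma~\ref{SKELLAMBOUND}; instead it restarts the moment-generating-function argument, replaces $\mu$ by $s$ in the exponent to obtain $e^{-s(1-\cosh t)-ts^{3/2}}$, and then picks the specific value $t=\operatorname{arsinh}(1/\sqrt{s})$, concluding via an analysis of the function $s(\sqrt{1+1/s}-\sqrt{s}\,\operatorname{arsinh}(1/\sqrt{s}))$. Your route is more direct: you feed $\sigma=s/\sqrt{\mu}$, $\tau=0$ into the already-optimised bound of Lemma~\ref{SKELLAMBOUND} and reduce everything to the uniform estimate $f(\sigma)\ge c\min\{\sigma^2,1\}$, whose proof via $f'(\sigma)=\operatorname{arsinh}(\sigma)$ and the Taylor expansion $f(\sigma)=\sigma^2/2+O(\sigma^4)$ is entirely elementary. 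The payoff is that your argument transparently separates the two regimes ($\sigma$ small versus $\sigma\ge 1$) and uses the hypothesis $\mu>s$ only at the very end, whereas the paper's substitution $\mu\to s$ intertwines the two ingredients earlier. Both yield an exponent of order $-\Omega(s)$ and hence $\text{\upshape\sffamily neg}(\kappa)$.
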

\begin{proof}
The proof is similar to the proof of Lemma \ref{SKELLAMBOUND}. Applying the Laplace transform and the Markov's inequality we obtain for any $t>0$,
\[\Pr[X>s\sqrt{\mu}]=\Pr[e^{tX}>e^{ts\sqrt{\mu}}]\leq\frac{\E[e^{tX}]}{e^{ts\sqrt{\mu}}}.\]
As for the proof of Lemma \ref{SKELLAMBOUND}, we use the moment generating function of $X\leftarrow\text{Sk}(\mu)$, which is 
\[\E[e^{tX}]=e^{-\mu(1-\cosh(t))},\]
where $\cosh(t)= (e^t + e^{-t})/2$. Hence, we have
\[\Pr[X>s\sqrt{\mu}]\leq e^{-\mu(1-\cosh(t))-ts\sqrt{\mu}}< e^{-s(1-\cosh(t))-ts^{3/2}}.\]
Fix $t=\operatorname{arsinh}(1/\sqrt{s})$. Then
\begin{align*} \Pr[X>s\sqrt{\mu}]< & e^{-s(1-\sqrt{1+1/s})-s^{3/2}\operatorname{arsinh}(1/\sqrt{s})}\\
= & e^{-s}\cdot e^{s\cdot(\sqrt{1+1/s}-\sqrt{s}\operatorname{arsinh}(1/\sqrt{s}))}\\
< & e^{-s}\cdot e^{2/3}\\
= & \text{\upshape\sffamily neg}(\kappa).
\end{align*}
To see the last inequality, observe that the function 
\[f(s)=s\cdot(\sqrt{1+1/s}-\sqrt{s}\operatorname{arsinh}(1/\sqrt{s}))\] 
is monotonically increasing and its limit is $2/3$.
\hfill$\qed$\end{proof}

\subsubsection{Facts about Learning with Errors.}

Regev \cite{42} established worst-to-average case connections between conjecturally hard lattice problems and the\linebreak $\text{\upshape LWE}(\kappa,\lambda,q,D(\nu))$ problem.

\begin{Thm}[Worst-to-Average Case \cite{42}]\label{wtacr} Let $\kappa$ be a security parameter and let $q = q(\kappa)$ be a modulus, let $\alpha=\alpha(\kappa)\in(0,1)$ be such that $\alpha q> 2\sqrt{\kappa}$. If there exists a probabilistic polynomial-time algorithm solving the\linebreak $\text{\upshape LWE}(\kappa,\lambda,q,D((\alpha q)^2/(2\pi)))$ problem with non-negligible probability, 
then there exists an efficient quantum algorithm that approximates the decisional shortest vector problem ($\text{\upshape GAPSVP}$) and the shortest independent vectors problem ($\text{\upshape SIVP}$) to within $\tilde{O}(\kappa/\alpha)$ in the worst case.
\end{Thm}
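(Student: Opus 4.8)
The statement is Regev's quantum worst-case-to-average-case reduction, and the plan is to reconstruct the structure of his proof. Assume a ppt algorithm that solves $\text{LWE}(\kappa,\lambda,q,D((\alpha q)^2/(2\pi)))$ with non-negligible probability; after the usual success-probability amplification, treat it as an oracle. The goal is to build an efficient quantum algorithm that, on an arbitrary input lattice $\Lambda\subseteq\mathbb{R}^\kappa$, outputs $\text{poly}(\kappa)$ independent samples from the discrete Gaussian $D_{\Lambda,r}$ for a width $r$ only a $\tilde O(\kappa/\alpha)$ factor above $\lambda_\kappa(\Lambda)$. Invoking the Micciancio--Regev smoothing bound $\eta_\varepsilon(\Lambda)=\tilde O(1)\cdot\lambda_\kappa(\Lambda)$ and Banaszczyk's tail bound on Gaussian vector lengths, $\kappa$ such samples are then linearly independent and of Euclidean length $\tilde O(\kappa/\alpha)\cdot\lambda_\kappa(\Lambda)$ with overwhelming probability, which solves $\text{SIVP}$ to within $\tilde O(\kappa/\alpha)$; the same Gaussian-sampling ability also yields a decision procedure for $\text{GapSVP}$ at the same factor, as in \cite{42}.

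The core is an \emph{iterative step}: an efficient quantum procedure that, given $\text{poly}(\kappa)$ samples from $D_{\Lambda^*,r}$ for some $r$ above roughly $\sqrt 2\,q\,\eta_\varepsilon(\Lambda)$, together with the $\text{LWE}$ oracle, produces a sample from $D_{\Lambda^*,r'}$ with $r'=r\sqrt\kappa/(\alpha q)$. The hypothesis $\alpha q>2\sqrt\kappa$ makes this a shrink by at least a constant factor, so starting from $r_0=2^{2\kappa}\lambda_\kappa(\Lambda)$ --- at which scale $D_{\Lambda,r_0}$ is statistically close to the classical and trivial procedure of rounding a continuous Gaussian to $\Lambda$ along an LLL-reduced basis --- and iterating $\text{poly}(\kappa)$ times, the width drops to the near-smoothing scale $\Theta(\sqrt\kappa\,\eta_\varepsilon(\Lambda)/\alpha)=\tilde O(\sqrt\kappa/\alpha)\cdot\lambda_\kappa(\Lambda)$ needed above. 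Each iterative step splits into a classical half and a quantum half. In the \textbf{classical half}, the dual Gaussian samples plus the oracle give a bounded-distance-decoding solver for $\Lambda$ at radius $d=\alpha q/(\sqrt 2\,r)$: reducing a sample $\mathbf w\leftarrow D_{\Lambda^*,r}$ modulo $q\Lambda^*$ yields a nearly uniform $\mathbf a\in\mathbb{Z}_q^\kappa$ (this is why $r$ must exceed the $\sqrt 2\,q\,\eta_\varepsilon(\Lambda)$ threshold), and pairing $\mathbf a$ with the inner product of $\mathbf w$ against a decoding target $\mathbf t=\mathbf v+\mathbf e$ (with $\mathbf v\in\Lambda$, $\|\mathbf e\|\le d$) reproduces an $\text{LWE}$ sample whose noise is exactly $D((\alpha q)^2/(2\pi))$-distributed; feeding $\lambda=\text{poly}(\kappa)$ such samples to the oracle recovers the coset of $\mathbf t$ and hence $\mathbf v$.

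The \textbf{quantum half} is where quantumness is indispensable, and I expect it to be the main obstacle. Here one would: (i) prepare a superposition over points of $\Lambda$ with Gaussian amplitudes, each entangled with a Gaussian ``error cloud'' of width $\approx d$ in $\mathbb{R}^\kappa/\Lambda$; (ii) apply the bounded-distance-decoding solver from the classical half reversibly to disentangle the lattice point, leaving a Gaussian-weighted superposition over the cloud; (iii) apply a quantum Fourier transform over a sufficiently fine discretisation of $\mathbb{R}^\kappa/\Lambda$, which --- because the Fourier transform of a Gaussian is a Gaussian and, by Poisson summation, the relevant sums over cosets are essentially flat once $r\ge\eta_\varepsilon(\Lambda)$ --- concentrates the state on $\Lambda^*$ with amplitudes proportional to $\rho_{r'}$; and (iv) measure to obtain a fresh sample from $D_{\Lambda^*,r'}$. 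The technically heaviest part, and the crux of the whole argument, is controlling every approximation in this chain simultaneously: that the discretised Fourier transform is within negligible distance of the continuous one, that the error-cloud width is compatible with the decoding radius $d$, that the smoothing condition flattens the coset sums, and that the final measurement outcome is within negligible statistical distance of a true $D_{\Lambda^*,r'}$ sample --- this is precisely where the smoothing-parameter hypotheses and Banaszczyk-type bounds are invoked. Finally, I would note that $\overline{\Psi}_\alpha=D((\alpha q)^2/(2\pi))$ in the statement is exactly the rounded continuous Gaussian that the classical half produces, so no separate rounding argument is needed, and that the reduction queries only $\lambda=\text{poly}(\kappa)$ samples per oracle call, consistent with the $\lambda$-bounded formulation used throughout this paper.
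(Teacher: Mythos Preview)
The paper does not prove this theorem at all: it is stated in the appendix under ``Facts about Learning with Errors'' as a result imported from Regev~\cite{42} and is used purely as a black box (it feeds into the proof of Theorem~\ref{lweskthm} via the pseudo-randomness of Construction~\ref{lossysk}). There is therefore no proof in the paper to compare your proposal against.

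That said, your sketch is a reasonable high-level outline of Regev's original argument --- the iterative step combining a classical BDD solver (built from dual-lattice Gaussian samples and the LWE oracle) with a quantum Fourier step to shrink the Gaussian width, bootstrapped from a trivially sampleable starting width. One minor inconsistency: you first say the goal is to sample from $D_{\Lambda,r}$ and then run the iteration on $D_{\Lambda^*,r}$; in Regev's presentation the iteration produces samples on the lattice itself (equivalently, on the dual of the dual), so you should keep the roles of $\Lambda$ and $\Lambda^*$ straight throughout. But since the paper simply cites the result, no proof was expected here.
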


We use the search-to-decision reduction from \cite{43} basing the hardness of Problem $2$ on the hardness of Problem $1$ which works for any error distribution $\chi$ and is sample preserving.

\begin{Thm}[Search-to-Decision \cite{43}]\label{lwestod} Let $\kappa$ be a security parameter, $q = q(\kappa) = \text{\upshape\sffamily poly}(\kappa)$ a prime modulus and let $\chi$ be any distribution on $\mathbb{Z}_q$. Assume there exists a probabilistic polynomial-time distinguisher that solves the $\text{\upshape DLWE}(\kappa,\lambda,q,\chi)$ problem with non-negligible success-probability, then there exists a probabilistic polynomial-time adversary that solves the  $\text{\upshape LWE}(\kappa,\lambda,q,\chi)$ problem with non-negligible success-probability.
\end{Thm}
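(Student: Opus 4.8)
The plan is to exhibit a polynomial-time, sample-preserving reduction that turns a DLWE distinguisher $\mathcal{D}$ for $\lambda$ samples with non-negligible advantage $\nu(\kappa)$ into an LWE solver that, given a single instance $(\mathbf{A},\mathbf{b})$ with $\mathbf{b}=\mathbf{Ax}+\mathbf{e}$ and the same $\lambda$ samples, outputs $\mathbf{x}$ with non-negligible probability. First I would upgrade $\mathcal{D}$ from an ``average-secret'' distinguisher to one that has advantage $\nu(\kappa)$ for \emph{every} fixed secret, via the standard random self-reduction: to test a known secret $\mathbf{s}$ against a challenge $(\mathbf{A},\mathbf{b})$, sample $\mathbf{t}\leftarrow\mathcal{U}(\mathbb{Z}_q^\kappa)$ and feed $(\mathbf{A},\mathbf{b}+\mathbf{At})$ to $\mathcal{D}$; this sends the LWE distribution with secret $\mathbf{s}$ to the LWE distribution with the uniformly random secret $\mathbf{s}+\mathbf{t}$ and sends the uniform distribution to itself, so $\mathcal{D}$'s advantage is unchanged, and the transformation maps $\lambda$ samples to $\lambda$ samples.

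The core of the reduction is a gadget that uses the worst-case-secret distinguisher to learn one coordinate of $\mathbf{x}$ from the search instance. To probe whether $x_1$ equals a candidate $g\in\mathbb{Z}_q$, map each sample $(\mathbf{a}_i,b_i)\mapsto(\mathbf{a}_i+r_i\mathbf{u}_1,\,b_i+r_i g)$ with $\mathbf{u}_1=(1,0,\dots,0)$ and fresh $r_i\leftarrow\mathcal{U}(\mathbb{Z}_q)$. The new first coordinate of $\mathbf{a}_i$ is still uniform, and the second component equals $\langle\mathbf{a}_i+r_i\mathbf{u}_1,\mathbf{x}\rangle+e_i+r_i(g-x_1)$; if $g=x_1$ this is exactly a fresh LWE sample, whereas if $g\neq x_1$ then, because $q$ is prime, $g-x_1$ is invertible and $(\mathbf{a}_i+r_i\mathbf{u}_1,\,r_i(g-x_1))$ is jointly uniform, so the transformed batch is uniformly random. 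Running $\mathcal{D}$ on the batch therefore distinguishes $g=x_1$ from $g\neq x_1$. Iterating over all $q=\mathrm{poly}(\kappa)$ candidates recovers $x_1$; then I would ``peel off'' the first coordinate by replacing $b_i$ with $b_i-a_{i,1}x_1$ and dropping the first column of $\mathbf{A}$, obtaining a $\lambda$-sample instance for the secret $(x_2,\dots,x_\kappa)$ with the same error vector, and recurse. After $\kappa$ rounds the full $\mathbf{x}$ is obtained using $O(\kappa q)=\mathrm{poly}(\kappa)$ distinguisher calls, each on $\lambda$ samples, so the reduction is polynomial-time and sample-preserving.

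The hard part is confidence amplification under a fixed sample budget: a single evaluation of $\mathcal{D}$ decides a query $(i,g)$ correctly only with probability $\tfrac{1}{2}+\Theta(\nu)$, and the naive remedy of repeating with fresh re-randomisations does \emph{not} produce statistically independent trials, since all re-randomised batches share the same underlying $\mathbf{a}_i$ and $e_i$; hence one cannot Chernoff-amplify without drawing more than $\lambda$ samples. This is precisely where the sample-preserving analysis of Micciancio and Mol~\cite{43} is required: exploiting the primality of $q$, it replaces per-query majority voting by a list-decoding/hybrid argument that extracts a polynomial-size list of candidate secrets from few distinguisher evaluations on the fixed $\lambda$-sample instance, after which each candidate $\mathbf{x}$ is verified directly against $(\mathbf{A},\mathbf{b})$ (the correct secret is the unique one for which $\mathbf{b}-\mathbf{Ax}$ is $\chi$-distributed rather than uniform, which is statistically checkable from $\lambda$ samples). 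I would import this final amplification step verbatim from~\cite{43}; the remaining ingredients are the routine randomisations described above.
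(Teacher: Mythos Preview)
The paper does not prove this theorem at all. Theorem~\ref{lwestod} is stated in the appendix under ``Facts about Learning with Errors'' purely as a citation of Micciancio and Mol~\cite{43}, with no accompanying proof or sketch; the paper treats it as a black box imported from the literature. So there is no ``paper's own proof'' to compare your proposal against.

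As for the content of your sketch: the coordinate-by-coordinate recovery you outline is essentially Regev's original worst-case search-to-decision reduction, not the Micciancio--Mol one. You correctly identify that Regev's reduction is \emph{not} sample-preserving (amplifying a noisy distinguisher to reliable per-coordinate answers ordinarily requires fresh samples), and you propose to patch this by importing the amplification step from~\cite{43}. But that understates how different the Micciancio--Mol argument is: it does not proceed by peeling off coordinates one at a time with a re-randomised distinguisher. Rather, it uses a Goldreich--Levin-style predicate-learning argument to show that any decisional distinguisher for $(\mathbf{A},\mathbf{Ax}+\mathbf{e})$ versus uniform can be converted, sample-preservingly, into a predictor for inner products $\langle\mathbf{r},\mathbf{x}\rangle$ with random $\mathbf{r}$, after which Goldreich--Levin list-decoding recovers a short list containing $\mathbf{x}$. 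Your description of the ``final amplification step'' as a bolt-on to the Regev coordinate method is therefore a mischaracterisation; the sample-preserving reduction has a genuinely different architecture.
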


Finally, we provide a matrix version of Problem $2$. The hardness of this version can be shown by using a hybrid argument as pointed out in \cite{39}.

\begin{Lem}[Matrix version of LWE]\label{lwematver}
Let $\kappa$ be a security parameter, $\lambda=\lambda(\kappa)= \text{\upshape\sffamily poly}(\kappa)$, $\kappa^\prime=\kappa^\prime(\kappa)=\text{\upshape\sffamily poly}(\kappa)$. Assume that the $\text{\upshape DLWE}(\kappa,\lambda,q,\chi)$ problem is hard. Then $(\textbf{\upshape A},\textbf{\upshape AX}+\textbf{\upshape E})$ is pseudo-random, where $\textbf{\upshape A}\leftarrow\mathcal{U}(\mathbb{Z}_q^{\lambda\times\kappa}), \textbf{\upshape X}\leftarrow\mathcal{U}(\mathbb{Z}_q^{\kappa\times\kappa^\prime})$ and $\textbf{\upshape E}\leftarrow\chi^{\lambda\times\kappa^\prime}$.
\end{Lem}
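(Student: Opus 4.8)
The plan is to use a standard hybrid argument over the $\kappa^\prime$ columns of $\textbf{\upshape X}$ and $\textbf{\upshape E}$, reducing column by column to the single-secret $\text{\upshape DLWE}(\kappa,\lambda,q,\chi)$ problem. The starting observation is that $(\textbf{\upshape A},\textbf{\upshape AX}+\textbf{\upshape E})$ is just the common matrix $\textbf{\upshape A}$ together with the $\kappa^\prime$ vectors $\textbf{\upshape A}\textbf{\upshape x}_\ell+\textbf{\upshape e}_\ell$, $\ell=1,\ldots,\kappa^\prime$, where $\textbf{\upshape x}_\ell$ is the $\ell$-th column of $\textbf{\upshape X}$ and $\textbf{\upshape e}_\ell\leftarrow\chi^\lambda$ the $\ell$-th column of $\textbf{\upshape E}$; i.e. it is nothing but $\kappa^\prime$ \emph{independent} $\lambda$-bounded LWE samples that happen to share the same uniform matrix $\textbf{\upshape A}$.

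For $k=0,1,\ldots,\kappa^\prime$ I would define the hybrid distribution $H_k$ as $(\textbf{\upshape A},\textbf{\upshape U}_1\|\cdots\|\textbf{\upshape U}_k\|(\textbf{\upshape A}\textbf{\upshape x}_{k+1}+\textbf{\upshape e}_{k+1})\|\cdots\|(\textbf{\upshape A}\textbf{\upshape x}_{\kappa^\prime}+\textbf{\upshape e}_{\kappa^\prime}))$, i.e. the first $k$ columns of the right-hand block are replaced by fresh uniform vectors $\textbf{\upshape U}_j\leftarrow\mathcal{U}(\mathbb{Z}_q^\lambda)$, while the remaining columns are genuine LWE samples. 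Then $H_0$ is exactly $(\textbf{\upshape A},\textbf{\upshape AX}+\textbf{\upshape E})$ and $H_{\kappa^\prime}$ is $(\textbf{\upshape A},\textbf{\upshape U})$ with $\textbf{\upshape U}\leftarrow\mathcal{U}(\mathbb{Z}_q^{\lambda\times\kappa^\prime})$, the uniform distribution we want to reach. Since the columns of $\textbf{\upshape X}$ (resp. $\textbf{\upshape E}$) are independent, these hybrids genuinely interpolate between the two target distributions.

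The core step is to show $H_{k-1}\approx_c H_k$ for every $k\in[\kappa^\prime]$ by a reduction to $\text{\upshape DLWE}(\kappa,\lambda,q,\chi)$. Given a challenge $(\textbf{\upshape A},\textbf{\upshape y})$ (with $\textbf{\upshape y}$ either $\textbf{\upshape A}\textbf{\upshape x}+\textbf{\upshape e}$ or uniform), the reduction samples $k-1$ fresh uniform vectors $\textbf{\upshape U}_1,\ldots,\textbf{\upshape U}_{k-1}\leftarrow\mathcal{U}(\mathbb{Z}_q^\lambda)$, samples $\kappa^\prime-k$ fresh pairs $(\textbf{\upshape x}_\ell,\textbf{\upshape e}_\ell)\leftarrow\mathcal{U}(\mathbb{Z}_q^\kappa)\times\chi^\lambda$ and forms $\textbf{\upshape A}\textbf{\upshape x}_\ell+\textbf{\upshape e}_\ell$ using the \emph{given} matrix $\textbf{\upshape A}$, and outputs $(\textbf{\upshape A},\textbf{\upshape U}_1\|\cdots\|\textbf{\upshape U}_{k-1}\|\textbf{\upshape y}\|(\textbf{\upshape A}\textbf{\upshape x}_{k+1}+\textbf{\upshape e}_{k+1})\|\cdots\|(\textbf{\upshape A}\textbf{\upshape x}_{\kappa^\prime}+\textbf{\upshape e}_{\kappa^\prime}))$ to the purported distinguisher. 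If $\textbf{\upshape y}$ is a real LWE sample this is distributed exactly as $H_{k-1}$, and if $\textbf{\upshape y}$ is uniform it is distributed exactly as $H_k$; hence any ppt algorithm distinguishing $H_{k-1}$ from $H_k$ with advantage $\eta_k$ yields a $\text{\upshape DLWE}$ distinguisher with advantage $\eta_k$, which must be $\text{\upshape\sffamily neg}(\kappa)$ by hypothesis. By the triangle inequality the advantage of any ppt distinguisher between $H_0$ and $H_{\kappa^\prime}$ is at most $\sum_{k=1}^{\kappa^\prime}\eta_k\le\kappa^\prime\cdot\text{\upshape\sffamily neg}(\kappa)=\text{\upshape\sffamily neg}(\kappa)$, since $\kappa^\prime=\text{\upshape\sffamily poly}(\kappa)$, which is exactly the claimed pseudo-randomness.

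The only point that needs care — and the "main obstacle", such as it is — is making sure the reduction can manufacture the non-boundary LWE columns on its own. This is fine precisely because the $\lambda$-bounded problem hands the reduction the full matrix $\textbf{\upshape A}$ in the clear, so it can compute $\textbf{\upshape A}\textbf{\upshape x}_\ell+\textbf{\upshape e}_\ell$ for freshly chosen secrets without ever seeing the challenge secret, and because all columns share the same $\textbf{\upshape A}$, consistency across the hybrid is automatic; one should state explicitly that a single DLWE sample already contains the whole $\textbf{\upshape A}$ so that reusing it is legitimate.
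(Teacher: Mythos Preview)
Your proposal is correct and is precisely the standard hybrid argument the paper has in mind: the paper does not give its own proof of this lemma but merely states that ``the hardness of this version can be shown by using a hybrid argument as pointed out in \cite{39}'', and your column-by-column hybrid with the reduction that reuses the public matrix $\textbf{\upshape A}$ to generate the remaining LWE columns is exactly that argument spelled out.
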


\subsubsection{Facts about Lossy Codes.}

We will use the fact that the existence of a lossy code (Definition \ref{lossydef}) for an error distribution implies the hardness of the associated decoding problems. This means that solving the LWE problem is hard even though with overwhelming probability the secret is information-theoretically unique. The result was shown in \cite{39}.

\begin{proof}[of Theorem \ref{lossythm}]
Due to the non-lossiness of $\mathcal{U}(\mathbb{Z}_q^{\lambda\times\kappa})$ for $\chi$, instances of $\text{\upshape LWE}(\kappa,\lambda,q,\chi)$ have a unique solution with probability $1-\text{\upshape\sffamily neg}(\kappa)$. Now, let $\mathcal{T}$ be a ppt adversary solving the $\text{\upshape LWE}(\kappa,\lambda,q,\chi)$ problem with more than negligible probability $\sigma$. Using $\mathcal{T}$, we will construct a ppt distinguisher $\mathcal{D}_{\mbox{\scriptsize LWE}}$ distinguishing $\mathcal{U}(\mathbb{Z}_q^{\lambda\times\kappa})$ and $\mathcal{C}_{\kappa,\lambda,q}$ with more than negligible advantage.\\
Let $\textbf{\upshape A}\in\mathbb{Z}^{\lambda\times\kappa}$ be the input of $\mathcal{D}_{\mbox{\scriptsize LWE}}$. It must decide whether $\textbf{\upshape A}\leftarrow\mathcal{U}(\mathbb{Z}_q^{\lambda\times\kappa})$ or $\textbf{\upshape A}\leftarrow\mathcal{C}_{\kappa,\lambda,q}$. Therefore, $\mathcal{D}_{\mbox{\scriptsize LWE}}$ samples $\tilde{\textbf{\upshape x}}\leftarrow\mathcal{U}(\mathbb{Z}^\kappa)$ and $\tilde{\textbf{\upshape e}}\leftarrow\chi^\lambda$. It runs $\mathcal{T}$ on input $(\textbf{\upshape A},\textbf{\upshape A}\tilde{\textbf{\upshape x}}+\tilde{\textbf{\upshape e}})$. Then $\mathcal{T}$ outputs some $\textbf{\upshape x}\in\mathbb{Z}^\kappa$. If $\textbf{\upshape x}=\tilde{\textbf{\upshape x}}$, then $\mathcal{D}_{\mbox{\scriptsize LWE}}$ outputs $1$, otherwise it outputs $0$.\\
If $\textbf{\upshape A}\leftarrow\mathcal{U}(\mathbb{Z}_q^{\lambda\times\kappa})$, then $\tilde{\textbf{\upshape x}}$ is unique and then $\textbf{\upshape x}=\tilde{\textbf{\upshape x}}$ with probability $\sigma$. Therefore 
\[\Pr[\mathcal{D}_{\mbox{\scriptsize LWE}}(\textbf{\upshape A})=1\,|\,\textbf{\upshape A}\leftarrow\mathcal{U}(\mathbb{Z}_q^{\lambda\times\kappa})]=\sigma.\] 
If $\textbf{\upshape A}\leftarrow\mathcal{C}_{\kappa,\lambda,q}$, then $\mathcal{T}$ outputs the Bayes-optimal hypothesis which will be the correct value with probability \[\max_{\boldsymbol\xi\in\mathbb{Z}_q^\kappa}\left\{\Pr[\textbf{x}=\boldsymbol\xi\,|\,\textbf{Ax} + \textbf{e}=\textbf{A}\tilde{\textbf{x}} + \tilde{\textbf{e}}]\right\}=2^{-H_{\infty}(\textbf{\upshape x}\,|\,f_{\textbf{\upshape A},\textbf{\upshape e}}(\textbf{\upshape x})=f_{\textbf{\upshape A},\tilde{\textbf{\upshape e}}}(\tilde{\textbf{\upshape x}}))}\leq 2^{-\Delta},\] 
with $f_{\textbf{\upshape B},\textbf{\upshape b}}(\textbf{\upshape y})=\textbf{\upshape B}\textbf{\upshape y} + \textbf{\upshape b}$. This holds with probability $1-\text{\upshape\sffamily neg}(\kappa)$ over the choice of $(\textbf{\upshape A},\tilde{\textbf{\upshape x}},\tilde{\textbf{\upshape e}})$. This probability is negligible in $\kappa$, since $\Delta=\omega(\log(\kappa))$. Therefore \[\Pr[\mathcal{D}_{\mbox{\scriptsize LWE}}(\textbf{\upshape A})=1\,|\,\textbf{\upshape A}\leftarrow\mathcal{C}_{\kappa,\lambda,q}]=\text{\upshape\sffamily neg}(\kappa)\] 
and in conclusion $\mathcal{D}_{\mbox{\scriptsize LWE}}$ distinguishes $\mathcal{U}(\mathbb{Z}_q^{\lambda\times\kappa})$ and $\mathcal{C}_{\kappa,\lambda,q}$ with probability at least $\sigma-\text{\upshape\sffamily neg}(\kappa)$, which is more than negligible.
\hfill$\qed$\end{proof}

\subsection{Additional Proofs for the Hardness Result}\label{hardn}

We provide the proofs of the supporting Lemma \ref{ginsteadofa}, Lemma \ref{techlem}, Lemma \ref{smallnormvec} and Lemma \ref{maximizing}.\\

\begin{proof}[of Lemma $\ref{ginsteadofa}$]
For all $\textbf{\upshape x}\in\mathbb{Z}_q^{\kappa/2}$, set $\textbf{\upshape x}^\prime=(-(\textbf{\upshape Tx})^{\text{\scriptsize tr}}||\textbf{\upshape x}^{\text{\scriptsize tr}})^{\text{\scriptsize tr}}$. Then 
\begin{align*} \textbf{\upshape Ax}^\prime = & (\textbf{\upshape A}^\prime||(\textbf{\upshape A}^\prime\textbf{\upshape T}+\textbf{\upshape G}))\cdot\textbf{\upshape x}^\prime\\
 = & -\textbf{\upshape A}^\prime\textbf{\upshape Tx}+(\textbf{\upshape A}^\prime\textbf{\upshape T}+\textbf{\upshape G})\cdot\textbf{\upshape x}\\
 = & -\textbf{\upshape A}^\prime\textbf{\upshape Tx}+\textbf{\upshape A}^\prime\textbf{\upshape Tx}+\textbf{\upshape Gx}\\
 = & \textbf{\upshape Gx}.
\end{align*}
\hfill$\qed$\end{proof}

\begin{proof}[of Lemma $\ref{techlem}$]
Let $f(C)=(-C+\sqrt{C^2+1})\exp(C)$. Then $f(C)$ is monotonically increasing and $f(0)=1\cdot(-0+\sqrt{0+1})=1$.
\hfill$\qed$\end{proof}

\begin{proof}[of Lemma $\ref{smallnormvec}$] The claims of Lemma \ref{smallnormvec} follow from Lemma \ref{disgausbound}.
\hfill$\qed$\end{proof}

\begin{proof}[of Lemma $\ref{maximizing}$] By assumption, we have
\begin{equation}\label{maximality1}\Pr_{\textbf{\upshape e}}[\textbf{\upshape e}=\textbf{\upshape u}]=\max_{\boldsymbol\xi\in\mathbb{Z}_q^\kappa}\left\{\Pr_{\textbf{\upshape e}}[\textbf{\upshape e}=\textbf{\upshape A}\boldsymbol\xi+\tilde{\textbf{\upshape e}}]\right\}.
\end{equation}
Let $||\textbf{\upshape u}||_1=C$ and $\tilde{\textbf{\upshape u}}=(\lceil C/\lambda\rceil,\lceil C/\lambda\rceil,\ldots,\lfloor C/\lambda\rfloor,\lfloor C/\lambda\rfloor)^{\text{\scriptsize tr}}$, i.e. $\tilde{\textbf{\upshape u}}$ is maximally balanced with $||\tilde{\textbf{\upshape u}}||_1=C$ (ceiling $C/\lambda$ for the first components of $\tilde{\textbf{\upshape u}}$ and flooring for the last ones). First, we show that under all vectors with $L_1$-norm $C$, the vector $\tilde{\textbf{\upshape u}}$ has the maximal probability weight, i.e. we show that
\begin{equation}\label{maximality2} \Pr_{\textbf{\upshape e}}[\textbf{\upshape e}=\tilde{\textbf{\upshape u}}]=\max_{\textbf{\upshape v}\in\mathbb{Z}_q^\lambda\text{ \scriptsize s.t. } ||\textbf{\upshape v}||_1=C}\left\{\Pr_{\textbf{\upshape e}}[\textbf{\upshape e}=\textbf{\upshape v}]\right\}.
\end{equation}
The difference between the largest and the smallest component in $\tilde{\textbf{\upshape u}}$ is at most $1$ (it is $0$ iff $C$ is a multiple of $\lambda$). Let $\textbf{\upshape w}$ be any less balanced vector than $\tilde{\textbf{\upshape u}}$ with $||\textbf{\upshape w}||_1=C$, i.e. let the difference between the largest and the smallest component in $\textbf{\upshape w}$ be at least $2$. Then there exist indices $i_1,i_2\in[\lambda]$, such that $w_{i_2}-w_{i_1}\geq 2$. We construct a vector $\tilde{\textbf{\upshape w}}$ with the following components: \[\tilde{w}_i=\begin{cases} w_i+1 & \text{ if } i=i_1\\ w_i-1 & \text{ if } i=i_2\\ w_i & \text{ else.}\end{cases}\]
Then, due to Lemma \ref{turanineq}, we have that $\Pr_{\textbf{\upshape e}}[\textbf{\upshape e}=\tilde{\textbf{\upshape w}}]>\Pr_{\textbf{\upshape e}}[\textbf{\upshape e}=\textbf{\upshape w}]$. If we iterate this argument until we consider a maximally balanced vector (i.e. a vector with a difference of at most $1$ between its largest and its smallest component), we obtain 
\[\Pr_{\textbf{\upshape e}}[\textbf{\upshape e}=\tilde{\textbf{\upshape u}}]>\Pr_{\textbf{\upshape e}}[\textbf{\upshape e}=\textbf{\upshape w}].\]
This implies Equation \eqref{maximality2}.\\
Now assume that $C>\lambda s\sqrt{\mu}$. Then, by the previous claim, we have that $\tilde{u}_j\geq\lfloor s\sqrt{\mu}\rfloor$ for each component $\tilde{u}_j$ of $\tilde{\textbf{\upshape u}}$, $j=1,\ldots,\lambda$. Since by Lemma \ref{skellbound}, a Skellam variable is bounded by $\lfloor s\sqrt{\mu}\rfloor$ with overwhelming probability, we have 
\[\Pr_{\textbf{\upshape e}}[\textbf{\upshape e}=\tilde{\textbf{\upshape e}}]>\Pr_{\textbf{\upshape e}}[\textbf{\upshape e}=\tilde{\textbf{\upshape u}}]\stackrel{\eqref{maximality2}}{\geq}\Pr_{\textbf{\upshape e}}[\textbf{\upshape e}=\textbf{\upshape u}]\] 
with overwhelming probability over the choice of $\tilde{\textbf{\upshape e}}$, which is a contradiction to Equation \eqref{maximality1}.
\hfill$\qed$\end{proof}

\end{document}